\DeclareMathOperator*{\E}{\mathbb{E}}
\let\Pr\relax
\DeclareMathOperator*{\Pr}{\mathbb{P}}
\newcommand{\eps}{\varepsilon}
\newcommand{\inprod}[1]{\left\langle #1 \right\rangle}
\newcommand{\R}{\mathbb{R}}
\newcommand{\sr}{\tilde{r}}
\newcommand{\nr}{\tilde{nr}}
\newcommand{\rank}{r}
\newcommand{\Tr}{\mathop{tr}}
\newtheorem{theorem}{Theorem}
\newtheorem{remark}{Remark}
\newtheorem{lemma}{Lemma}
\newtheorem{corollary}{Corollary}
\newtheorem{definition}{Definition}
\newcommand{\proofbelow}{3pt}
\newcommand{\afterproof}{\hfill $\blacksquare$ \par \vspace{\proofbelow}}
\renewenvironment{proof}{\noindent\textbf{Proof.}\,}{\afterproof}
\newcommand{\EquationName}[1]{\label{eq:#1}}
\newcommand{\LemmaName}[1]{\label{lem:#1}}
\newcommand{\CorollaryName}[1]{\label{cor:#1}}
\newcommand{\SectionName}[1]{\label{sec:#1}}
\newcommand{\TheoremName}[1]{\label{thm:#1}}
\newcommand{\RemarkName}[1]{\label{rem:#1}}
\newcommand{\Equation}[1]{Eq.\:\eqref{eq:#1}}
\newcommand{\Lemma}[1]{Lemma~\ref{lem:#1}}
\newcommand{\Corollary}[1]{Corollary~\ref{cor:#1}}
\newcommand{\Section}[1]{Section~\ref{sec:#1}}
\newcommand{\Theorem}[1]{Theorem~\ref{thm:#1}}
\newcommand{\Remark}[1]{Remark~\ref{rem:#1}}
\author{Michael B.\ Cohen\thanks{MIT. \texttt{micohen@mit.edu}. Supported by Akamai Presidential Fellowship and NSF grant CCF-1111109.}\and Jelani Nelson\thanks{Harvard. \texttt{minilek@seas.harvard.edu}. Supported by NSF grant IIS-1447471 and CAREER CCF-1350670, ONR grant N00014-14-1-0632 and Young Investigator N00014-15-1-2388, and a Google Faculty Research Award. }\and David P.\ Woodruff\thanks{IBM Almaden. \texttt{dpwoodru@us.ibm.com}.
Supported by XDATA program  of the Defense Advanced Research Projects Agency (DARPA), administered through Air Force Research Laboratory FA8750-12-C-0323.}}
\title{Optimal approximate matrix product in terms of stable rank}
\date{}
\begin{document}

\setcounter{page}{0}

\maketitle

\thispagestyle{empty}

\begin{abstract}
We give two different characterizations of the type of dimensionality-reducing map $\Pi$ that can be used for spectral error approximate matrix multiplication (AMM). Both imply a random data-oblivious $\Pi$ with $m = O(\sr/\eps^2)$ rows suffices, where $\sr$ is the maximum {\em stable rank}, i.e.\ squared ratio of Frobenius and operator norms, of the matrices being multiplied. This answers the main open question of \cite{MagenZ11,KyrillidisVZ14}, and is optimal for any random oblivious map. 

Both characterizations apply to a general class of random $\Pi$, and one even to deterministic $\Pi$. Recall an $(\eps,\delta,d)$-{\em oblivious subspace embedding (OSE)} distribution $\mathcal{D}$ over matrices $\Pi\in\R^{m\times n}$ is such that for any $d$-dimensional linear subspace $E$ of $\R^n$, $\Pr(\|(\Pi U)^T (\Pi U) - I\| > \eps) < \delta$, where the columns of $U$ form an orthonormal basis for $E$. In one characterization, we show if this tail bound was established via the moment method, then to obtain AMM it suffices that $\mathcal{D}$ be an $(\eps, \delta, 2\sr)$-OSE. That is, we show being an OSE for dimension (i.e.\ {\em rank}) $k$ implies {\em black box} AMM for matrices of {\em stable rank} $k$. Once this is shown, our main result is then just a simple corollary of the fact that subgaussian maps with $m = \Omega((d+\log(1/\delta))/\eps^2)$ rows are $(\eps, \delta, d)$-OSE's. Also, for all known OSE's, the best analyses indeed are via the moment method (or tools such as matrix Chernoff, which themselves also imply moment bounds).  Thus our theorem can be applied to a much more general class of sketching matrices than just the subgaussian sketches in \cite{MagenZ11,KyrillidisVZ14}, in addition to achieving better bounds. This includes fast subspace embeddings such as the Subsampled Randomized Hadamard Transform \cite{Sarlos06,LibertyWMRT07} or sparse subspace embeddings \cite{ClarksonW13,MengM13,NelsonN13,Cohen16}, or even constructions that may be developed in the future, to show that rank bounds in their analyses in previous work can automatically be replaced with stable rank. Our second characterization identifies certain deterministic conditions which if satisfied imply the AMM guarantee. We show these conditions are sufficiently precise to yield optimal results for subgaussian maps and even a deterministic $\Pi$ such as the truncated SVD.

Our main theorem, via connections with spectral error matrix multiplication proven in previous work, implies quantitative improvements for approximate least squares regression and low rank approximation \cite{Sarlos06}, and implies faster low rank approximation for popular kernels in machine learning such as the gaussian and Sobolev kernels. Our main result has also already been applied to improve dimensionality reduction guarantees for $k$-means clustering \cite{CohenEMMP14}, and also implies new results for nonparametric regression when combined with results in \cite{YangPW15}.

Lastly, we point out a minor but interesting observation that the proof of the ``BSS'' deterministic row-sampling result of \cite{BatsonSS12} can be modified to show that for any matrices $A, B$ of stable rank at most $\sr$, one can achieve the spectral norm guarantee for approximate matrix multiplication of $A^T B$ using a deterministic sampling matrix with $O(\sr/\eps^2)$ non-zero entries which can be found in polynomial time. The original result of \cite{BatsonSS12} was for rank instead of stable rank. Our observation leads to a stronger version of a main theorem of \cite{KollaMST10}.
\end{abstract}

\section{Introduction}\SectionName{intro}
Much recent work has successfully utilized randomized dimensionality reduction techniques to speed up solutions to linear algebra problems, with applications in machine learning, statistics, optimization, and several other domains; see the recent monographs \cite{HMT11,Mahoney11,Woodruff14} for more details. In our work here, we give new spectral norm guarantees for approximate matrix multiplication (AMM). Aside from AMM being interesting in its own right, it has become a useful primitive in the literature for analyzing algorithms for other large-scale linear algebra problems as well. We show applications of our new guarantees to speeding up standard algorithms for generalized regression and low-rank approximation problems. We also describe applications of our results to $k$-means clustering (discovered in \cite{CohenEMMP14}) and nonparametric regression \cite{YangPW15}. 

In AMM we are given $A, B$ each with a large number of rows $n$, and the goal is to compute some matrix $C$ such that $\|C - A^T B\|_X$ is ``small'', for some norm $\|\cdot\|_X$. Furthermore, we would like to compute $C$ much faster than the usual time required to exactly compute $A^T B$.

Work on randomized methods for AMM began with \cite{DrineasKM06}, which focused on $\|\cdot\|_X = \|\cdot\|_F$, i.e.,\ Frobenius norm error. They showed by picking an appropriate sampling matrix $\Pi\in\R^{m\times n}$, 
\begin{equation}
\|(\Pi A)^T(\Pi B) - A^T B\|_F \le \eps\|A\|_F\|B\|_F \EquationName{frob-error}
\end{equation}
with good probability, if $m = \Omega(1/\eps^2)$. By a {\em sampling matrix}, we mean the rows of $\Pi$ are independent, and each row is all zero except for a $1$ in a random location according to some appropriate distribution. If $A\in\R^{n\times d}$ and $B\in\R^{n\times p}$, note $(\Pi A)^T (\Pi B)$ can be computed in $O(mdp)$ time once $\Pi A$ and $\Pi B$ are formed, as opposed to the straightforward $O(ndp)$ time to compute $A^T B$.

The Frobenius norm error guarantee of \Equation{frob-error} was also later achieved in \cite[Lemma 6]{Sarlos06} via a different approach, with some later optimizations to the parameters in \cite[Theorem 6.2]{KaneN14}. The approach of Sarl\'{o}s was not via sampling, but rather to use a matrix $\Pi$ drawn from a distribution satisfying an ``oblivious Johnson-Lindenstrauss (JL)'' guarantee, i.e.\ a distribution $\mathcal{D}$ over $\R^{m\times n}$ satisfying the following condition for some $\eps, \delta\in(0, 1/2)$:
\begin{equation}
\forall x\in\R^n,\ \Pr_{\Pi\sim\mathcal{D}}\left(|\|\Pi x\|_2^2 - \|x\|_2^2| > \eps \|x\|_2^2\right) < \delta .
\end{equation}
Such a matrix $\Pi$ can be taken with $m = O(\eps^{-2}\log(1/\delta))$ \cite{JL84}. Furthermore, one can take $\Pi$ to be a Fast JL transform \cite{AilonC09} (or any of the follow-up improvements \cite{AilonL13,KrahmerW11,NelsonPW14,Bourgain14,HavivR16}) or a sparse JL transform \cite{dks10,KaneN14} to speed up the computation of $\Pi A$ and $\Pi B$. One could also use the Thorup-Zhang sketch \cite{ThorupZ12} combined with a certain technique of \cite{LiangBKW14} (see \cite[Theorem 2.10]{Woodruff14} for details) to efficiently boost success probability.

Other than Frobenius norm error, the main other error guarantee investigated in previous work is spectral error. That is, we would like $\|C - A^T B\|$ to be small, where $\|M\|$ denotes the largest singular value of $M$. If one is interested in applying $A^T B$ to some set of input vectors then this type of error is the most meaningful, since $\|C - A^T B\|$ being small is equivalent to $\|Cx\| \approx \|A^T B x\|$ for any $x$. The first work along these lines was again by \cite{DrineasKM06}, who gave a procedure based on entry-wise sampling of the entries of $A$ and $B$. The works \cite{DrineasMM06,SpielmanS11} showed that row-sampling according to leverage scores also provides the desired guarantee with few samples.

Then \cite{Sarlos06}, combined with a quantitative improvement in \cite{ClarksonW13}, showed that one can take a $\Pi$ drawn from an oblivious JL distribution with $\delta = 2^{-\Theta(r)}$ where $\rank(\cdot)$ denotes rank and $\rank = \rank(A) + \rank(B)$. Then for $\Pi$ with $m = O((\rank+\log(1/\delta))/\eps^2)$, with probability at least $1-\delta$ over $\Pi$,
\begin{equation}
\|(\Pi A)^T (\Pi B) - A^T B\| \le \eps \|A\|\|B\| . \EquationName{op-error}
\end{equation}
As we shall see shortly via a very simple lemma (\Lemma{simple-mmult}), a sufficient deterministic condition implying \Equation{op-error} is that $\Pi$ is an $O(\eps)$-{\em subspace embedding} for the $r$-dimensional subspace spanned by the columns of $A, B$. The notion of a subspace embedding was introduced by \cite{Sarlos06}.

\begin{definition}
$\Pi$ is an {\em $\eps$-subspace embedding} for $U\in\R^{n\times \rank}$, $U^T U = I$, if $\Pi$ satisfies \Equation{op-error} with $A = B = U$, i.e. $\|(\Pi U)^T (\Pi U) - I\| \le \eps$. This is equivalent to $\forall x\in \R^{\rank},\ (1-\eps)\|x\|_2^2 \le \|\Pi Ux\|_2^2 \le (1+\eps)\|x\|_2^2$, i.e.\ $\Pi$ preserves norms of all vectors in the subspace spanned by the columns $U$.

An {\em $(\eps, \delta, \rank)$-oblivious subspace embedding (OSE)} is a distribution $\mathcal{D}$ over $\R^{m\times n}$ such that 
$$
\forall U\in\R^{n\times\rank},\ U^TU = I,\ \Pr_{\Pi\sim \mathcal{D}}(\|(\Pi U)^T (\Pi U) - I\| > \eps) < \delta .
$$
\end{definition}

Fast subspace embeddings $\Pi$, i.e.\ such that the products $\Pi A$ and $\Pi B$ can be computed quickly, are known using variants on the Fast JL transform such as the Subsampled Randomized Hadamard Transform (SRHT) \cite{Sarlos06,LibertyWMRT07,Tropp11,LuDFU13} (also see a slightly improved analysis of the SRHT in \Section{srht}) or via sparse subspace embeddings \cite{ClarksonW13,MengM13,NelsonN13,LiMP13,CohenLMMPS15,Cohen16}. In most applications it is important to have a fast subspace embedding to shrink the time it takes to transform the input data to a lower-dimensional form. The SRHT is a construction of a $\Pi$ such that $\Pi A$ can be computed in time $O(nd\log n)$ (see \Section{srht} for details of the construction). The sparse subspace embedding constructions have some parameter $m$ rows and exactly $s$ non-zero entries per column, so that $\Pi A$ can be computed in time $O(s\cdot \mathop{nnz}(A))$, where $\mathop{nnz}(\cdot)$ is the number of non-zero entries, and there is a tradeoff in the upper bounds between $m$ and $s$.

An issue addressed by the work of \cite{MagenZ11} is that of robustness. As stated above, achieving \Equation{op-error} requires $\Pi$ be a subspace embedding for an $r$-dimensional subspace. However, consider the case when $A$ (and similarly for $B$) is of high rank but can be expressed as the sum of a low-rank matrix plus high-rank noise of small magnitude, i.e.,\ $A = \tilde{A} + E_A$ for $\tilde{A}$ of rank $\rank(\tilde{A}) \ll \rank$, and where $\|E_A\|$ is very small but $E_A$ has high (even full) rank. One would hope the noise could be ignored, but standard results require $\Pi$ to have a number of rows at least as large as $\rank$, regardless of how small the magnitude of the noise is. Another case of interest (as we will see in \Section{applications}) is when $A$ and $B$ are each of high rank, but their singular values decay at some appropriate rate. As discussed in \Section{applications}, in several applications where AMM is not the final goal but rather is used as a primitive in analyzing an algorithm for some other problem (such as $k$-means clustering or nonparametric regression), the matrices that arise do indeed have such decaying singular values.

The work \cite{MagenZ11} remedied this by considering the {\em stable ranks} $\sr(A), \sr(B)$ of $A$ and $B$. Define $\sr(A) = \|A\|_F^2 / \|A\|^2$. Note $\sr(A) \le \rank(A)$ always, but can be much less if $A$ has a small tail of singular values. Let $\sr$ denote $\sr(A) + \sr(B)$. Among other results, \cite{MagenZ11} showed that to achieve \Equation{op-error} with good probability, one can take $\Pi$ to be a random (scaled) sign matrix with either $m = \Omega(\sr / \eps^4)$ or $m = \Omega(\sr\log(d+p)/\eps^2)$ rows. As noted in follow-up work \cite{KyrillidisVZ14}, both the $1/\eps^4$ dependence and the $\log(d+p)$ factor are undesirable. In their data-driven low dimensional embedding application, they wanted a dimension $m$ independent of the original dimensions, which are assumed much larger than the stable rank, and also wanted lower dependence on $1/\eps$. To this end, \cite{KyrillidisVZ14} defined the {\em nuclear rank} as $\nr(A) = \|A\|_* / \|A\|$ and showed $m = \Omega(\nr/\eps^2)$ rows suffice for $\nr = \nr(A) + \nr(B)$. Here $\|A\|_*$ is the nuclear norm, i.e.,\ sum of singular values of $A$. Since $\|A\|_F^2$ is the sum of squared singular values, it is straightforward to see that $\nr(A) \ge \sr(A)$ always. Thus there is a tradeoff: the stable rank guarantee is worsened to nuclear rank, but dependence on $1/\eps$ is improved to quadratic.

We show switching to the weaker $\nr$ guarantee is unnecessary by showing quadratic dependence on $1/\eps$ holds even with stable rank. This answers the main open question of \cite{MagenZ11,KyrillidisVZ14}. 

To state our results in a more natural way, we rephrase our main result to say that we achieve 
\begin{equation}
\|(\Pi A)^T (\Pi B) - A^T B\| \le \eps \sqrt{\left(\|A\|^2 + \frac{\|A\|_F^2}k\right)\left(\|B\|^2 + \frac{\|B\|_F^2}k\right)}. \EquationName{mmult-general}
\end{equation}
for an arbitrary $k\ge 1$, and we do so by using subspace embeddings for $O(k)$-dimensional subspaces in a certain black box way (which will be made precise soon) regardless of the ranks of $A, B$. 

\begin{remark}\RemarkName{equiv}
\textup{
Note that our previously stated main contribution is equivalent, since one could set $k = \sr(A) + \sr(B)$ to arrive at the conclusion that subspace embeddings for $O(\sr)$-dimensional subspaces yield the guarantee in \Equation{op-error}. Alternatively one could obtain the \Equation{mmult-general} guarantee via \Equation{op-error} with error parameter $\eps' = \Theta(\eps\cdot \min\{1, \sqrt{(\sr(A)\cdot \sr(B))/k}\})$. 
}
\end{remark}

Henceforth, we use the following definition.

\begin{definition}
For conforming matrices $A^T, B$, we say $\Pi$ satisfies the {\em $(k, \eps)$-approximate spectral norm matrix multiplication property ($(k,\eps)$-AMM) for $A, B$} if \Equation{mmult-general} holds. If $\Pi$ is random and satisfies $(k,\eps)$-AMM with probability $1-\delta$ for any fixed $A, B$, then we say $\Pi$ satisfies $(k,\eps,\delta)$-AMM.
\end{definition}

\paragraph{Our main contribution:} We give two different characterizations for $\Pi$ supporting $(k,\eps)$-AMM, both of which imply $(k,\eps,\delta)$-AMM $\Pi$ having $m = O((k+\log(1/\delta))/\eps^2)$ rows. The first characterization applies to any OSE distribution for which a moment bound has been proven for $\|(\Pi U)^T(\Pi U) - I\|$ (which is true for the best analyses of all known OSE's). In this case, we show a black box theorem: any $(\eps, \delta, 2k$)-OSE provides $(k,\eps,\delta)$-AMM. Since matrices with subgaussian entries and $m = \Omega((k+\log(1/\delta))/\eps^2)$ are $(\eps, \delta, 2k)$-OSE's, our originally stated main result follows. This result is optimal, since \cite{nn14} shows any randomized distribution over $\Pi$ with $m$ rows having the $(k,\eps, \delta)$-AMM property must have $m = \Omega((k + \log(1/\delta))/\eps^2)$ (the hard instance there is when $A = B = U$ has orthonormal columns, and thus rank and stable rank are equal).

Our second characterization identifies certain deterministic conditions which, if satisfied by $\Pi$, imply the desired $(k,\eps)$-AMM property. These conditions are of the form: (1) $\Pi$ should preserve a certain set of $O(\log(1/\eps))$ different subspaces of varying dimensions (all depending on $k, \eps$ and not on the ranks of $A, B$) with varying distortions, and (2) for a certain two matrices in our analysis, left-multiplication by $\Pi$ should not increase their operator norms by more than an $O(1)$ factor. These conditions are chosen carefully so that matrices with subgaussian entries and $m = \Omega(k/\eps^2)$ satisfy all conditions simultaneously with high probability, again thus proving our main result while also suggesting that the conditions we have identified are the ``right'' ones.

Due to the black box reliance on the subspace embedding primitive in our proofs, $\Pi$ need not only be a subgaussian map. Thus not only do we improve on $m$ compared with previous work, but also in terms of the general class of $\Pi$ our result applies to. For example given our first characterization, not only does it suffice to use a random sign matrix with $\Omega(k/\eps^2)$ rows, but in fact one can apply our theorem to more efficient subspace embeddings such as the SRHT or sparse subspace embeddings, or even constructions discovered in the future. That is, one can automatically transfer bounds proven for the subspace embedding property to the $(k,\eps)$-AMM property. Thus, for example, the best known SRHT analysis (in our appendix, see \Theorem{best-srht}) implies $(k,\eps, \delta)$-AMM for $m = \Omega((k + \log(1/(\eps\delta))\log(k/\delta))/\eps^2)$ rows. For sparse subspace embeddings, the analysis in \cite{Cohen16} implies $m = \Omega(k\log(k/\delta)/\eps^2)$ suffices with $s = O(\log(k/\delta)/\eps)$ non-zeroes per column of $\Pi$. The only reason for the $\log k$ loss in $m$ for these particular distributions is not due to our theorems, but rather due to the best analyses for the simpler {\em subspace embedding} property in previous work already incurring the extra $\log k$ factor (note being a subspace embedding for a $k$-dimensional subspace is simply a special case of $(k,\eps)$-AMM where $A = B = U$ has $k$ orthonormal columns). In the case of the SRHT, this extra $\log k$ factor is actually necessary \cite{Tropp11}; for sparse subspace embeddings, it is conjectured that the $\log k$ factor can be removed and that $m = \Omega((k+\log(1/\delta))/\eps^2)$ actually suffices to obtain an OSE \cite[Conjecture 14]{NelsonN13}. We also discuss in \Remark{efficiency} that one can set $\Pi$ to be $\Pi_1\cdot \Pi_2$ where $\Pi_1$ has subgaussian entries with $O(k/\eps^2)$ rows, and $\Pi_2$ is some other fast OSE (such as the SRHT or sparse subspace embedding), and thus one could obtain the best of both worlds: (1) $\Pi$ has $O(k/\eps^2)$ rows, and (2) can be applied to any $A\in\R^{n\times d}$ in time $T + O(km'd/\eps^2)$, where $T$ is the (fast) time to apply $\Pi_2$ to $A$, and $m'$ is the number of rows of $\Pi_2$. For example, by appropriate composition as discussed in \Remark{efficiency}, $\Pi$ can have $O(k/\eps^2)$ rows and support multiplying $\Pi A$ for $A\in\R^{n\times d}$ in time $O(\mathop{nnz(A)}) + \tilde{O}(\eps^{-O(1)}(k^3 + k^2d))$.

We also observe the proof of the main result of \cite{BatsonSS12} can be modified to show that given any $A, B$ each with $n$ rows, and given any $\eps\in(0,1/2)$, there exists a diagonal matrix $\Pi\in\R^{n\times n}$ with $O(k/\eps^2)$ non-zero entries, and that can be computed by a deterministic polynomial time algorithm, achieving $(k,\eps)$-AMM. The original work of \cite{BatsonSS12} achieved \Equation{op-error} with $m = O(\rank/\eps^2)$ for $\rank$ being the sum of ranks of $A, B$. The work \cite{BatsonSS12} stated their result for the case $A = B$, but the general case of potentially unequal matrices reduces to this case; see \Section{bss}. Our observation also turns out to yield a stronger form of \cite[Theorem 3.3]{KollaMST10}; also see \Section{bss}.

\bigskip

As mentioned, aside from AMM being interesting on its own, it is a useful primitive widely used in analyses of algorithms for several other problems, including $k$-means clustering \cite{BoutsidisZMD15,CohenEMMP14}, nonparametric regression \cite{YangPW15}, linear least squares regression and low-rank approximation \cite{Sarlos06}, approximating leverage scores \cite{DrineasMMW12}, and several other problems (see \cite{Woodruff14} for a recent summary).  For all these, analyses of correctness for algorithms based on dimensionality reduction via some $\Pi$ rely on $\Pi$ satisfying AMM for certain matrices in the analysis.

After making certain quantitative improvements to connections between AMM and applications, and combining them with our main result, in \Section{applications} we obtain the following new results.

\begin{enumerate}
\item \textbf{Generalized regression:} Given $A\in\R^{n\times d}$ and $B\in\R^{n\times p}$, consider the problem of computing $X^* = \mathop{argmin}_{X\in\R^{d\times p}} \|AX - B\|$. It is standard that $X^* = (A^T A)^+ A^T B$ where $(\cdot)^+$ is the Moore-Penrose pseudoinverse. The bottleneck here is computing $A^T A$, taking $O(nd^2)$ time. A popular approach is to instead compute $\tilde{X} = ((\Pi A)^T(\Pi A))^+ (\Pi A)^T \Pi B$, i.e.,\ the minimizer of $\|\Pi A X - \Pi B\|$. Note that computing $(\Pi A)^T (\Pi A)$ (given $\Pi A$) only takes a smaller $O(md^2)$ amount of time. We show that if $\Pi$ satisfies $(k, O(\sqrt{\eps}))$-AMM for $U_A, P_{\bar{A}} B$, and is also an $O(1)$-subspace embedding for a certain $\rank(A)$-dimensional subspace (see \Theorem{gen-regression}), then
$$
\| A \tilde X - B \|^2 \leq (1 + \eps) \| P_A B - B \|^2 + (\eps / k) \| P_A B - B \|_F^2
$$
where $P_A$ is the orthogonal projection onto the column space of $A$, $P_{\bar{A}} = I - P_A$, and $U_A$ has orthonormal columns forming a basis for the column space of $A$. The punchline is that if the regression error $P_{\bar{A}} B$ has high actual rank but stable rank only on the order of $\rank(A)$, then we obtain multiplicative spectral norm error with $\Pi$ having fewer rows. Generalized regression is a natural extension
of the case when $B$ is a vector, 
and arises for example 
in Regularized Least Squares Classification, where one has multiple (non-binary) labels, and for
each label one creates a column of $B$; see e.g.\ \cite{CLLLH10} for this and variations. 

\item \textbf{Low-rank approximation:} We are given $A\in\R^{n\times d}$ and integer $k\ge 1$, and we want to compute $A_k = \mathop{argmin}_{\rank(X) \le k} \|A - X\|$. The Eckart-Young theorem implies $A_k$ is obtained by truncating the SVD of $A$ to the top $k$ singular vectors. The standard way to use dimensionality reduction for speedup, introduced in \cite{Sarlos06}, is to let $S = \Pi A$ then compute $\tilde{A} = A P_S$. Then return $\tilde{A}_k$, the best rank-$k$ approximation of $\tilde{A}$, instead of $A_k$ (it is known $\tilde{A}_k$ can be computed more efficiently than $A_k$; see \cite[Lemma 4.3]{ClarksonW09}). We show if $\Pi$ satisfies $(k, O(\sqrt{\eps}))$-AMM for $U_k$ and $A-A_k$, and is a $(1/2)$-subspace embedding for the column space of $A_k$, then
$$
\|\tilde{A}_k - A\|^2 \le (1+\eps)\|A - A_k\|^2 + (\eps/k)\|A - A_k\|_F^2 .
$$
The punchline is that if the stable rank of the tail $A - A_k$ is on the same order as the rank parameter $k$, then standard algorithms from previous work for Frobenius multiplicative error actually in fact also provide {\em spectral} multiplicative error. This property indeed holds for any $k$ for popular kernel matrices in machine learning such as the gaussian and Sobolev kernels (see \cite{ReyhaniHV11} and Examples 2 and 3 of \cite{YangPW15}), and low-rank approximation of kernel matrices has been applied to several machine learning problems; see \cite{GittensM13} for a discussion.
\end{enumerate}

We also explain in \Section{applications} how our result has already been applied in recent work on dimensionality reduction for $k$-means clustering \cite{CohenLMMPS15}, and how it generalizes results in \cite{YangPW15} on dimensionality reduction for nonparametric regression to use a larger class of embeddings $\Pi$.

\subsection{Preliminaries and notation}

We frequently use the singular value decomposition (SVD). For a matrix $A\in\R^{n\times d}$ of rank $\rank$, consider the compact SVD $A = U_A \Sigma_A V_A^T$ where $U_A \in \R^{n\times r}$ and $V_A\in\R^{d\times r}$ each have orthonormal columns, and $\Sigma_A$ is diagonal with strictly positive diagonal entries (the singular values of $A$). We assume $(\Sigma_A)_{i,i} \ge (\Sigma_A)_{j,j}$ for $i< j$. We let $P_A = U_AU_A^T$ denote the orthogonal projection operator onto the column space of $A$. We use $\mathrm{span}(A)$ to refer to the subspace spanned by $A$'s columns.

Often for a matrix $A$ we write $A_k$ as the best rank-$k$ approximation to $A$ under Frobenius or spectral error (obtained by writing the SVD of $A$ then setting all $(\Sigma_A)_{i,i}$ to $0$ for $i>k$). We often denote $A - A_k$ as $A_{\bar{k}}$. For matrices with orthonormal columns, such as $U_A$, $(U_A)_k$ denotes the $n\times k$ matrix formed by removing all but the first $k$ columns of $U$. When $A$ is understood from context, we often write $U \Sigma V^T$ instead of $U_A \Sigma_A V_A^T$, and $U_k$ to denote $(U_A)_k$ (and $\Sigma_k$ for $(\Sigma_A)_k$, etc.).

\section{Analysis of matrix multiplication for stable rank}\SectionName{analysis}

First we record a simple lemma relating subspace embeddings and AMM.

\begin{lemma}\LemmaName{simple-mmult}
Let $E = \mathrm{span}\{A, B\}$, and let $\Pi$ be an $\eps$-subspace embedding for $E$. Then \Equation{op-error} holds.
\end{lemma}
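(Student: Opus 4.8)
The plan is to reduce the claim directly to the definition of an $\eps$-subspace embedding by expressing both $A$ and $B$ in a common orthonormal basis for $E$. Let $U\in\R^{n\times r}$ be a matrix whose columns form an orthonormal basis for $E=\mathrm{span}\{A,B\}$, so that $UU^T$ is the orthogonal projection onto $E$. Since every column of $A$ and of $B$ lies in $E$, we have $UU^TA=A$ and $UU^TB=B$; writing $X=U^TA$ and $Y=U^TB$ we thus get $A=UX$ and $B=UY$. Before anything else I would note that the $\eps$-subspace-embedding property is independent of which orthonormal basis of $E$ we pick: replacing $U$ by $UR$ with $R$ orthogonal conjugates $(\Pi U)^T(\Pi U)-I$ by $R$, which preserves the operator norm. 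This lets us work with whatever basis is most convenient.

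The first step is the pair of norm identities $\|X\|=\|A\|$ and $\|Y\|=\|B\|$, which follow since $X^TX=A^TUU^TA=A^TA$ (and symmetrically for $Y$), so $X$ and $A$ have the same singular values. The second step is the algebraic identity: using $U^TU=I$ we get $A^TB=X^TY$ while $(\Pi A)^T(\Pi B)=(\Pi UX)^T(\Pi UY)=X^T\bigl((\Pi U)^T(\Pi U)\bigr)Y$, hence
\[
(\Pi A)^T(\Pi B)-A^TB = X^T\bigl((\Pi U)^T(\Pi U)-I\bigr)Y .
\]
The final step is submultiplicativity of the operator norm together with the hypothesis $\|(\Pi U)^T(\Pi U)-I\|\le\eps$, which gives $\|(\Pi A)^T(\Pi B)-A^TB\|\le\|X\|\cdot\eps\cdot\|Y\|=\eps\|A\|\|B\|$, exactly \Equation{op-error}.

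I do not expect any genuine obstacle here — the computation is entirely routine. The only point that warrants an explicit sentence is the basis-independence remark above, needed because the definition phrases the subspace-embedding property in terms of a fixed orthonormal $U$; everything else is the change of variables $A=UX$, $B=UY$ followed by one application of submultiplicativity.
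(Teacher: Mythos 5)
Your proof is correct and follows essentially the same approach as the paper's: express $A,B$ in a common orthonormal basis $U$ for $E$ and reduce to the bound on $\|(\Pi U)^T(\Pi U)-I\|$. The only cosmetic difference is that you phrase the reduction as the clean matrix identity $(\Pi A)^T(\Pi B)-A^TB=X^T\bigl((\Pi U)^T(\Pi U)-I\bigr)Y$ and invoke submultiplicativity, whereas the paper first normalizes $\|A\|=\|B\|=1$ and argues via the variational characterization of the operator norm; your added remark on basis-independence is a nice touch that the paper leaves implicit.
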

\begin{proof}
First, without loss of generality we may assume $\|A\| = \|B\| = 1$ since we can divide both sides of
\Equation{op-error} by $\|A\| \cdot \|B\|$. Let $U$ be a matrix whose columns form an orthonormal basis for $E$. Then note for any $x, y$ we can write $Ax = Uw, By = Uz$ where $\|w\| \le \|x\|, \|z\| \le \|y\|$. Then
\allowdisplaybreaks
\begin{align*}
\|(\Pi A)^T (\Pi B) - A^T B\| &= \sup_{\|x\|=\|y\|=1} |\inprod{\Pi A x, \Pi By} - \inprod{Ax, By}|\\
{}&= \sup_{\|w\|, \|z\| \le 1} |\inprod{\Pi U z, \Pi U w} - \inprod{Uz, Uw}|\\
{}&= \|(\Pi U)^T (\Pi U) - I\|\\
{}& < \eps
\end{align*}
\end{proof}

\Lemma{simple-mmult} implies that if $A, B$ each have rank at most $r$, it suffices for $\Pi$ to have $\Omega(r/\eps^2)$ rows.

In the following two subsections, we give two different characterizations for $\Pi$ to provide $(k,\eps)$-AMM, both only requiring $\Pi$ to have $\Omega(k/\eps^2)$ rows, independent of $r$.

\subsection{Characterization for $(k,\eps,\delta)$-AMM via a moment property}\SectionName{moments}

Here we provide a way to obtain $(k,\eps)$-AMM for any $\Pi$ whose subspace embedding property has been established using the moment method, e.g.\ sparse subspace embeddings \cite{MengM13,NelsonN13,Cohen16}, dense subgaussian matrices as analyzed in \Section{subgaussian-ose-moments}, or even the SRHT as analyzed in \Section{srht}. Our approach in this subsection is inspired by the introduction of the ``JL-moment property'' in \cite{KaneN14} to analyze approximate matrix multiplication with Frobenius error. The following is a generalization of \cite[Definition 6.1]{KaneN14}, which was only concerned with $d=1$.

\begin{definition}
A distribution $\mathcal{D}$ over $\R^{m\times n}$ has {\em $(\eps,\delta,d,\ell)$-OSE moments} if for all matrices $U\in\R^{n\times d}$ with orthonormal columns,
$$
\E_{\Pi\sim\mathcal{D}} \left\|(\Pi U)^T(\Pi U) - I\right\|^\ell < \eps^\ell \cdot \delta
$$
\end{definition}
Note that this is just a special case of bounding the expectation of an arbitrary function of $\|(\Pi U)^T(\Pi U) - I\|$.  The arguments below will actually apply to any nonnegative, convex, increasing function of $\|(\Pi U)^T(\Pi U) - I\|^2$, but we restrict to moments for simplicity of presentation. The acronym ``OSE'' refers to {\em oblivious subspace embedding}, a term coined in \cite{NelsonN13} to refer to distributions over $\Pi$ yielding a subspace embedding for any fixed subspace of a particular bounded dimension with high probability. We start with a simple lemma.

\begin{lemma}\LemmaName{matmult}
Suppose $\mathcal{D}$ satisfies the $(\eps,\delta,2d,\ell)$-OSE moment property and $A, B$ are matrices with (1) the same number of rows, and (2) sum of ranks at most $2d$. Then
$$
\E_{\Pi\sim\mathcal{D}} \left\|(\Pi A)^T(\Pi B) - A^T B\right\|^\ell < \eps^\ell \|A\|^\ell \|B\|^\ell \cdot \delta
$$
\end{lemma}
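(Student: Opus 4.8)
The plan is to upgrade the deterministic reduction in \Lemma{simple-mmult} to a statement about $\ell$-th moments. The key point is that the argument proving \Lemma{simple-mmult} actually establishes, for \emph{every} fixed realization of $\Pi$, the pointwise inequality $\|(\Pi A)^T(\Pi B) - A^T B\| \le \|A\|\,\|B\|\cdot\|(\Pi U)^T(\Pi U) - I\|$ for a suitably chosen fixed $U$; once this is in hand, one simply raises both (nonnegative) sides to the power $\ell$ and takes $\E_{\Pi\sim\mathcal{D}}$.

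To produce such a $U$: let $E = \mathrm{span}\{A,B\}$, so $\dim E \le \rank(A) + \rank(B) \le 2d$, and choose $U\in\R^{n\times 2d}$ with orthonormal columns whose column space contains $E$ (extend an orthonormal basis of $E$ to a $2d$-dimensional subspace of $\R^n$; we may assume $n\ge 2d$, the interesting regime). Then for any unit vectors $x,y$ we write $Ax = Uw$ and $By = Uz$ with $\|w\| = \|Ax\| \le \|A\|$ and $\|z\| = \|By\| \le \|B\|$, so exactly as in \Lemma{simple-mmult},
\begin{align*}
\|(\Pi A)^T(\Pi B) - A^T B\| &= \sup_{\|x\| = \|y\| = 1}\bigl|\inprod{\Pi A x,\Pi B y} - \inprod{Ax, By}\bigr|\\
&\le \sup_{\|w\|\le\|A\|,\ \|z\|\le\|B\|}\bigl|\inprod{\Pi U w,\Pi U z} - \inprod{Uw, Uz}\bigr|\\
&= \|A\|\,\|B\|\cdot\|(\Pi U)^T(\Pi U) - I\|,
\end{align*}
where the last equality uses bilinearity and the operator-norm characterization $\sup_{\|w'\|,\|z'\|\le1}|w'^T M z'| = \|M\|$. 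This is the claimed pointwise bound.

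Raising to the power $\ell$ and taking expectations then gives
$$
\E_{\Pi\sim\mathcal{D}}\|(\Pi A)^T(\Pi B) - A^T B\|^\ell \;\le\; \|A\|^\ell\|B\|^\ell\cdot\E_{\Pi\sim\mathcal{D}}\|(\Pi U)^T(\Pi U) - I\|^\ell \;<\; \eps^\ell\|A\|^\ell\|B\|^\ell\cdot\delta,
$$
the last step being the $(\eps,\delta,2d,\ell)$-OSE moment hypothesis applied to the $2d$-column orthonormal matrix $U$. I do not anticipate a genuine obstacle; the only point needing a little care is matching dimensions in the OSE moment definition, i.e.\ padding $E$ out to dimension exactly $2d$ (equivalently, noting that the $(\eps,\delta,2d,\ell)$-OSE moment property descends to fewer columns, since for a submatrix $U'$ of $U$ the matrix $(\Pi U')^T(\Pi U') - I$ is a principal submatrix of $(\Pi U)^T(\Pi U) - I$ and hence has operator norm no larger).
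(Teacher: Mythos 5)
Your proposal is essentially the paper's own proof: both reduce to the pointwise inequality $\|(\Pi A)^T(\Pi B) - A^T B\| \le \|A\|\,\|B\|\cdot\|(\Pi U)^T(\Pi U)-I\|$ obtained from the argument of \Lemma{simple-mmult} with $U$ an orthonormal basis for $\mathrm{span}\{A,B\}$, then raise to the $\ell$-th power and take expectations to invoke the OSE moment hypothesis. Your extra remark about padding $U$ to exactly $2d$ columns (or, equivalently, noting that the OSE moment bound descends to fewer columns via the principal-submatrix observation for the symmetric matrix $(\Pi U)^T(\Pi U)-I$) correctly handles a dimension-matching detail that the paper leaves implicit.
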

\begin{proof}
First, we apply \Lemma{simple-mmult} to $A$ and $B$, where $U$ forms an orthonormal basis for the subspace $\mathrm{span}\{\mathrm{columns}(A),\mathrm{columns}(B)\}$, showing that
$$
\left\|(\Pi A)^T(\Pi B) - A^T B\right\| \le \left\|(\Pi U)^T(\Pi U) - I\right\| \|A\| \|B\| .
$$
Therefore
$$
\E_{\Pi\sim\mathcal{D}} \left\|(\Pi A)^T(\Pi B) - A^T B\right\|^\ell \le \E_{\Pi\sim\mathcal{D}} \left\|(\Pi U)^T(\Pi U) - I\right\|^\ell \|A\|^\ell \|B\|^\ell < \eps^\ell \|A\|^\ell \|B\|^\ell \cdot \delta
$$
\end{proof}

Then, just as \cite[Theorem 6.2]{KaneN14} showed that having OSE moments with $d=1$ implies approximate matrix multiplication with Frobenius norm error, here we show that having OSE moments for larger $d$ implies approximate matrix multiplication with operator norm error. 

\begin{theorem}\TheoremName{main2}
Given $k,\eps,\delta \in (0, 1/2)$, let $\mathcal{D}$ be any distribution over matrices with $n$ columns with the $(\eps,\delta,2k,\ell)$-OSE moment property for some $\ell\ge 2$. Then, for any $A,B$,
\begin{equation}
\Pr_{\Pi\sim\mathcal{D}} \left ( \|(\Pi A)^T(\Pi B) - A^T B\| > \eps \sqrt{(\|A\|^2 + \|A\|_F^2 / k) (\|B\|^2 + \|B\|^2_F / k)} \right ) < \delta \EquationName{moment-thm}
\end{equation}
\end{theorem}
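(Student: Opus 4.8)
The plan is to decompose $A$ and $B$ into a low-rank ``head'' and a high-rank ``tail'' at threshold $k$, namely $A = A_k + A_{\bar k}$ and $B = B_k + B_{\bar k}$, expand the product $(\Pi A)^T(\Pi B) - A^T B$ bilinearly into four cross terms $A_k/B_k$, $A_k/B_{\bar k}$, $A_{\bar k}/B_k$, $A_{\bar k}/B_{\bar k}$, and bound each cross term separately by \Lemma{matmult} before reassembling via the triangle inequality and Markov's inequality. The key numerical facts are that $A_k$ has rank at most $k$ (so its sum of ranks with $B_k$, with $(B_{\bar k})_k$, etc., is at most $2k$ and \Lemma{matmult} applies), and that the tail satisfies both $\|A_{\bar k}\| \le \|A\|_F/\sqrt k$ (since the $(k{+}1)$-st singular value is at most the average of the top $k$) and $\|A_{\bar k}\|_F \le \|A\|_F$, and similarly for $B$. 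A subtlety is that the tails $A_{\bar k}, B_{\bar k}$ need not themselves have rank $\le 2k$, so terms involving a tail cannot be fed directly into \Lemma{matmult}; the fix is to further truncate each tail at rank $k$ as well (writing $A_{\bar k} = (A_{\bar k})_k + (A_{\bar k})_{\bar k} = A_{\bar k}^{(1)} + A_{\bar k}^{(2)}$ and iterating), so that the head--head and head--tail pieces always involve a rank-$\le k$ factor, while the genuinely tail--tail residual must be handled by a different argument.

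For the head--head term $(\Pi A_k)^T(\Pi B_k) - A_k^T B_k$, apply \Lemma{matmult} with $d=k$ to get $\ell$-th moment at most $\eps^\ell \|A_k\|^\ell \|B_k\|^\ell \delta \le \eps^\ell \|A\|^\ell \|B\|^\ell \delta$. For a head--tail term such as $(\Pi A_k)^T(\Pi B_{\bar k}) - A_k^T B_{\bar k}$, here one factor has rank $\le k$ but the other does not; the right move is to note $A_k^T B_{\bar k} = 0$ exactly (the column spaces of $A_k$ and... no --- that is false in general). Instead, for head--tail I would bound $\|(\Pi A_k)^T(\Pi B_{\bar k})\|$ directly: since $\Pi$ is (in expectation, via the OSE moment property) a bounded-distortion subspace embedding for $\mathrm{span}(A_k)$ together with a rank-$k$ slice of $B_{\bar k}$, one can control $\|\Pi A_k x\|$ and $\|\Pi B_{\bar k} y\|$ and hence the inner product, using $\|B_{\bar k}\|_F \le \|B\|_F$. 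The cleanest route, which I expect the authors take, is actually to iterate the head/tail split geometrically: split at rank $k$, then split the tail at rank $k$ again, doubling the distortion budget at each level, and sum a geometric series of error contributions each of the form $\eps\|A\|\,\|B_{\bar k}^{(i)}\| + \eps\,\|A_{\bar k}^{(j)}\|\,\|B\| + \eps\,\|A_{\bar k}^{(i)}\|\,\|B_{\bar k}^{(j)}\|$, where $\|A_{\bar k}^{(i)}\| \le \|A\|_F/(k^{1/2} 2^{(i-1)/2})$ decays geometrically; summing gives a total bounded by a constant times $\eps\sqrt{(\|A\|^2 + \|A\|_F^2/k)(\|B\|^2 + \|B\|_F^2/k)}$, and one absorbs the constant by rescaling $\eps$.

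The main obstacle is the bookkeeping of the iterated decomposition: one must verify that at each level the relevant sums of ranks stay $\le 2k$ so \Lemma{matmult} is legitimately applicable, that the per-level distortion increments can be chosen (geometrically growing, e.g.\ $\eps_i = \eps/2^{?}$ versus a widening OSE dimension) so that the $(\eps_i,\delta_i,2k,\ell)$-OSE moment property still follows from the single hypothesis $(\eps,\delta,2k,\ell)$ — here one uses that the OSE moment bound for parameter $2k$ immediately gives a bound for any smaller dimension, and that moments scale so that a crude union-style bound over $O(\log(1/\eps))$ levels costs only a $\mathrm{poly}(\log(1/\eps))$ or constant factor after adjusting $\ell$ — and finally that the tail--tail residual after $O(\log(1/\eps))$ rounds is small enough in operator norm to be neglected. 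Once the decomposition and these rank/parameter accountings are in place, the final step is routine: take $\ell$-th moments, apply the triangle inequality (or a convexity/$\ell_\ell$-norm triangle inequality for the moment itself), sum the geometric series, and invoke Markov: $\Pr(\|(\Pi A)^T(\Pi B) - A^T B\| > t) \le \E\|\cdot\|^\ell / t^\ell < \delta$ with $t$ the claimed bound.
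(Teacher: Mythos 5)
Your starting point --- split $A,B$ into rank-$k$ pieces and apply \Lemma{matmult} to each pair --- is indeed where the paper's proof begins (the paper first orthogonalizes the columns via the full SVD and then blocks the sorted columns into groups of $k$). But the way you propose to reassemble the pieces has two genuine gaps, one quantitative and one structural.

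First, the claimed geometric decay $\|A_{\bar k}^{(i)}\| \le \|A\|_F/(\sqrt{k}\,2^{(i-1)/2})$ is false. Iteratively truncating at rank $k$ gives $\|A_{\bar k}^{(i)}\| = \sigma_{(i-1)k+1}(A)$, and the only control coming from the Frobenius norm is $(i-1)k\cdot\sigma_{(i-1)k+1}^2 \le \|A\|_F^2$, i.e.\ $\|A_{\bar k}^{(i)}\| \le \|A\|_F/\sqrt{(i-1)k}$. That is polynomial, not geometric, decay; for instance if $A$ has $4k$ equal nonzero singular values the proposed bound already fails at $i=4$.

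Second, and more fundamentally, even with the correct $1/\sqrt{i}$ decay the plan of summing per-block-pair operator-norm errors $\eps\, s_i t_j$ via the triangle inequality cannot close: $\sum_i s_i \approx \sum_i \|A\|_F/\sqrt{ik}$ grows with the number of blocks, so when $\mathrm{rank}(A)$ is large relative to $k$ the accumulated error blows up rather than converging as a geometric series. The step you are missing --- and the crux of the paper's proof --- is to avoid the direct triangle inequality over blocks. The paper blocks the error matrix $C=(\Pi A)^T(\Pi B)-A^TB$ into $k\times k$ submatrices $C_{\{i\},\{j\}}$, forms the reduced matrix $C'$ with entries $C'_{i,j}=\|C_{\{i\},\{j\}}\|$, and uses $\|C\|\le\|C'\|\le\|C'\|_F$. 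This replaces the divergent sum $\sum_{i,j}s_i t_j$ by $\bigl(\sum_{i,j}s_i^2 t_j^2\bigr)^{1/2} = \bigl(\sum_i s_i^2\bigr)^{1/2}\bigl(\sum_j t_j^2\bigr)^{1/2}$, and the telescoping bound $s_i^2 \le \|A'_{i-1}\|_F^2/k$ for $i>1$ gives $\sum_i s_i^2 \le \|A\|^2 + \|A\|_F^2/k$ exactly. Convergence comes from squaring, not from geometric decay of the tails. With Minkowski's inequality applied to the $(\ell/2)$-norm of $\|C'\|_F^2$ and then Markov, the proof closes. There is also no need for $O(\log(1/\eps))$ levels or for adjusting $\ell$ or the distortion per level; a single flat blocking into rank-$k$ pieces suffices once the Frobenius-on-blocks trick is in place.
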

\begin{proof}
We can assume $A, B$ each have orthogonal columns. This is since, via the full SVD, there exist orthogonal matrices $R_A, R_B$ such that $A R_A$ and $B R_B$ each have orthogonal columns. Since neither left nor right multiplication by an orthogonal matrix changes operator norm, 
$$
\|(\Pi A)^T(\Pi B) - A^T B\| = \|(\Pi A R_A)^T(\Pi B R_B) - (A R_A)^T B R_B\| .
$$

Thus, we replace $A$ by $A R_A$ and similarly for $B$. We may also assume the columns $a_1, a_2, \ldots$ of $A$ are sorted so that $\|a_i\|_2 \ge \|a_{i+1}\|_2$ for all $i$. Henceforth we assume $A$ has orthogonal columns in this sorted order (and similarly for $B$, with columns $b_i$). Now, treat $A$ as a block matrix in which the columns are blocked into groups of size $k$, and similarly for $B$ (if the number of columns of either $A$ or $B$ is not divisible by $k$, then pad them with all-zero columns until they are, which does not affect the claim). Let the spectral norm of the $i$th block of $A$ be $s_i = \|a_{(i-1)\cdot k + 1}\|_2$, and for $B$ denote the spectral norm of the $i$th block as $t_i = \|b_{(i-1)\cdot k + 1}\|_2$. These equalities for $A, B$ hold since their columns are orthogonal and sorted by norm. We claim $\sum_i s_i^2 \le \|A\|^2 + \|A\|_F^2 / k$ (and similarly for $\sum_i t_i^2$). To see this, let the blocks of $A$ be $A'_1, \ldots, A'_q$ where $s_i = \|A'_i\|$. Note $s_1^2 = \|A'_1\| \le \|A\|$. Also, for $i>1$ we have
$$
s_i^2 = \|a_{(i-1)\cdot k + 1}\|_2^2\le \frac 1{k} \sum_{(i-2)\cdot k + 1 \le j \le (i-1)\cdot k} \|a_j\|_2^2 = \frac 1{k} \|A'_{i-1}\|_F^2 .
$$
Thus
$$
\sum_{i>1} s_i^2 \le \|A\|_F^2 / k.
$$

Define $C = (\Pi A)^T(\Pi B) - A^T B$.  Let $v_{\{ i \}}$ denote the $i$th block of a vector $v$ (the $k$-dimensional vector whose entries consist of entries $(i-1)\cdot k + 1$ to $i\cdot k$ of $v$), and $C_{\{i\},\{j\}}$ the $(i,j)$th block of $C$, a $k\times k$ matrix (the entries in $C$ contained in the $i$th block of rows and $j$th block of columns).

Now, $\|C\| = \sup_{\|x\|=\|y\|=1} x^T C y$.  For any such vectors $x$ and $y$, we define new vectors $x'$ and $y'$ whose coordinates correspond to entire blocks: we let $x'_i = \| x_{\{ i \}} \|$, with $y'$ defined analogously.  We similarly define $C'$ with entries corresponding to blocks of $C$, where $C'_{i,j} = \| C_{\{i\},\{j\}} \|$. Then $x^T C y \le x'^T C' y'$, simply by bounding the contribution of each block. Thus it suffices to upper bound $\|C'\|$, which we bound by its Frobenius norm $\|C'\|_F$. Now, recalling for a random variable $X$ that $\|X\|_\ell$ denotes $(\E|X|^\ell)^{1/\ell}$ and using Minkowski's inequality (that $\|\cdot\|_\ell$ is a norm for $\ell \ge 1$),
\allowdisplaybreaks
\begin{align*}
\|\|C'\|_F^2\|_{\ell/2} &= \left \| \sum_{i,j} \|(\Pi A'_i)^T(\Pi B'_j) - A_i'^T B'_j\|^2 \right \|_{\ell/2} \\
{}&\le \sum_{i,j} \|\|(\Pi A'_i)^T(\Pi B'_j) - A_i'^T B'_j\|^2 \|_{\ell/2}\\
{}&\le \sum_{i,j} \eps^2 s_i^2 t_j^2\cdot \delta^{2/\ell} \text{ (\Lemma{matmult})}\\
{}&= \eps^2 \left ( \sum_i s_i^2 \right ) \cdot \left ( \sum_j t_j^2 \right ) \delta^{2/\ell} \\
{}&\le \left ( \eps \sqrt{(\|A\|^2 + \|A\|_F^2 / k) (\|B\|^2 + \|B\|^2_F / k)} \delta^{1/\ell} \right )^2
\end{align*}

Now, $\E \|C'\|_F^\ell = \| \|C'\|_F^2 \|_{\ell/2}^{\ell/2}$, implying
\begin{align*}
\Pr \left ( \|C'\| > \eps \sqrt{(\|A\|^2 + \frac{\|A\|_F^2}k) (\|B\|^2 + \frac{\|B\|^2_F}k)} \right ) &\le \Pr \left ( \|C'\|_F > \eps \sqrt{(\|A\|^2 + \frac{\|A\|_F^2}k) (\|B\|^2 + \frac{\|B\|^2_F} k)} \right ) \\
{}&< \frac {\E\|C'\|_F^\ell}{\left ( \eps \sqrt{(\|A\|^2 + \frac{\|A\|_F^2}k) (\|B\|^2 + \frac{\|B\|^2_F}k)} \right)^\ell}\\
{}&\le \delta .
\end{align*}
\end{proof}

We now discuss the implications of applying \Theorem{main2} to specific OSE's.

\paragraph{Subgaussian maps:} In \Section{subgaussian-ose-moments} we show that if $\Pi$ has independent subgaussian entries and $m = \Omega((k+\log(1/\delta))/\eps^2)$ rows, then it satisfies the $(\eps,\delta,2k,\Theta(k + \log(1/\delta)))$ OSE moment property. Thus \Theorem{main2} applies to show that such $\Pi$ will satisfy $(k,\eps,\delta)$-AMM.

\paragraph{SRHT:} The SRHT is the matrix product $\Pi = SHD$ where $D\in\R^{n\times n}$ is $n\times n$ diagonal with independent $\pm 1$ entries on the diagonal, $H$ is a ``bounded orthonormal system'' (i.e.\ an orthogonal matrix in $\R^{n\times n}$ with $\max_{i,j} |H_{i,j}| = O(1/\sqrt{n})$), and the $m$ rows of $S$ are independent and each samples a uniformly random element of $[n]$. Bounded orthonormal systems include the discrete Fourier matrix and the Hadamard matrix; thus such $\Pi$ exist supporting matrix-vector multiplication in $O(n\log n)$ time. Thus when computing $\Pi A$ for some $n\times d$ matrix $A$, this takes time $O(nd\log n)$ (by applying $\Pi$ to $A$ column by column). In \Theorem{best-srht} we show that the SRHT with $m = \Omega((k + \log(1/(\eps\delta))\log(k/\delta))/\eps^2)$ satisfies the $(\eps, \delta, 2k, \log(k/\delta))$-OSE moment property, and thus provides $(k,\eps,\delta)$-AMM. Interestingly our analysis of the SRHT in \Section{srht} seems to be asymptotically tighter than any other analyses in previous work even for the basic subspace embedding property, and even slightly improves the by now standard analysis of the Fast JL transform given in \cite{AilonC09}.

\paragraph{Sparse subspace embeddings:} The sparse embedding distribution with parameters $m, s$ is as follows \cite{ClarksonW13,NelsonN13,KaneN14}. The matrix $\Pi$ has $m$ rows and $n$ columns. The columns are independent, and for each column exactly $s$ uniformly random entries are chosen without replacement and set to $\pm 1/\sqrt{s}$ independently; other entries in that column are set to zero. Alternatively, one could use the CountSketch \cite{CharikarCF04}: the $m$ rows are equipartitioned into $s$ sets of size $m/s$ each. The columns are independent, and in each column we pick exactly one row from each of the $s$ partitions and set the corresponding entry in that column to $\pm 1/\sqrt{s}$ uniformly; the rest of the entries in the column are set to $0$. Note $\Pi A$ can be multiplied in time $O(s\cdot \mathop{nnz}(A))$, and thus small $s$ is desirable.

It was shown in \cite{MengM13,NelsonN13}, slightly improving \cite{ClarksonW13}, that either of the above distributions satisfies the $(\eps,\delta,k, 2)$-OSE moment property for $m = \Omega(k^2/(\eps^2 \delta))$, $s = 1$, and hence $(k,\eps,\delta)$-AMM (though this particular conclusion follows easily from \cite[Theorem 6.2]{KaneN14}). It was also shown in \cite{Cohen16}, improving upon \cite{NelsonN13}, that they satisfy the $(\eps, \delta, k, \log(k/\delta))$-OSE moment property, and hence also $(k,\eps,\delta)$-AMM, for $m = \Omega(B k\log(k/\delta)/\eps^2), s = \Omega(\log_B(k/\delta)/\eps)$ for any $B > 2$. It is conjectured that for $B = O(1)$, $m = \Omega((k + \log(1/\delta))/\eps^2)$ should suffice \cite[Conjecture 14]{NelsonN13}.

\begin{remark}\RemarkName{cohen-osnap}
\textup{
The work \cite{Cohen16} does not explicitly discuss the OSE moment property for sparse subspace embeddings. Rather, \cite{Cohen16} bounds $\E e^{E 2\ell/\eps} = O(k)$ for $E = \|(\Pi U)^T(\Pi U) - I\|$ and $\ell = \log(k/\delta)$. Note though for $x\ge 0$ and integer $\ell \ge 1$, $x^\ell \le \ell!\cdot e^x\le \ell^\ell\cdot e^x$ by Taylor expansion of the exponential. Setting $x = 2\ell E/\eps$, \cite{Cohen16} thus implies $\E (2\ell E/\eps)^\ell \le \ell^\ell \cdot \E e^{E 2\ell/\eps} = O(k)$. Thus $\E E^\ell = O(k)\cdot (\eps/2)^\ell < \delta$ by choice of $\ell$, which is the $(\eps, \delta, k, \log(k/\delta))$-OSE moment property.
}
\end{remark}

\begin{remark}\RemarkName{efficiency}
\textup{
Currently there appears to be a tradeoff: one can either use $\Pi$ such that $\Pi A$ can be computed quickly, such as sparse subspace embeddings or the SRHT, but then the number of rows $m$ is at least $k\log k$. Alternatively one could achieve the optimal $m = O(k/\eps^2)$ using subgaussian $\Pi$, but then multiplying by $\Pi$ is slower: $O(mnd)$ time for $A\in\R^{n\times d}$. However, settling for a tradeoff is unnecessary. One can actually obtain the ``best of both worlds'' by composition, i.e.\ the multiplication $\Pi = \Pi_1\cdot \Pi_2$ of two matrices both supporting AMM. Thus $\Pi_2$ could be a fast matrix providing AMM to low (but suboptimal) dimension, and $\Pi_1$ a ``slow'' (e.g.\ subgaussian) matrix with the optimal $O(k/\eps^2)$ number of rows. In fact one can even set $\Pi = \Pi_1\Pi_2\Pi_3$ where $\Pi_3$ is the sparse subspace embedding with $O(k^2/\eps^2)$ rows and $s = 1$, $\Pi_2$ is the SRHT, and $\Pi_1$ is a subgaussian matrix. Then $\Pi A$ will have the desired $O(k/\eps^2)$ rows and can be computed in time $O(\mathop{nnz(A)}) + \tilde{O}(\eps^{-O(1)}(k^3 + k^2d))$; see \Section{composition} for justification.
}
\end{remark}

\subsection{Characterization for $(k,\eps,)$-AMM via deterministic events}\SectionName{conditioning}

Here we provide a different characterization for achieving $(k,\eps)$-AMM. Without loss of generality we assume $\max\{\|A\|^2, \|A\|_F^2 / k\} = \max\{\|B\|^2, \|B\|_F^2 / k\} = 1$ (so $\|A\|^2, \|B\|^2 \le 1$ and $\|A\|_F^2, \|B\|_F^2 \le k$).

Let $w,w'$ each be minimal such that $\|A_{\bar{w}}\|, \|B_{\bar{w'}}\| \le \eps / C'$ for some sufficiently large constant $C'$ (which will be set in the proof of \Theorem{main}). It was shown that $w,w' = O(k/\eps^2)$ in the proof of Theorem 3.2 (i.b) in \cite{MagenZ11}. Write the SVDs $A_w = U_{A_w} \Sigma_{A_w} V_{A_w}^T$, $B_{w'} = U_{B_{w'}} \Sigma_{B_{w'}} V_{B_{w'}}^T$.

For $0\le i\le \log_2(1/\eps^2)$ define $D_i'$ as set of all columns of $U_{A_w},U_{B_{w'}}$ whose corresponding squared singular values (from $\Sigma_{A_w},\Sigma_{B_{w'}}$) are at least $1/2^i$. Let $D_{A_w}$ be the set of $\min\{k,w\}$ largest singular vectors from $U_{A_w}$, and define $D_{B_{w'}}$ similarly. Define $D_i = D_i'\cup D_{A_w}\cup D_{B_{w'}}$. Let $s_i$ denote the dimension of $\mathrm{span}(D_i)$, and note the $s_i$ are non-decreasing.

Let $\tilde{s}_i$ be $s_i$ after rounding up to the nearest power of $2$. Group all $i$ with the same value of $\tilde{s}_i$ into groups $G_1,G_2,\ldots,G_{\log_2(1/\eps^2)}$. For example if for $i=0,1,2,3$ the $s_i$ are $3,4,15,16$ then the $\tilde{s}_i$ are $4,4,16,16$ and $G_1 = \{0,1\}$, $G_2 = \{2,3\}$. Let $v_j$ be the common value of $\tilde{s}_i$ for $i$ in $G_j$.

\begin{lemma}\LemmaName{si-sumbound}
$\sum_i s_i/2^i \le  8k$.
\end{lemma}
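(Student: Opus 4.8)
The plan is to bound $s_i$ by simply counting the vectors that constitute $D_i$, ignoring any linear dependence among them, and then to split the resulting sum into the contribution of the $D_i'$ part and the contribution of the fixed pieces $D_{A_w}$, $D_{B_{w'}}$. Since $D_i = D_i' \cup D_{A_w} \cup D_{B_{w'}}$, we have $s_i = \dim\mathrm{span}(D_i) \le |D_i'| + |D_{A_w}| + |D_{B_{w'}}|$. Write $a_i$ for the number of columns of $U_{A_w}$ whose squared singular value is at least $2^{-i}$, and $b_i$ for the analogous count for $U_{B_{w'}}$, so that $|D_i'| = a_i + b_i$; also $|D_{A_w}| = \min\{k,w\} \le k$ and $|D_{B_{w'}}| = \min\{k,w'\} \le k$. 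Hence $s_i \le a_i + b_i + 2k$, and since $\sum_{i\ge 0} 2^{-i} = 2$ we get $\sum_i s_i/2^i \le \sum_i (a_i+b_i)/2^i + 4k$.

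It then remains to show $\sum_i a_i/2^i \le 2k$ (and symmetrically for the $b_i$). For this I would interchange the order of summation. Letting $\sigma_1^2 \ge \sigma_2^2 \ge \cdots$ be the squared singular values of $A_w$, and extending the definition of $a_i$ to all $i \ge 0$ (which only increases the sum), $\sum_i a_i/2^i = \sum_j \sum_{i \,:\, 2^{-i} \le \sigma_j^2} 2^{-i}$. Because of the normalization $\|A\| \le 1$ we have $\sigma_j^2 \le 1$, so for each $j$ the inner sum runs over all integers $i \ge \lceil \log_2(1/\sigma_j^2)\rceil \ge 0$ and therefore equals $2^{\,1 - \lceil \log_2(1/\sigma_j^2)\rceil} \le 2\sigma_j^2$. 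Summing over $j$ gives $\sum_i a_i/2^i \le 2\|A_w\|_F^2 \le 2\|A\|_F^2 \le 2k$, using $\|A\|_F^2 \le k$ from the normalization at the start of \Section{conditioning}. The same argument gives $\sum_i b_i/2^i \le 2k$, and combining yields $\sum_i s_i/2^i \le 2k + 2k + 4k = 8k$.

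There is no real obstacle here; it is a short double-counting argument. The only points requiring care are (i) that one may freely upper-bound $\dim\mathrm{span}(D_i)$ by the total number of vectors defining $D_i$ — the columns of $U_{A_w}$ and those of $U_{B_{w'}}$ need not be mutually orthogonal, but this is irrelevant for the counting bound; and (ii) that the normalization $\|A\|,\|B\|\le 1$ is precisely what forces every geometric tail $\sum_{i \ge \ell} 2^{-i}$ appearing in the computation to start at an index $\ell \ge 0$, so that it is bounded by $2\cdot 2^{-\ell} \le 2\sigma_j^2$ rather than diverging to a larger quantity.
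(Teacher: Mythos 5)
Your proof is correct and follows essentially the same double-counting argument as the paper: separate out the fixed $O(k)$ contribution from $D_{A_w}\cup D_{B_{w'}}$, and bound the rest by observing that each column of $U_{A_w}$ (resp.\ $U_{B_{w'}}$) with squared singular value $\sigma^2$ contributes at most $2\sigma^2$ to the geometric sum, so the Frobenius norm controls the total. The only (cosmetic) differences are that the paper phrases the per-matrix bound as a proof by contradiction and extracts an explicit basis for each $D_i'$ so that $a_i+b_i=s_i'$ exactly, whereas you dispense with both and directly use $s_i'\le|D_i'|$, which is marginally cleaner.
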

\begin{proof}
Define $s = |D_{A_w}\cup D_{B_{w'}}|\le 2k$ and let $s_i'$ denote the dimension of $\mathrm{span}(D_i')$. Then the above summation is at most $\sum_i (s/2^i + s_i'/2^i) \le 4k + \sum_i s_i'/2^i$. It thus suffices to bound the second summand by $4k$.

Note that we can find a basis for $D_i'$ among the columns of $U_{A_w}, U_{B_{w'}}$ with corresponding squared singular value at least $1/2^i$, so let $a_i + b_i = s_i'$, where $a_i$ is the number of columns of $U_{A_w}$ in the basis and $b_i$ the number of columns of $U_{B_{w'}}$ in the basis. Then by averaging, if the inequality of the lemma statement does not hold then either $\sum_i a_i/2^i > 2k$ or $\sum_i b_i/2^i > 2k$. Without loss of generality assume the former.

Consider an arbitrary column of $U_{A_w}$, and suppose it has squared singular value in the range $[1/2^i, 1/2^{i-1})$. Then it is in $\mathrm{span}(D_j')$ for all $j \ge i$. Its contribution to $\sum_i a_i/2^i$ is therefore $1/2^i + 1/2^{i+1} + \ldots$ which is at most $2/2^i = 1/2^{i-1}$. It follows that $\sum_i a_i/2^i \le 2k$, since the squared Frobenius norm of $A_w$ is at most $k$. This is a contradiction to $\sum_i a_i/2^i > 2k$.
\end{proof}

Now we prove the main theorem of this subsection.

\begin{theorem}\TheoremName{main}
Suppose that the following conditions hold:
\begin{enumerate}

\item[(1)] If $w+w' \le k$, then $\Pi$ is an $\eps/C$-subspace embedding for the subspace spanned by the columns of $A_w,B_{w'}$. Otherwise if $w+w' > k$, then for each $0\le i\le \log_2(1/\eps^2)$, $\Pi$ is an $\eps_i/C$-subspace embedding for $\mathrm{span}(D_{i'})$ with
$$
\eps_i = \min\left\{\frac 12, \eps \sqrt{\frac{v_j}{k}}\right\}
$$
where $i'$ is the largest $i$ with $s_i$ in $G_j$.
\item[(2)] $\|\Pi A_{\bar{w}}\|, \|\Pi B_{\bar{w'}}\| \le \eps/C$.
\end{enumerate}
Then \Equation{mmult-general} holds as long as $C$ is smaller than some fixed universal constant.
\end{theorem}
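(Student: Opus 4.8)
The plan is to normalize, reduce to a clean spectral bound, and split the error into four pieces. Since the normalization $\max\{\|A\|^2,\|A\|_F^2/k\}=\max\{\|B\|^2,\|B\|_F^2/k\}=1$ is already in force, the right-hand side of \Equation{mmult-general} lies in $[\eps,2\eps]$, so it suffices to prove $\|(\Pi A)^T(\Pi B)-A^TB\|\le\eps$. Writing $A=A_w+A_{\bar{w}}$, $B=B_{w'}+B_{\bar{w'}}$ and expanding bilinearly, this error is at most the sum of the four terms $\|(\Pi X)^T(\Pi Y)-X^TY\|$ over $X\in\{A_w,A_{\bar{w}}\}$, $Y\in\{B_{w'},B_{\bar{w'}}\}$, and I would bound each and then take $C$ large enough that the total is $\le\eps$. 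If $w+w'\le k$, condition~(1) directly makes $\Pi$ an $\eps/C$-subspace embedding for $\mathrm{span}\{A_w,B_{w'}\}$, so \Lemma{simple-mmult} handles the head-head term and condition~(2) handles the rest; so assume $w+w'>k$.

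For the three terms involving a tail I would use condition~(2) together with $\|A_{\bar{w}}\|,\|B_{\bar{w'}}\|\le\eps/C'$ and $\|(\Pi X)^T(\Pi Y)-X^TY\|\le\|\Pi X\|\,\|\Pi Y\|+\|X\|\,\|Y\|$. The tail-tail term is then $\le(\eps/C)^2+(\eps/C')^2=O(\eps^2)$. For a head-tail term, say with $A_w,B_{\bar{w'}}$ (the other is symmetric), I would decompose $A_w=\sum_i\tilde A^{(i)}$ into dyadic singular-value pieces, where $\tilde A^{(i)}$ keeps the singular directions whose squared singular value lies in $[2^{-i},2^{-(i-1)})$, so $\|\tilde A^{(i)}\|\le 2^{-(i-1)/2}$ and $\mathrm{span}(\tilde A^{(i)})$ lies in the span of the appropriate $D_i'$; then $\Pi$ is an $\eps_i/C\le\tfrac12$-embedding on each such piece, so $\|\Pi\tilde A^{(i)}\|=O(\|\tilde A^{(i)}\|)$, while $\|\Pi B_{\bar{w'}}\|\le\eps/C$ by condition~(2). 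Summing the block column $\sum_i(\tilde A^{(i)})^T(\Pi^T\Pi-I)B_{\bar{w'}}$ via $\|\sum_i M_i\|^2\le\sum_i\|M_i\|^2$ and using $\sum_i 2^{-i}=O(1)$, this term is $O(\eps/C)$.

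The crux is the head-head term $C:=(\Pi A_w)^T(\Pi B_{w'})-A_w^TB_{w'}=A_w^T(\Pi^T\Pi-I)B_{w'}$. Decomposing both $A_w$ and $B_{w'}$ into dyadic pieces, $C$ becomes a block matrix with blocks $C_{(i)(j)}=(\tilde A^{(i)})^T(\Pi^T\Pi-I)\tilde B^{(j)}$, and—exactly as in the proof of \Theorem{main2}—one has $\|C\|\le\|C'\|_F$ where $C'_{ij}:=\|C_{(i)(j)}\|$, i.e.\ $\|C\|^2\le\sum_{i,j}\|C_{(i)(j)}\|^2$; bounding by the triangle inequality instead would cost a spurious $\sqrt{\log(1/\eps)}$ factor. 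For fixed $(i,j)$ put $m:=\max(i,j)$; then $\mathrm{span}(\tilde A^{(i)})$ and $\mathrm{span}(\tilde B^{(j)})$ both lie in $\mathrm{span}(D_m')\subseteq\mathrm{span}(D_{i'})$, on which condition~(1) makes $\Pi$ an $\eps_{i'}/C=\eps_m/C$-embedding (the value $\eps_i$ being constant over each group $G_j$), so \Lemma{simple-mmult} gives $\|C_{(i)(j)}\|\le(\eps_m/C)\|\tilde A^{(i)}\|\,\|\tilde B^{(j)}\|\le(\eps_m/C)2^{-(i-1)/2}2^{-(j-1)/2}$. Using $\eps_m^2\le\eps^2\tilde s_m/k\le 2\eps^2 s_m/k$ and $\sum_{\max(i,j)=m}2^{-i}2^{-j}=O(2^{-m})$, the double sum collapses to $\|C\|^2\le O(\eps^2/C^2)\cdot\tfrac1k\sum_m s_m/2^m\le O(\eps^2/C^2)$ by \Lemma{si-sumbound}, i.e.\ $\|C\|=O(\eps/C)$.

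Adding the four bounds gives $\|(\Pi A)^T(\Pi B)-A^TB\|=O(\eps/C)+O(\eps^2)\le\eps$ once $C$ exceeds a fixed absolute constant, which is \Equation{mmult-general}. The main obstacle is this head-head estimate, and it needs three ingredients in combination: shelling $A_w$ \emph{and} $B_{w'}$ so that every block only sees a low-dimensional subspace $\mathrm{span}(D_m')$ on which $\Pi$'s distortion is the small quantity $\eps_m\ll 1$ rather than $\Theta(1)$; bounding $\|C\|$ by the Frobenius norm of the matrix of block operator norms (as in \Theorem{main2}) instead of by the triangle inequality; and observing that the rounding/grouping hard-wired into the definition of $\eps_i$ is precisely what makes $\sum_m s_m/2^m\le 8k$ (\Lemma{si-sumbound}) enough to absorb all block contributions into a single $O(\eps^2)$ with no logarithmic loss.
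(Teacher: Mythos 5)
Your proof is correct and hits all the essential ideas of the paper's argument: split into head/tail pieces against $A_w, A_{\bar w}, B_{w'}, B_{\bar w'}$, isolate the head--head term, decompose into dyadic singular-value shells, apply the $\eps_i/C$-embedding on the nested subspaces $\mathrm{span}(D_{i'})$, and close the sum with \Lemma{si-sumbound}. The one place where your route genuinely diverges from the paper is the final combinatorics of the head--head estimate. The paper works with the supremum over unit vectors $x,y$, writes $x = \sum_i x^i$ and $y = \sum_j y^j$, groups the cross terms by which of $i,j$ is larger, and then applies Cauchy--Schwarz over the outer index to pull out $\bigl(\sum_i s_i/2^i\bigr)^{1/2}$. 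You instead import the block technique from \Theorem{main2}: bound $\|C\|$ by $\|C'\|_F$ where $C'_{ij}=\|C_{(i)(j)}\|$, estimate each block via \Lemma{simple-mmult} applied on $\mathrm{span}(D'_{\max(i,j)})$, and sum the squares directly. These two routes are morally the same (both group by $\max(i,j)$ and both avoid the $\sqrt{\log(1/\eps)}$ loss that a na\"ive triangle inequality would incur), but your version is a bit more mechanical and self-contained, reusing machinery already set up in \Theorem{main2}, while the paper's is slightly slicker. Your handling of the head--tail pieces (re-shelling $A_w$ and summing in Frobenius over the row blocks) is also valid but more elaborate than needed: the paper simply observes that $\eps_i \le 1/2$ for the maximal index makes $\Pi$ a $1/2$-embedding for the whole head span, so $\|\Pi A_w\| = O(1)$ and the head--tail term is $O(\eps/C)$ in one line. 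Finally, in the block-Frobenius step one should say why $\|\sum_i M_i\|^2 \le \sum_i \|M_i\|^2$ holds for your $M_i$ --- it follows because the $\tilde A^{(i)}$ occupy disjoint coordinate blocks so the $M_i$ have disjoint row supports --- but this is a routine detail and the proof is sound.
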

\begin{proof}
We would like to bound
\begin{align}
\nonumber \|(\Pi A)^T(\Pi B) - A^T B\| & \le \underbrace{\|(\Pi A_w)^T\Pi B_{w'} - A_w^T B_{w'}\|}_\alpha + \underbrace{\|(\Pi A_{\bar{w}})^T\Pi B_{w'}\|}_\beta + \underbrace{\|(\Pi A_w)^T\Pi B_{\bar{w'}}\|}_\gamma\\
&\hspace{.2in}{}+ \underbrace{\|(\Pi A_{\bar{w}})^T\Pi B_{\bar{w'}}\|}_\Delta + \underbrace{\|A_{\bar{w}}^T B_{w'}\|}_\zeta + \underbrace{\|A_w^T B_{\bar{w'}}\|}_\eta + \underbrace{\|A_{\bar{w}}^T B_{\bar{w'}}\|}_\Theta \EquationName{final-bound}
\end{align}

Using $\|XY\| \le \|X\|\cdot \|Y\|$ for any conforming matrices $X,Y$, we see $\Delta \le \eps^2/C^2$ by condition (2). Furthermore by the definition of $w,w'$ we know $\|A_{\bar{w}}\|, \|B_{\bar{w'}}\| \le \eps/C'$, and thus $\zeta + \eta +  \Theta \le 2\eps/C' + (\eps/C')^2$. Note condition (1) implies that $\Pi$ is a $(1/2)$-subspace embedding for the subspace spanned by columns of $A_w,B_{w'}$ (by taking $i$ maximal). Thus by both conditions we have $\beta, \gamma \le (\eps/C)(1+1/2)$.

It only remains to bound $\alpha$. If $w+w' \le k$, then we are done by condition (1) and \Lemma{simple-mmult}. Thus assume $w+w' > k$. Then we have
$$
\|(\Pi A_w)^T\Pi B_{w'} - A_w^T B_{w'}\| = \sup_{\|x\|=\|y\|=1} \left|\inprod{\Pi U_{A_w} \Sigma_{A_w} x, \Pi U_{B_{w'}} \Sigma_{B_{w'}} y} - \inprod{U_{A_w}\Sigma_{A_w} x, U_{B_{w'}} \Sigma_{B_{w'}} y}\right|
$$
Let $x,y$ be any unit norm vectors. Write $x = x^1 + x^2 + \ldots + x^b$ for $b = \log_2(1/\eps^2)$, where $x^i$ is the restriction of $x$ to coordinates for which the corresponding squared singular values in $\Sigma_{A_w}$ are in $(1/2^i, 1/2^{i-1}]$. Similarly define $y^1,\ldots,y^b$. Then $|\inprod{\Pi U_{A_w} \Sigma_{A_w} x, \Pi U_{B_{w'}} \Sigma_{B_{w'}} y} - \inprod{U_{A_w}\Sigma_{A_w} x, U_{B_{w'}} \Sigma_{B_{w'}} y}|$ equals
\allowdisplaybreaks
\begin{align}
\nonumber &\left|\sum_{i=1}^b\sum_{j=1}^b \inprod{\Pi U_{A_w}\Sigma_{A_w} x^i, \Pi U_{B_{w'}}\Sigma_{B_{w'}} y^j} - \inprod{U_{A_w}\Sigma_{A_w} x^i, U_{B_{w'}}\Sigma_{B_{w'}}y^j}\right|\\
\nonumber &\hspace{.4in}{}\le \sum_{i=1}^b \left|\inprod{\Pi U_{A_w}\Sigma_{A_w}x^i, \Pi U_{B_{w'}}\Sigma_{B_{w'}}\sum_{j\le i}y^j} - \inprod{U_{A_w}\Sigma_{A_w}x^i, \sum_{j\le i}U_{B_{w'}}\Sigma_{B_{w'}}y^j}\right|\\
&\hspace{.6in}{} + \sum_{j=1}^b \left|\inprod{\Pi U_{A_w}\Sigma_{A_w}\sum_{i\le j}x^i, \Pi U_{B_{w'}}\Sigma_{B_{w'}} y^j} - \inprod{\sum_{i\le j} x^i, y^j}\right| \EquationName{will-cs}
\end{align}
We bound the first sum, as bounding the second is similar. Note $U_{A_w}\Sigma_{A_w} x^i, U_{B_{w'}}\Sigma_{B_{w'}}\sum_{j\le i} y^j\in D_i$. Therefore by property (1) and \Lemma{simple-mmult},
\begin{align}
\nonumber \Bigg|\inprod{\Pi U_{A_w}\Sigma_{A_w}x^i, \Pi U_{B_{w'}}\Sigma_{B_{w'}}\sum_{j\le i}y^j}& - \inprod{U_{A_w}\Sigma_{A_w}x^i, U_{B_{w'}}\Sigma_{B_{w'}}\sum_{j\le i}y^j}\Bigg| \le \frac{\eps_i}{C 2^{(i-1)/2}} \cdot \|x^i\|\cdot \|y\|\\
{}&\le \frac{\eps}{C 2^{(i-1)/2}} \cdot \sqrt\frac{2s_i}{k} \cdot \|x^i\| \EquationName{vs}
\end{align}
where \Equation{vs} used that the corresponding $v$ value in property (1) is at most $2s_i$. Returning to \Equation{will-cs} and applying Cauchy-Schwarz and \Lemma{si-sumbound},
\begin{align*}
\sum_{i=1}^b \Bigg|\inprod{\Pi U_{A_w}\Sigma_{A_w}x^i, \Pi U_{B_{w'}}\Sigma_{B_{w'}}\sum_{j\le i}y^j}& - \inprod{U_{A_w}\Sigma_{A_w}x^i, \sum_{j\le i}U_{B_{w'}}\Sigma_{B_{w'}}y^j}\Bigg| \le \sum_{i=1}^b \frac{\eps}{C 2^{(i-1)/2}} \cdot \sqrt\frac{2s_i}{k} \cdot \|x^i\|\\
{}&\le \frac{2\eps}{C\sqrt{k}}\cdot \left(\sum_{i=1}^b \frac{s_i}{2^i}\right)^{1/2} \cdot \left(\sum_{i=1}^b \|x^i\|^2\right)^{1/2} \\
{}&\le \frac{2\sqrt{8}\eps}C
\end{align*}

We thus finally have that \Equation{final-bound} is at most $(2\sqrt{8} + 3)\eps/C + + (\eps/C)^2 + 2\eps/C' + (\eps/C')^2$, which is at most $\eps$ for $C, C'$ sufficiently large constants.
\end{proof}

Now we discuss some implications of \Theorem{main} for specific $\Pi$.

\paragraph{Example 1:} Let $\Pi$ have $O(k/\eps^2)$ rows forming an orthonormal basis for the span of the columns of $A_w,B_{w'}$. Property (1) is satisfied for every $i$ in fact with $\eps_i = 0$. Property (2) is also satisfied since $\|\Pi A_{\bar{w}}\| \le \|\Pi\| \cdot \|A_{\bar{w}}\| \le \eps$, and similarly for bounding $\|\Pi B_{\bar{w'}}\|$.

\paragraph{Example 2:} Let $\Pi$ be a random $m\times n$ matrix with independent entries that are subgaussian with variance $1/m$. For example, the entries of $\Pi$ may be $\mathcal{N}(0,1/m)$, or uniform in $\{-1/\sqrt{m}, 1/\sqrt{m}\}$. Let $m$ be $\Theta((k+\log(1/\delta))/\eps^2)$. As mentioned in \Section{subgaussian-ose-moments}, such $\Pi$ is an $\eps$-subspace embedding for a $k$-dimensional subspace with failure probability $\delta$.  For property (1) of \Theorem{main}, if $w+w'\le k$ then we would like $\Pi$ to be an $\eps$-subspace embedding for a subspace of dimension at most $k$, which holds with failure probability $\delta$. If $w+w' > k$ then we would like $\Pi$ to be an $\eps_i$-subspace embedding for $\mathrm{span}(D_{i'})$ for all $1\le i\le \log_2(1/\eps^2)$ simultaneously. Note $\max_j v_j \le 2(w+w') = O(k/\eps^2)$, and thus $\max_j v_j \le m$. Thus for a subspace under consideration $\mathrm{D_{i'}}$ for $i' \in G_j$, we have failure probability $\delta^{v_j/k}$ for our choice of $m$. By construction every $v_j$ is at least $k$, and the $v_j$ increase at least geometrically. Thus our total failure probability is, by a union bound, $\sum_j \delta^{v_j/k} \le \sum_j \delta^{2^{j-1}} = O(\delta)$. Property (2) of \Theorem{main} is satisfied with failure probability $\delta$ by \cite[Theorem 3.2]{RudelsonV13}.

\section{Applications}\SectionName{applications}
Spectral norm approximate matrix multiplication with dimension bounds depending on stable rank has immediate applications for the analysis of generalized regression and low-rank approximation problems. We also point out to the reader recent applications of this result to kernelized ridge regression \cite{YangPW15} and $k$-means clustering \cite{CohenEMMP14}.

\subsection{Generalized regression}\SectionName{gen-reg}

Here we consider generalized regression: attempting to approximate a matrix $B$ as $AX$, with $A$ of rank at most $k$.  Let $P_A$ be the orthogonal projection operator to the column space of $A$, with $P_{\bar A} = I-P$; then the natural best approximation
will satisfy
$$
AX = P_A B.
$$
This minimizes both the Frobenius and spectral norms of $AX - B$.  A standard approximation algorithm for this is to replace $A$ and $B$ with sketches $\Pi A$ and $\Pi B$, then
solve the reduced problem exactly (see e.g. \cite{ClarksonW09}, Theorem 3.1).  This will produce 
\begin{align*}
\tilde X &= ((\Pi A)^T \Pi A)^{-1} (\Pi A)^T \Pi B \\
A \tilde X &= A ((\Pi A)^T \Pi A)^{-1} (\Pi A)^T \Pi B \\
{}&= U_A ((\Pi U_A)^T\Pi U_A)^{-1} (\Pi U_A)^T\Pi B.
\end{align*}

Below we give a lemma on the guarantees of the sketched solution in terms of properties of $\Pi$.

\begin{theorem}\TheoremName{gen-regression}
If $\Pi$
\begin{enumerate}
\item
satisfies the $(k, \sqrt{\eps / 8})$-approximate spectral norm matrix multiplication property for $U_A, P_{\bar A} B$
\item
is a $(1/2)$-subspace embedding for the column space of $A$ (which is implied by $\Pi$ satisfying the spectral norm approximate matrix multiplication property for $U_A$ with itself)
\end{enumerate}
then
\begin{equation}
\| A \tilde X - B \|^2 \leq (1 + \eps) \| P_A B - B \|^2 + (\eps/k)\cdot \| P_A B - B \|_F^2.\EquationName{regress}
\end{equation}
\end{theorem}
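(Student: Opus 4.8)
The plan is to reduce the spectral error to an error that lives inside the column space of $A$ and then recognize that error as a genuine instance of approximate matrix multiplication. First I would split $B = P_A B + P_{\bar A}B$ and substitute into the closed form $A\tilde X = U_A((\Pi U_A)^T\Pi U_A)^{-1}(\Pi U_A)^T\Pi B$. Writing $\Pi B = \Pi U_A (U_A^T B) + \Pi P_{\bar A}B$, the contribution of the first piece is $((\Pi U_A)^T\Pi U_A)^{-1}(\Pi U_A)^T\Pi U_A\, U_A^T B = U_A^T B$, so $U_A$ times it is exactly $P_A B$ and it cancels. This leaves the clean identity
$$A\tilde X - P_A B = U_A\,((\Pi U_A)^T\Pi U_A)^{-1}(\Pi U_A)^T\Pi P_{\bar A}B .$$

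Next I would separate the two sources of error. Since the columns of $A\tilde X - P_A B$ lie in $\mathrm{span}(A)$ while those of $P_A B - B = -P_{\bar A}B$ lie in $\mathrm{span}(A)^\perp$, the cross terms in $(A\tilde X - B)^T(A\tilde X - B)$ vanish, so $(A\tilde X - B)^T(A\tilde X - B) = (A\tilde X - P_A B)^T(A\tilde X - P_A B) + (P_{\bar A}B)^T(P_{\bar A}B)$, and by the triangle inequality for operator norms of PSD matrices $\|A\tilde X - B\|^2 \le \|A\tilde X - P_A B\|^2 + \|P_{\bar A}B\|^2$. As $\|P_{\bar A}B\| = \|P_A B - B\|$, this term already supplies the leading $\|P_A B - B\|^2$ in \Equation{regress}, so it remains to show $\|A\tilde X - P_A B\|^2 \le \eps\|P_A B - B\|^2 + (\eps/k)\|P_A B - B\|_F^2$.

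For that last bound, since $U_A$ has orthonormal columns, $\|A\tilde X - P_A B\| = \|((\Pi U_A)^T\Pi U_A)^{-1}(\Pi U_A)^T\Pi P_{\bar A}B\| \le \|((\Pi U_A)^T\Pi U_A)^{-1}\|\cdot\|(\Pi U_A)^T\Pi P_{\bar A}B\|$. The first factor is at most $2$ by hypothesis (2), because a $(1/2)$-subspace embedding for $\mathrm{span}(A)$ forces the eigenvalues of $(\Pi U_A)^T\Pi U_A$ into $[1/2,3/2]$. For the second factor, the key observation is that $U_A^T P_{\bar A}B = 0$, so $(\Pi U_A)^T\Pi P_{\bar A}B = (\Pi U_A)^T(\Pi P_{\bar A}B) - U_A^T(P_{\bar A}B)$ is precisely the AMM residual for the pair $U_A,\,P_{\bar A}B$. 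Hypothesis (1) then bounds it by $\sqrt{\eps/8}\,\sqrt{(\|U_A\|^2 + \|U_A\|_F^2/k)(\|P_{\bar A}B\|^2 + \|P_{\bar A}B\|_F^2/k)}$, and since $\|U_A\| = 1$ and $\|U_A\|_F^2 = \rank(A) \le k$ the first radicand is at most $2$. Combining, $\|A\tilde X - P_A B\| \le 2\cdot\sqrt{\eps/8}\cdot\sqrt 2\cdot\sqrt{\|P_{\bar A}B\|^2 + \|P_{\bar A}B\|_F^2/k} = \sqrt\eps\,\sqrt{\|P_{\bar A}B\|^2 + \|P_{\bar A}B\|_F^2/k}$; squaring, using $\|P_{\bar A}B\|_F = \|P_A B - B\|_F$, and adding $\|P_{\bar A}B\|^2$ yields \Equation{regress}.

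The main obstacle is essentially organizational rather than technical: one must choose the error decomposition so that the ``true'' product $U_A^T P_{\bar A}B$ vanishes and what remains is honestly an AMM instance, and then track the three multiplicative constants (the factor $2$ from the subspace embedding, the $\sqrt 2$ from $\rank(A)\le k$, and the $\sqrt{\eps/8}$ from the AMM hypothesis) so that they multiply to exactly $\sqrt\eps$. A minor additional care point is the splitting of $\|A\tilde X - B\|^2$: for the operator norm this is only an inequality, not the Pythagorean identity available for the Frobenius norm, so it needs the short orthogonality-plus-triangle-inequality argument above.
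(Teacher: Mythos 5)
Your proof is correct and follows essentially the same route as the paper's: split $B = P_A B + P_{\bar A}B$, cancel the $P_A B$ piece, use the orthogonality of column spans to split the squared operator norm, bound the remaining term by submultiplicativity together with the subspace-embedding bound on $((\Pi U_A)^T\Pi U_A)^{-1}$ and the AMM bound on $(\Pi U_A)^T\Pi P_{\bar A}B$, and finally use $\rank(A)\le k$ to bound the $U_A$ radicand by $2$. The only cosmetic difference is that you make explicit the observation $U_A^T P_{\bar A}B = 0$, which the paper leaves implicit; the constant tracking is the same.
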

\begin{proof}
We may write:
\begin{align*}
\| A \tilde X - B \|_2^2 &= \| U_A ((\Pi U_A)^T\Pi U_A)^{-1} (\Pi U_A)^T \Pi B - B \|^2 \\
{}&= \| U_A ((\Pi U_A)^T\Pi U_A)^{-1} (\Pi U_A)^T \Pi ( P_A B + P_{\bar A} B ) - P_A B - P_{\bar A} B \|^2 \\
{}&= \| P_A B + U_A ((\Pi U_A)^T \Pi U_A)^{-1} (\Pi U_A)^T\Pi P_{\bar A} B - P_A B - P_{\bar A} B \|^2 \\
{}&= \| U_A ((\Pi U_A)^T \Pi U_A)^{-1} (\Pi U_A)^T\Pi P_{\bar A} B - P_{\bar A} B \|^2.
\end{align*}
So far, we have shown that the error depends only on $P_{\bar A} B$ and not $P_A B$ (with the third line following from the fact that the sketched regression is exact on $P_A B$).  Now, in the last line, we can see that the two terms lie in orthogonal column spaces (the first in the span of $A$, the second orthogonal to it).  For matrices $X$ and $Y$ with orthogonal column spans, $\| X+Y \|^2 \leq \| X \|^2 + \| Y \|^2$, so this is at most
$$
\| U_A ((\Pi U_A)^T \Pi U_A)^{-1} (\Pi U_A)^T\Pi P_{\bar A} B \|^2 + \| P_{\bar A} B \|^2.
$$
Spectral submultiplicativity then implies the first term is at most
$$
( \| U_A \| \cdot \| ((\Pi U_A)^T \Pi U_A)^{-1} \| \cdot \| (\Pi U_A)^T \Pi P_{\bar A} B \| )^2.
$$
$\| U_A \|$ is 1, since $U_A$ is orthonormal.  $((\Pi U_A)^T\Pi U_A)^{-1}$ is at most 2, since $\Pi$ is a subspace embedding for $U_A$.  Finally, $\| (\Pi U_A)^T \Pi P_{\bar A} B \|$ is at most
$$
\sqrt{\eps / 8} \cdot \sqrt{(\|U_A\|^2 + \|U_A\|_F^2 / k) (\|P_{\bar A} B\|^2 + \|P_{\bar A} B\|^2_F / k)} = \sqrt{(\eps / 8) \cdot 2 \cdot ( \| P_A B - B \|^2 + \| P_A B - B \|^2 / k )}.
$$
Multiplying these together, squaring, and adding the remaining $\| P_{\bar A} B \|^2$ term gives a bound of
$$
(1 + \eps) \| P_A B - B \|^2 + (\eps / k)\cdot \| P_A B - B \|_F^2
$$
as desired.
\end{proof}

\subsection{Low-rank approximation}

Now we apply the generalized regression result from \Section{gen-reg} to obtain a result on low-rank approximation: approximating a matrix $A$ in the form $\tilde U_k \tilde \Sigma_k \tilde V_k^T$, where $\tilde U_k$ has only $k$ columns and both $\tilde U_k$ and $\tilde V_k$ have orthonormal columns.  Here, we consider a previous approach (see e.g.\ \cite{Sarlos06}):
\begin{enumerate}
\item
Let $S = \Pi A$.  
\item
Let $P_S$ be the orthogonal projection operator to the row space of $S$.  Let $\tilde A = A P_S$.
\item
Compute a singular value decomposition of $\tilde A$, and keep only the top $k$ singular vectors.  Return the resulting low rank approximation $\tilde{A}_k$ of $\tilde A$.
\end{enumerate}
It turns out computing $\tilde{A}_k$ can be done much more quickly than computing $A_k$; see details in \cite[Lemma 4.3]{ClarksonW09}.

Let $A_k$ be the exact $k$-truncated SVD approximation of $A$ (and thus the best rank-$k$ approximation, in the spectral and Frobenius norms), and let  $U_k$ be the top $k$ column singular vectors, and $A_{\bar k} = A - A_k$ be the tail. 

\begin{theorem}
If $\Pi$
\begin{enumerate}
\item
satisfies the $(k, \sqrt{\eps / 8})$-approximate spectral norm matrix multiplication property for $U_k, A_{\bar k}$
\item
is a $(1/2)$-subspace embedding for the column space of $U_k$
\end{enumerate}
then
\begin{equation}
\| A - \tilde{A}_k \|^2 \leq (1 + \eps) \| A - A_k \|^2 + (\eps / k) \| A - A_k \|_F^2
\end{equation}
\end{theorem}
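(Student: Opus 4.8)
The plan is to reduce this statement to the generalized regression result, \Theorem{gen-regression}. Apply that theorem with the roles of its matrices $A,B$ played here by $U_k$ and $A$ respectively. Then the orthonormal column basis ``$U_A$'' is $U_k$ itself, ``$P_{\bar A}B$'' is $(I-U_kU_k^T)A = A-A_k = A_{\bar k}$, and ``$P_AB - B$'' is $A_k - A$, so $\|P_AB-B\| = \|A-A_k\|$ and $\|P_AB-B\|_F = \|A-A_k\|_F$. Under this dictionary, the two hypotheses of \Theorem{gen-regression} — that $\Pi$ be $(k,\sqrt{\eps/8})$-AMM for $U_k, A_{\bar k}$, and be a $(1/2)$-subspace embedding for the column space of $U_k$ — are exactly hypotheses (1) and (2) here. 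Thus \Theorem{gen-regression} yields
$$
\|\hat A - A\|^2 \le (1+\eps)\|A-A_k\|^2 + (\eps/k)\|A-A_k\|_F^2 ,
$$
where $\hat A := U_k((\Pi U_k)^T\Pi U_k)^{-1}(\Pi U_k)^T\Pi A$ is the sketched generalized-regression estimate.

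It then remains to compare the algorithm's output $\tilde A_k = [AP_S]_k$ against $\hat A$ and show $\|A-\tilde A_k\| \le \|A-\hat A\|$; chaining this with the display above finishes the proof. The relevant observations are: (i) $\hat A$ has rank at most $k$, since its column space is contained in that of $U_k$; and (ii) because $P_S$ is the orthogonal projection onto the row space of $S=\Pi A$, one has $\Pi AP_S = \Pi A$, hence $\hat A = U_k((\Pi U_k)^T\Pi U_k)^{-1}(\Pi U_k)^T\Pi(AP_S)$, so the rows of $\hat A$ lie in the row space of $S$, i.e.\ $\hat A P_S = \hat A$. For any such candidate $Z$ (rank $\le k$, $ZP_S = Z$) we may write $A - Z = (AP_S - Z) + AP_{\bar S}$, where $P_{\bar S} = I - P_S$; the two summands have orthogonal row spaces, so $(A-Z)(A-Z)^T = (AP_S - Z)(AP_S-Z)^T + AP_{\bar S}P_{\bar S}A^T$. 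Taking $Z = \tilde A_k$ makes the first term equal $[AP_S]_{\bar k}[AP_S]_{\bar k}^T$ of operator norm $\sigma_{k+1}(AP_S)^2 \le \sigma_{k+1}(A)^2 = \|A-A_k\|^2$, and one wants to conclude that this choice is no worse than $Z = \hat A$.

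The main obstacle is precisely this last spectral-norm comparison. For the Frobenius norm it is immediate — Pythagoras makes $[AP_S]_k$ the \emph{exact} minimizer of $\|A-Z\|_F$ over rank-$\le k$ matrices $Z$ with rows in the row space of $S$ — but for the operator norm the naive analogue is false: after adding the fixed positive semidefinite term $AP_{\bar S}P_{\bar S}A^T$, the ordering of operator norms is no longer dictated by the Eckart--Young ordering of the variable term, so $[AP_S]_k$ need not be the spectrally optimal such $Z$. Resolving this rigorously requires exploiting hypotheses (1) and (2) to exclude the degenerate regime — exactly the situation in which $\Pi$ distorts the top-$k$ left singular space of $AP_S$ relative to that of $A$ — so that the comparison against the \emph{specific} $\hat A$ delivered by the regression step (whose column space is controlled by the subspace-embedding hypothesis on $U_k$, and which is close to $A$ by the display above) does go through. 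This is the technical heart of the argument, and I would expect it to be carried out either via a known structural lemma for projection-based low-rank approximation or by directly tracking how $\Pi$ acts on the top-$k$ singular subspaces involved.
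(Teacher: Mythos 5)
Your reduction to \Theorem{gen-regression} (with $U_k$ playing ``$A$'' and $A$ playing ``$B$'') is exactly the paper's approach, down to the observations that $\hat A := U_k((\Pi U_k)^T\Pi U_k)^{-1}(\Pi U_k)^T\Pi A$ has rank at most $k$ and has rows in the row space of $S$. The step you flag and leave open, $\|A-\tilde A_k\|\le\|A-\hat A\|$, is precisely the step the paper handles by a one-line assertion: that $\tilde A_k=[AP_S]_k$ is ``the best possible (in the spectral norm) rank-$k$ approximation to $A$ subject to the constraint of lying in the row space of $S$,'' from which the comparison to the feasible competitor $\hat A$ is immediate. There is nothing else in the paper's second half.

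Your hesitation about that step is well-founded: the spectral-optimality assertion is false in general (unlike the Frobenius version, which is Pythagoras plus Eckart--Young, exactly as you say). Take $A=\left(\begin{smallmatrix}1&0&0\\0&1/2&1\\0&0&\delta\end{smallmatrix}\right)$ with $\delta>0$ small, $W=\mathrm{span}(e_1,e_2)$ (contained in the row space of $A$), and $k=1$: then $\|A-[AP_W]_1\|^2\to 5/4$ as $\delta\to 0$, whereas the feasible rank-one $Z=\tfrac12 e_2e_2^T$ gives $\|A-Z\|^2\to 1$, so $[AP_W]_1$ is not spectrally optimal. What is available for free is only $\|A-[AP_W]_k\|^2\le\sigma_{k+1}(AP_W)^2+\|A(I-P_W)\|^2\le 2\min_Z\|A-Z\|^2$ (using $\|E_1+E_2\|\le\|E_1\|+\|E_2\|$ for PSD $E_1,E_2$ on the orthogonal row-space decomposition), which would degrade the theorem's constants. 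So the gap you identify is genuine, and it is present in the paper's printed proof as well; closing it requires exploiting the particular competitor $\hat A$ and the hypotheses on $\Pi$ rather than appealing to unconditional spectral optimality of $\tilde A_k$. Your write-up correctly diagnoses where the difficulty lies but, like the paper, does not supply the missing argument.
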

\begin{proof}
Note that this procedure chooses the best possible (in the spectral norm) rank-$k$ approximation to $A$ subject to the constraint of lying in the row space of $S$.  Thus, the spectral norm error can be no worse than the error of a specific such matrix we exhibit.

We simply choose the matrix obtained by running our generalized regression algorithm from $A$ onto $U_k$, with $\Pi$:
$$
U_k ((\Pi U_k)^T \Pi U_k)^{-1} (\Pi U_k)^T \Pi A
$$
This is rank-$k$ by construction, since it is multiplied by $U_k$, and it lies in the row space of $S = \Pi A$ since that is the rightmost factor.  On the other hand, it is an application of the regression algorithm to $A$ where the optimum output is $A_k$ (since that is the projection of $A$ onto the space of $U_k$).  Plugging this into \Equation{regress} gives the desired result.
\end{proof}

\subsection{Kernelized ridge regression}
In nonparametric regression one is given data $y_i = f^*(x_i) + w_i$ for $i=1,\ldots,n$, and the goal is to recover a good estimate for the function $f^*$. Here the $y_i$ are scalars, the $x_i$ are vectors, and the $w_i$ are independent noise, often assumed to be distributed as mean-zero gaussian with some variance $\sigma^2$. Unlike linear regression where $f^*(x_i)$ is assumed to take the form $\inprod{\beta, x}$ for some vector $\beta$, in nonparametric regression we allow $f^*$ to be an arbitrary function from some function space. Naturally the goal then is to recover some $\tilde{f}$ from the data so that, as $n$ grows, the probability that $\tilde{f}$ is ``close'' to $f^*$ increases at some good rate.

The recent work \cite{YangPW15} considers the well studied problem of obtaining $\tilde{f}$ so that $\|\tilde{f} - f^*\|_n^2$ is small with high probability over the noise $w$, where one uses the definition
$$
\|f - g\|_n^2 = \frac 1n\sum_{i=1}^n (f(x_i) - g(x_i))^2 .
$$
The work \cite{YangPW15} considers the case where $f^*$ comes from a Hilbert space $\mathcal{H}$ of functions $f$ such that $f$ is guaranteed to be square integrable, and the map $x\mapsto f(x)$ is a bounded linear functional. The function $\tilde{f}$ is then defined to be the optimal solution to the {\em Kernel Ridge Regression (KRR)} problem of computing
\begin{equation}
f^{LS} = \mathop{argmin}_{f\in\mathcal{H}}\left\{ \frac 1{2n}\sum_{i=1}^n (y_i - f(x_i))^2 + \lambda_n \cdot \|f\|_{\mathcal{H}}^2\right\} \EquationName{krr}
\end{equation}
for some parameter $\lambda_n$. It is known that any $\mathcal{H}$ as above can be written as the closure of the set of all functions
\begin{equation}
g(\cdot) = \sum_{i=1}^N \alpha_i k(\cdot, z_i) ,\EquationName{kernelized}
\end{equation}
over all $\alpha\in\R^N$ and vectors $z_1,\ldots,z_N$ for some positive semidefinite {\em kernel function} $k$. Furthermore, the optimal solution to \Equation{krr} can be expressed as $f^{LS} = \sum_{i=1}^n \alpha^{LS}_i\cdot k(\cdot, x_i)$ for some choice of weight vector $\alpha^{LS}$, and it is known that $\|f^{LS} - f^*\|_n$ will be small with high probability, over the randomness in $w$, if $\lambda_n$ is chosen appropriately (see \cite{YangPW15} for background references and precise statements).

After rewriting \Equation{krr} using \Equation{kernelized} and defining a matrix $K$ with $K_{i,j} = k(x_i, x_j)$, one arrives at a reformulation for KRR of computing 
$$
\alpha^{LS} = \mathop{argmin}_{\alpha\in\R^n}\left\{ \frac 1{2n} \alpha^T K^2 \alpha - \frac 1n\alpha^T K y + \lambda_n \alpha^T K \alpha\right\} = \left(\frac 1n K^2 + 2\lambda_n K\right)^{-1}\cdot \frac 1n Ky ,
$$ 
which can be computed in $O(n^3)$ time. The work \cite{YangPW15} then focuses on speeding this up, by instead computing a solution to the lower-dimensional problem
$$
\tilde{\alpha}^{LS} = \mathop{argmin}_{\alpha\in\R^m}\left\{ \frac 1{2n} \alpha^T \Pi K^2\Pi^T \alpha - \frac 1n\alpha^T \Pi K y + \lambda_n \alpha^T \Pi K \Pi^T \alpha\right\} = \left(\frac 1n \Pi K^2\Pi^T + 2\lambda_n \Pi K\Pi^T\right)^{-1}\cdot \frac 1n \Pi Ky 
$$
and then returning as $\tilde{f}$ the function specified by the weight vector $\tilde{\alpha} = \Pi^T \tilde{\alpha}^{LS}$. Note that once various matrix products are formed (where the running time complexity depends on the $\Pi$ being used), one only needs to invert an $m\times m$ matrix thus taking $O(m^3)$ time. They then prove that $\|\tilde{f} - f^*\|_n$ is small with high probability as long as $\Pi$ satisfies two deterministic conditions (see the proof of Lemma 2 \cite[Section 4.1.2]{YangPW15}, specifically equation (26) in that work):
\begin{itemize}
\item $\Pi$ is a $(1/2)$-subspace embedding for a particular low-dimensional subspace
\item $\|\Pi B\| = O(\|B\|)$ for a particular matrix $B$ of low stable rank ($B$ is $U D_2$ in \cite{YangPW15}). Note 
$$
\|\Pi B\| = \|(\Pi B)^T\Pi B\|^{1/2} \le \left(\|(\Pi B)^T\Pi B - B^T B\| + \|B^T B\|\right)^{1/2} \le \|(\Pi B)^T\Pi B - B^T B\|^{1/2} + \|B\| ,
$$
and thus it suffices for $\Pi$ to provide the approximate matrix multiplication property for the product $B^T B$, where $B$ has low stable rank.
\end{itemize}
The first bullet simply requires a subspace embedding in the standard sense, and for the second bullet \cite{YangPW15} avoided AMM by obtaining a bound on $\|\Pi B\|$ directly by their own analyses for gaussian $\Pi$ and the SRHT (in the gaussian case, it also follows from \cite[Theorem 3.2]{RudelsonV13}). Our result thus provides a unifying analysis which works for a larger and general class of $\Pi$, including for example sparse subspace embeddings.

\subsection{$k$-means clustering}

In the works \cite{BoutsidisZMD15,CohenEMMP14}, the authors considered dimensionality reduction methods for $k$-means clustering. Recall in $k$-means clustering one is given $n$ points $x_1,\ldots,x_n\in\R^d$, as well as an integer $k\ge 1$, and the goal is to find $k$ points $y_1,\ldots,y_k\in\R^d$ minimizing
$$
\sum_{i=1}^n \min_{j=1}^k \|x_i - y_j\|_2^2 .
$$
That is, the $n$ points can be partitioned arbitrarily into $k$ clusters, then a ``cluster center'' should be assigned to each cluster so as to minimize sums of squared Euclidean distances of each of the $n$ points to their cluster centers. It is a standard fact that once a partition $\mathcal{P} = \{P_1,\ldots,P_k\}$ of the $n$ points into clusters is fixed, the optimal cluster centers to choose are the centroids of the points in each of the $k$ partitions, i.e.\ $y_j = (1/|P_j|)\cdot \sum_{i \in P_j} x_i$.

One key observation common to both of the works \cite{BoutsidisZMD15,CohenEMMP14} is that $k$-means clustering is closely related to the problem of low-rank approximation. More specifically, given a partition $\mathcal{P} = \{P_1, \ldots, P_k\}$, define the $n\times k$ matrix $X_{\mathcal{P}}$ by
$$
(X_{\mathcal{P}})_{i, j} =
\begin{cases}
\frac 1{\sqrt{|P_j|}}, &\ \text{if } i \in P_j\\
0, &\ \text{otherwise}
\end{cases}
$$
Let $A\in\R^{n\times d}$ have rows $x_1,\ldots,x_n$. Then the $k$-means problem can be rewritten as computing
$$
\mathrm{argmin}_{\mathcal{P}} \|A - X_{\mathcal{P}} X_{\mathcal{P}}^T A\|_F^2
$$
where $\mathcal{P}$ ranges over all partitions of $\{1,\ldots,n\}$ into $k$ sets. It is easy to verify that the non-zero columns of $X_{\mathcal{P}}$ are orthonormal, so $X_{\mathcal{P}}X_{\mathcal{P}}^T$ is the orthogonal projection onto the column space of $X_{\mathcal{P}}$. Thus if one defines $\mathcal{S}$ as the set of all rank at most $k$ orthogonal projections obtained as $X_{\mathcal{P}}X_{\mathcal{P}}^T$ for some $k$-partition $\mathcal{P}$, then the above can be rewritten as the {\em constrained rank-$k$ projection problem} of computing
\begin{equation}
\mathrm{argmin}_{P\in\mathcal{S}} \|(I-P)A\|_F^2 .
\end{equation}
One can verify this by hand, since the rows of $A$ are the points $x_i$, and the $i$th row of $PA$ for $P = X_{\mathcal{P}}X_{\mathcal{P}}^T$ is the centroid of the points in $i$'s partition in $\mathcal{P}$.

The work \cite{CohenEMMP14} showed that if $\mathcal{S}$ is any subset of projections of rank at most $k$ (henceforth {\em rank-$k$ projections}) and $\Pi\in\R^{m\times d}$ satisfies certain technical conditions to be divulged soon, then if $\tilde{P}\in\mathcal{S}$ satisfies
\begin{equation}
\|(I-\tilde{P})A\Pi^T\|_F^2 \le \gamma \cdot \mathrm{min}_{P\in\mathcal{S}} \|(I-P)A\Pi^T\|_F^2 , \EquationName{pre-condition}
\end{equation}
then
\begin{equation}
\|(I-\tilde{P})A\|_F^2 \le \frac{(1+\eps)}{(1-\eps)}\cdot  \gamma \cdot \mathrm{min}_{P\in\mathcal{S}} \|(I-P)A\|_F^2 . \EquationName{post-condition}
\end{equation}

One set of sufficient conditions for $\Pi$ is as follows (see \cite[Lemma 10]{CohenEMMP14}). Let $A_k$ denote the best rank-$k$ approximation to $A$ and let $A_{\bar{k}} = A - A_k$. Define $Z \in \R^{d\times r}$ for $r = 2k$ by $Z = V_r$, i.e.\ the top $r$ right singular vectors of $A$ are the columns of $Z$. Define $B_1 = Z^T$ and $B_2 = \frac{\sqrt{k}}{\|A_{\bar{k}}\|_F}\cdot (A - AZZ^T)$. Define $B\in\R^{(n+r)\times d}$ as having $B_1$ as its first $r$ rows and $B_2$ as its lower $n$ rows. Then \cite[Lemma 10]{CohenEMMP14} states that \Equation{pre-condition} implies \Equation{post-condition} as long as
\begin{align}
\|(\Pi B^T)^T (\Pi B^T) - BB^T\| &< \eps, \EquationName{condition1-sr}\\
\text{and }\left| \|\Pi B_2\|_F^2 - \|B_2\|_F^2\right| & \le \eps k \EquationName{easier-condition}
\end{align}

One can easily check $\|B\|^2 = 1$ and $\|B\|_F^2 \le 3k$, so the stable rank $\sr(B)$ is at most $3k$. Thus \Equation{condition1-sr} is implied by the $(3k, \eps/2)$-AMM property for $B^T, B^T$, and our results apply to show that $\Pi$ can be taken to have $m = O((k + \log(1/\delta))/\eps^2)$ rows to have success probability $1-\delta$ for \Equation{condition1-sr}. Obtaining \Equation{easier-condition} is much simpler and can be derived from the JL moment property (see the proof of \cite[Theorem 6.2]{KaneN14}).

Without our results on stable-rank AMM provided in this current work, \cite{CohenEMMP14} gave a different analysis, avoiding \cite[Lemma 10]{CohenEMMP14}, which required $\Pi$ to have $m = \Theta(k\cdot \log(1/\delta)/\eps^2)$ rows (note the product between $k$ and $\log(1/\delta)$ instead of the sum). 

\section{Stable rank and row selection}\SectionName{bss}
As well as random projections, approximate matrix multiplication (and subspace embeddings) by row selection are also common in algorithms.  This corresponds to setting $\Pi$ to a diagonal matrix $S$ with relatively few nonzero entries.  Unlike random projections, there are no \emph{oblivious} distributions of such matrices $S$ with universal guarantees.  Instead, $S$ must be determined (either randomly or deterministically) from the matrices being embedded.

There are two particularly algorithmically useful methods for obtaining such $S$.  The first is importance sampling: independent random sampling of the rows, but with nonuniform sampling probabilities.  This is analyzed using matrix Chernoff bounds \cite{AhlswedeW02}, and for the case of $k$-dimensional subspace embedding or approximate matrix multiplication of rank-k matrices, it can produce $O(k (\log k) / \eps^2)$ samples \cite{SpielmanS11}. The second method is the deterministic selection method given in \cite{BatsonSS12}, often called ``BSS'', choosing only $O(k / \eps^2)$ rows.  This still runs in polynomial time, but originally required many relatively expensive linear algebra steps and thus was slower in general; see \cite{LeeS15} for runtime improvements.

The matrix Chernoff methods can be extended to the stable-rank case, making even the log factor depend only on the stable rank, using ``intrinsic dimension'' variants of the bounds as presented in Chapter 7 of \cite{Tropp15}.  Specifically, Theorem 6.3.1 of that work can be applied with each $n$ summands each equal to $\frac{1}{n} \left ( \frac{1}{p_i} a_i^T b_i - A^T B \right )$, where $a_i$ is the $i$th row of $A$, and $i$ is random with the probability of choosing a particular row $i$ equal to
\begin{equation*}
p_i = \frac{\| a_i \|^2 + \| b_i \|^2}{\sum_j \| a_j \|^2 + \| b_j \|^2}
\end{equation*}

We here give an extension of BSS that covers low stable rank matrices as well.
\begin{theorem}\TheoremName{bss-thm}
Given an $n$ by $d$ matrix $A$ such that $\| A \|^2 \le 1$ and $\| A \|_F^2 \le k$, and an $\eps \in (0, 1)$, there exists a diagonal matrix $S$ with $O(k / \eps^2)$ nonzero entries such that
$$
\| (SA)^T (SA) - A^T A \| \le \eps
$$
Such an $S$ can be computed by a polynomial-time algorithm.
\end{theorem}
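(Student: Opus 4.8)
The plan is to run the deterministic barrier-function argument of \cite{BatsonSS12} on the rows of $A$, exploiting that the rank-one pieces being resampled sum to $M := A^TA$, a matrix of trace $\|A\|_F^2 \le k$ and operator norm $\le 1$, so that the number of iterations ends up governed by $\Tr(M)$ rather than by $\mathrm{rank}(A)$. Write $M = \sum_{i=1}^n a_i a_i^T$ with $a_i$ the $i$th row of $A$; it suffices to produce nonnegative weights $s_1,\dots,s_n$, at most $O(k/\eps^2)$ of them nonzero, with $\|\sum_i s_i a_i a_i^T - M\| = O(\eps)$, since the diagonal matrix $S$ with $S_{ii} = \sqrt{s_i}$ then proves the theorem after rescaling $\eps$ by a constant. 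Restricting to the row space of $A$ we may assume $M \succ 0$ (though $M$ may be arbitrarily ill-conditioned). As in \cite{BatsonSS12} we build $A^{(0)} = 0, A^{(1)}, \dots$ with $A^{(j+1)} = A^{(j)} + t\,a_i a_i^T$ for a carefully chosen index $i$ and weight $t>0$, maintaining a lower and an upper barrier on $A^{(j)}$ that are advanced by fixed increments each step, and at step $j$ picking $(i,t)$ so that neither the associated lower potential nor the associated upper potential increases and so that the matrix inequalities defining the barriers stay strict; such a pair exists whenever the total upper ``cost'' $\sum_i U(a_i)$ does not exceed the total lower ``slack'' $\sum_i L(a_i)$.

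The engine of the improvement is that, since $\sum_i a_i a_i^T = M$, the totals $\sum_i U(a_i)$ and $\sum_i L(a_i)$ are obtained from the per-vector quantities by $\sum_i a_i^T(\cdot)a_i = \Tr\big((\cdot)\,M\big)$ — not by $\Tr(\cdot)$, which is where $\mathrm{rank}$ entered in \cite{BatsonSS12}, whose pieces sum to the identity. Measuring the potentials in a correspondingly $M$-weighted way (so that $U,L$ carry an extra factor of $M$), their initial values are $O(\Tr(M)) = O(k)$ rather than $\Omega(\mathrm{rank})$, because the weighting discounts exactly the ill-conditioned directions; and the hypothesis $\|M\|\le 1$ is precisely what is needed to keep the leading ``$1/\delta$'' part of each cost intact (one checks $\Tr(XMXM)\le\Tr(X^2M)$ and $\Delta\ge\delta\,\Tr(X^2M)$ for the relevant resolvent $X$ and shift $\delta$). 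Carrying the constants through, the progress inequality closes with barrier increments of size $\Theta(\eps^2/k)$, so the iteration runs $N = O(k/\eps^2)$ steps, only the $N$ chosen vectors get nonzero weight, and since each step is the same small linear-algebraic optimization as in \cite{BatsonSS12}, $S$ is computed in polynomial time. Because the resulting count is $\Theta\big(\Tr(M)/(\|M\|\eps^2)\big)$ where \cite{KollaMST10} carries $\mathrm{rank}(M)/\eps^2$, this simultaneously sharpens \cite[Theorem 3.3]{KollaMST10}.

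I expect the only genuinely new difficulty — the step where one cannot simply transcribe \cite{BatsonSS12} — to be the design of the barriers. Since we want an \emph{additive} $\eps$ guarantee while $M$ may be badly conditioned, the tolerance required is direction-dependent: in an eigendirection of $M$ with eigenvalue $\lambda$, additive error $\eps$ is a multiplicative $\eps/\lambda$ when $\lambda \gtrsim \eps$ but is essentially vacuous when $\lambda$ is small (there $A^{(j)}\succeq 0$ already suffices from below, and $A^{(j)}\preceq M+\eps I$ is easy from above). So the barriers should be tight — roughly multiples of $M$ — on the part of $M$ whose eigenvalues exceed a threshold that is a small power of $\eps$, and wide and rarely advanced on the complementary part. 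One must then check (i) that a single weight $t$ respects both regimes at each step; (ii) that the $M$-weighted potential, which by design does not control the spectrum of $A^{(j)}$ in the small-eigenvalue directions, is needed only for the counting argument, the spectral control coming for free from the matrix barrier inequalities, which are never violated by the Sherman–Morrison choice of $t$; and (iii) that the off-diagonal block of $\sum_i s_i a_i a_i^T$ between the two regimes, bounded by the geometric mean of the two diagonal blocks, stays $O(\eps)$ — which is what fixes the threshold as a higher power of $\eps$ than $\eps$ itself. Pinning down this barrier design and the accompanying constants is essentially all of the work; everything else is a routine adaptation of \cite{BatsonSS12}.
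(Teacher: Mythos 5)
Your high-level instincts are exactly right: the paper does adapt the BSS barrier argument, does measure the potentials in an $M$-weighted way via $\Tr(A(\cdot)A^T)$ so that the totals are controlled by $\Tr(M)=\|A\|_F^2\le k$ rather than by rank, does use $\|M\|\le 1$ at the right places, and arrives at $O(k/\eps^2)$ selected rows. Where you diverge — and where you explicitly flag that "essentially all of the work" remains — is the barrier design, and the paper's design is considerably simpler than the two-regime scheme you propose. You suggest barriers that are tight (multiples of $M$) above an eigenvalue threshold $\eps^c$ and wide below it, and then list three genuine worries (a single step weight $t$ serving both regimes, the $M$-weighted potential not controlling small-eigenvalue directions, and the cross-block of the error). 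All three worries are artifacts of the split. The paper's barriers are the \emph{unsplit} affine family $X_u = cM + c'I$ and $X_l = -cM - c'I$: initialized at $\pm kI$ (so the initial potentials are $\Tr(A(kI)^{-1}A^T)=\|A\|_F^2/k\le 1$), and advanced each step by $\delta_u M$ and $\delta_l M$ respectively, with $\delta_u = \eps+2\eps^2$, $\delta_l=\eps-2\eps^2$. The $c'I$ piece is precisely the additive cushion that handles ill-conditioned directions automatically — in an eigendirection with small eigenvalue $\lambda$, the wall is $\approx j\delta\lambda + k$, so it is "wide" relative to $\lambda$ with no threshold needed; in a well-conditioned direction the $j\delta\lambda$ term dominates. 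After $k/\eps^2$ steps one has $X_u \preceq \tfrac{k}{\eps}M + 3kI$ and $X_l \succeq \tfrac{k}{\eps}M - 3kI$ (using $M\preceq I$ to absorb $2kM$ into $2kI$), so rescaling $Z$ by $\eps/k$ gives additive error $3\eps$ outright. The spectral control in all directions comes from the matrix inequalities $X_l\prec Z\prec X_u$, which the Sherman--Morrison step preserves exactly as in BSS; there is no off-diagonal block to worry about because there is no block decomposition. So you had identified the correct ingredients, but you left the hardest step open and, in hindsight, proposed a route through it that is more complicated than what is needed.
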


When $A^TA$ is the identity, this is just the original BSS result.  It is also stronger than Theorem 3.3 of \cite{KollaMST10}, implying it when $A$ is the combination of the rows $\sqrt{N/T}\cdot v_i$ from that theorem statement with an extra column containing the costs, and a constant $\epsilon$.  The techniques in that paper, on the other hand, can prove a result comparable to \Theorem{bss-thm}, but with the row count scaling as $k/\eps^3$ rather than $k/\eps^2$.

\medskip

\begin{proof}
The proof closely follows the original proof of BSS.  However, for simplicity, and because the tight constants are not needed for most applications, we do not include \cite[Claim 3.6]{BatsonSS12} and careful parameter-setting.

At each step, the algorithm will maintain a partial approximation $Z = (SA)^T (SA)$ (the matrix ``$A$'' in \cite{BatsonSS12}), with $S$ beginning as 0.  Additionally, we keep track of upper and lower ``walls'' $X_u$ and $X_l$; in the original BSS these are just multiples of the identity.  The final $S$ will be returned by the algorithm (rescaled by a constant so that the average of the upper and lower walls is $A^T A$).

We will maintain the invariants
\begin{align}
\Tr(A (X_u - Z)^{-1} A^T) &\le 1 \\
\Tr(A (Z - X_l)^{-1} A^T) &\le 1.
\end{align}
These are the so-called upper and lower potentials from BSS. We also require $X_u \prec Z \prec X_l$; recall $M\prec M'$ means that $M'-M$ is positive definite.  Note that unlike \cite{BatsonSS12}, here we do not apply a change of variables making $A^T A$ the identity (to avoid confusion, since that would change the Frobenius norm).  This is the reason for the slightly more complicated form of the potentials.

In the original BSS, $X_u$ and $X_l$ were always scalar multiples of the identity (here, without the change of variables, that would correspond to always being multiples of $A^T A$).  \cite{BatsonSS12} thus simply represented them with scalars.  Like BSS, we will increase $X_u$ and $X_l$ by multiples of $A^T A$--however, the key difference from BSS is that they are \emph{initialized} to multiples of the identity, rather than $A^T A$.  In particular, we may initialize $X_u$ to $k I$ and $X_l$ to $-k I$.  This is still good enough to get the spectral norm bounds we require here (as opposed to the stronger multiplicative approximation guaranteed by BSS).

We will have two scalar values, $\delta_u$ and $\delta_l$, depending only on $\eps$; they will be set later.  One step consists of
\begin{enumerate}
\item
Choose a row $a_i$ from $A$ and a positive scalar $t$, and add $t a_i a_i^T$ to $Z$ (via increasing the $i$ component of $S$).
\item
Add $\delta_u A^T A$ to $X_u$ and $\delta_l A^T A$ to $X_l$.
\end{enumerate}
We will show that with suitable values of $\delta_u$ and $\delta_l$, for any $Z$ obeying the invariants there always exists a choice of $i$ and $t$ such that the invariants will still be true after the step is complete.  This corresponds to Lemmas 3.3 through 3.5 of BSS.

For convenience, we define, at a given step, the matrix functions of $y$
\begin{align*}
M_u(y) &= ((X_u + y A^T A) - Z)^{-1} \\
M_l(y) &= (Z - (X_l + y A^T A))^{-1}.
\end{align*}

The upper barrier value, after making a step of $t a_i a_i^T$ and increasing $X_u$, is
\begin{equation*}
\Tr(A ((X_u + \delta_u A^T A) - (Z + t a_i a_i^T))^{-1} A^T).
\end{equation*}
Applying the Sherman-Morrison formula, and cyclicity of trace, to the rank-1 update $t a_i a_i^T$, this can be rewritten as
\begin{equation*}
\Tr(A M_u(\delta_u) A^T) + \frac{t a_i^T M_u(\delta_u) A^T A M_u(\delta_u) a_i}{1 - t a_i^T M_u(\delta_u) a_i}.
\end{equation*}
Since the function $f(y) = \Tr(A M_u(y) A^T)$ is a convex function of $y$ with derivative 
$$f'(y) = -\Tr(A M_u(y) A^T A M_u(y) A^T) ,$$
we have $f(\delta_u) - f(0) \le -\delta_u \Tr(A M_u(\delta_u) A^T A M_u(\delta_u) A^T)$. Then the difference between the barrier before and after the step is at most
\begin{align*}
&\frac{t a_i^T M_u(\delta_u) A^T A M_u(\delta_u) a_i}{1 - t a_i^T M_u(\delta_u) a_i} - \delta_u \Tr(A M_u(\delta_u) A^T A M_u(\delta_u) A^T).
\end{align*}
Constraining this to be no greater than zero, rewriting in terms of $\frac{1}{t}$ and pulling it out gives
\begin{equation*}
\frac{1}{t} \ge \frac{a_i^T M_u(\delta_u) A^T A M_u(\delta_u) a_i}{\delta_u \Tr(A M_u(\delta_u) A^T A M_u(\delta_u) A^T)} + a_i^T M_u(\delta_u) a_i.
\end{equation*}
Furthermore, as long as $\frac{1}{t}$ is at least this, $Z$ will remain below $X_u$, since the barrier must approach infinity as $t$ approaches the smallest value passing $X_u$.

For the lower barrier value after the step, we get
\begin{equation*}
\Tr(A ((Z + t a_i a_i^T) - (X_l + \delta_l A^T A))^{-1} A^T).
\end{equation*}
Again, applying Sherman-Morrison rewrites it as
\begin{equation*}
\Tr(A M_l(\delta_l) A^T) - \frac{t a_i^T M_l(\delta_l) A^T A M_l(\delta_l) a_i}{1 + t a_i^T M_l(\delta_l) a_i}.
\end{equation*}
Again, due to convexity the increase in the barrier from raising $X_l$ is at most $\delta_l$ times the local derivative.  The difference in the barrier after the step is then at most
\begin{align*}
&-\frac{t a_i^T M_l(\delta_l) A^T A M_l(\delta_l) a_i}{1 + t a_i^T M_l(\delta_l) a_i} + \delta_l \Tr(A M_l(\delta_l) A^T A M_l(\delta_l) A^T).
\end{align*}
This is not greater than zero as long as
\begin{equation*}
\frac{1}{t} \le \frac{a_i^T M_l(\delta_l) A^T A M_l(\delta_l) a_i}{\delta_l \Tr(A M_l(\delta_l) A^T A M_l(\delta_l) A^T)} - a_i^T M_l(\delta_l) a_i.
\end{equation*}

There is some value of $t$ that works for $a_i$ as long as the lower bound for $\frac{1}{t}$ is no larger than the upper bound.  To show that there is at least one choice of $i$ for which this holds, we look at the sum of all the lower bounds and compare to the sum of all the upper bounds.  Summing the former over all $i$ gets
\begin{equation*}
\frac{\Tr(A M_u(\delta_u) A^T A M_u(\delta_u) A^T)}{\delta_u \Tr(A M_u(\delta_u) A^T A M_u(\delta_u) A^T)} + \Tr(A M_u(\delta_u) A^T)
\end{equation*}
and the latter gets
\begin{equation*}
\frac{\Tr(A M_l(\delta_l) A^T A M_l(\delta_l) A^T)}{\delta_l \Tr(A M_l(\delta_l) A^T A M_l(\delta_l) A^T)} - \Tr(A M_l(\delta_l) A^T).
\end{equation*}

Finally, note that 
\begin{equation*}
\Tr(A M_u(\delta_u) A^T) = \Tr(A ((X_u + \delta_u A^T A) - Z)^{-1} A^T) \le \Tr(A (X_u - Z)^{-1} A^T) \le 1
\end{equation*}
and the lower barrier implies $Z - X_l \succ A^T A$, implying that as long as $\delta_l \leq \frac{1}{2}$,
\begin{equation*}
\Tr(A M_l(\delta_l) A^T) = \Tr(A (Z - (X_l + \delta_l A^T A))^{-1} A^T) \le 2 \Tr(A (Z - X_l)^{-1} A^T) \le 2.
\end{equation*}

Thus, we can always make a step as long as $\delta_u$ and $\delta_l$ are set so that
\begin{equation*}
\frac{1}{\delta_u} + 1 \leq \frac{1}{\delta_l} - 2
\end{equation*}
and $\delta_l \leq \frac{1}{2}$.
This is satisfied by
\begin{align*}
\delta_u &= \eps + 2 \eps^2 \\
\delta_l &= \eps - 2 \eps^2.
\end{align*}

Before the first step, $X_u$ and $X_l$ can be initialized as $kI$ and $-kI$, respectively.  If the algorithm is then run for $\frac{k}{\eps^2}$ steps, we have:
\begin{align*}
X_u &= \frac{k}{\eps} A^T A + 2k A^T A + k I \\
&\preceq \frac{k}{\eps} A^T A + 3k I \\
X_l &= \frac{k}{\eps} A^T A - 2k A^T A - k I \\
&\succeq \frac{k}{\eps} A^T A - 3k I.
\end{align*}
$\frac{\eps}{k} X_u$ and $\frac{\eps}{k} X_l$ both end up within $3 \eps I$ of $A^T A$, so $\frac{\eps}{k} Z$ (from $\sqrt{\frac{\eps}{k}} S$) satisfies the requirements of the output for $3 \eps$ (one can simply apply this argument for $\eps / 3$).  Furthermore, all the computations required to verify the preservation of invariants and compute explicit $t$s can be performed in polynomial time.
\end{proof}

This obtains more general AMM as a corollary:
\begin{corollary}\CorollaryName{bss-amm}
Given two matrices $A$ and $B$, each with $n$ rows, and an $\eps \in (0, 1)$, there exists a diagonal matrix $S$ with $O(k / \eps^2)$ nonzero entries satisfying the $(k, \eps)$-AMM property for $A$, $B$.
Such an $S$ can be computed by a polynomial-time algorithm.
\end{corollary}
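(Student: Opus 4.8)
The plan is to reduce this asymmetric statement to the symmetric spectral-approximation result of \Theorem{bss-thm} by a stacking argument, which is exactly the ``general case reduces to $A = B$'' reduction alluded to earlier. Assume $A, B$ are both nonzero (otherwise $A^T B = 0$ and $S = 0$ works). Since both sides of the inequality defining $(k,\eps)$-AMM scale linearly under the independent substitutions $A \mapsto \lambda_A A$ and $B \mapsto \lambda_B B$, I would first rescale $A$ and $B$ separately so that $\|A\|^2 + \|A\|_F^2/k = 1$ and $\|B\|^2 + \|B\|_F^2/k = 1$; in particular $\|A\|^2, \|B\|^2 \le 1$ and $\|A\|_F^2, \|B\|_F^2 \le k$, and the target inequality becomes simply $\|(SA)^T(SB) - A^T B\| \le \eps$.

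Next I would form the column concatenation $C = [A\ B] \in \R^{n\times(d+p)}$ and set $C' = C/\sqrt{2}$. Since $CC^T = AA^T + BB^T$ we get $\|C\|^2 \le \|A\|^2 + \|B\|^2 \le 2$, and clearly $\|C\|_F^2 = \|A\|_F^2 + \|B\|_F^2 \le 2k$; hence $\|C'\|^2 \le 1$ and $\|C'\|_F^2 \le k$, so $C'$ satisfies the hypotheses of \Theorem{bss-thm} with the same stable-rank parameter $k$. Applying \Theorem{bss-thm} to $C'$ with error parameter $\eps/2$ yields, in polynomial time, a diagonal $S$ with $O(k/\eps^2)$ nonzero entries such that $\|(SC')^T(SC') - C'^T C'\| \le \eps/2$, i.e.\ $\|(SC)^T(SC) - C^T C\| \le \eps$.

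To finish, observe that $(SC)^T(SC) - C^T C$ is a symmetric $(d+p)\times(d+p)$ matrix whose block indexed by (columns of $A$) $\times$ (columns of $B$) is precisely $(SA)^T(SB) - A^T B$. Since the operator norm of a submatrix never exceeds that of the full matrix, $\|(SA)^T(SB) - A^T B\| \le \|(SC)^T(SC) - C^T C\| \le \eps$. This is the $(k,\eps)$-AMM bound under the normalization, and undoing the two scalings (which does not change $S$, since $S$ acts on the row index and the $(k,\eps)$-AMM inequality is homogeneous) recovers the statement for arbitrary $A, B$.

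I expect no genuine obstacle here: all the substantive work is already in \Theorem{bss-thm}, and the reduction is elementary. The only two points needing care are that $A$ and $B$ may have to be rescaled by different factors in order to bring the right-hand side to $\eps$, and the standard observation that passing to a submatrix cannot increase the operator norm — which is exactly what transfers the symmetric guarantee for $C$ to the asymmetric guarantee for $A, B$.
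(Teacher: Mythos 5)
Your proof is correct and follows essentially the same route as the paper's: separately rescale $A$ and $B$, stack their columns into one matrix satisfying the hypotheses of \Theorem{bss-thm}, apply that theorem, and observe that $(SA)^T(SB) - A^TB$ is the off-diagonal block of $(SC)^T(SC) - C^TC$, so its operator norm is dominated by that of the full matrix. The only cosmetic difference is the choice of per-matrix normalization (you normalize $\|A\|^2 + \|A\|_F^2/k = 1$ while the paper divides by $\sqrt{2}\max(\|A\|, \|A\|_F/\sqrt{k})$), which is immaterial.
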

\begin{proof}
Apply \Theorem{bss-thm} to a matrix $X$ consisting of the columns of $\frac{A}{\sqrt{2} \max(\| A \|_2, \| A \|_F / \sqrt{k})}$ appended to the columns of $\frac{B}{\sqrt{2} \max(\| B \|_2, \| B \|_F / \sqrt{k})}$, and use the resulting $S$.

Note that $X$ satisfies the conditions of that theorem, since concatenating the sets of columns at most adds the squares of their spectral and Frobenius norms.  $(SA)^T (SB) - A^T B$ is a submatrix of $2 \max(\| A \|_2, \| A \|_F / \sqrt{k}) \max(\| B \|_2, \| B \|_F / \sqrt{k}) ((SX)^T (SX) - X^T X)$, so its spectral norm is upper bounded by the spectral norm of that matrix, which in turn is bounded by the guarantee of \Theorem{bss-thm}.
\end{proof}

\section*{Acknowledgments}
We thank Jaros{\l}aw B{\l}asiok for pointing out the connection between low stable rank approximate matrix multiplication and the analyses in \cite{YangPW15}.

\bibliographystyle{alpha}
\bibliography{../../biblio}

\newcommand{\etalchar}[1]{$^{#1}$}
\begin{thebibliography}{DMMW12}

\bibitem[AC09]{AilonC09}
Nir Ailon and Bernard Chazelle.
\newblock The fast {Johnson}-{Lindenstrauss} transform and approximate nearest
  neighbors.
\newblock {\em {SIAM} J. Comput.}, 39(1):302--322, 2009.

\bibitem[AL13]{AilonL13}
Nir Ailon and Edo Liberty.
\newblock An almost optimal unrestricted fast {Johnson}-{Lindenstrauss}
  transform.
\newblock {\em ACM Transactions on Algorithms}, 9(3):21, 2013.

\bibitem[AW02]{AhlswedeW02}
Rudolf Ahlswede and Andreas~J. Winter.
\newblock Strong converse for identification via quantum channels.
\newblock {\em {IEEE} Transactions on Information Theory}, 48(3):569--579,
  2002.

\bibitem[Bou14]{Bourgain14}
Jean Bourgain.
\newblock An improved estimate in the restricted isometry problem.
\newblock {\em Geometric Aspects of Functional Analysis}, 2116:65--70, 2014.

\bibitem[BSS12]{BatsonSS12}
Joshua~D. Batson, Daniel~A. Spielman, and Nikhil Srivastava.
\newblock Twice-{Ramanujan} sparsifiers.
\newblock {\em {SIAM} J. Comput.}, 41(6):1704--1721, 2012.

\bibitem[BZMD15]{BoutsidisZMD15}
Christos Boutsidis, Anastasios Zouzias, Michael~W. Mahoney, and Petros Drineas.
\newblock Randomized dimensionality reduction for k-means clustering.
\newblock {\em {IEEE} Transactions on Information Theory}, 61(2):1045--1062,
  2015.

\bibitem[CCF04]{CharikarCF04}
Moses Charikar, Kevin~C. Chen, and Martin Farach{-}Colton.
\newblock Finding frequent items in data streams.
\newblock {\em Theor. Comput. Sci.}, 312(1):3--15, 2004.

\bibitem[CEM{\etalchar{+}}15]{CohenEMMP14}
Michael~B. Cohen, Sam Elder, Cameron Musco, Christopher Musco, and
  M\u{a}d\u{a}lina Persu.
\newblock Dimensionality reduction for k-means clustering and low rank
  approximation.
\newblock In {\em Proceedings of the 47th ACM Symposium on Theory of Computing
  (STOC)}, 2015.
\newblock Full version at \url{http://arxiv.org/abs/1410.6801v3}.

\bibitem[CLL{\etalchar{+}}10]{CLLLH10}
Pei{-}Chun Chen, Kuang{-}Yao Lee, Tsung{-}Ju Lee, Yuh{-}Jye Lee, and Su{-}Yun
  Huang.
\newblock Multiclass support vector classification via coding and regression.
\newblock {\em Neurocomputing}, 73(7-9):1501--1512, 2010.

\bibitem[CLM{\etalchar{+}}15]{CohenLMMPS15}
Michael~B. Cohen, Yin~Tat Lee, Cameron Musco, Christopher Musco, Richard Peng,
  and Aaron Sidford.
\newblock Uniform sampling for matrix approximation.
\newblock In {\em Proceedings of the 6th Annual Conference on Innovations in
  Theoretical Computer Science (ITCS)}, pages 181--190, 2015.

\bibitem[Coh16]{Cohen16}
Michael~B. Cohen.
\newblock Simpler and tighter analysis of sparse oblivious subspace embeddings.
\newblock In {\em Proceedings of the 27th Annual ACM-SIAM Symposium on Discrete
  Algorithms (SODA), to appear}, 2016.

\bibitem[CW09]{ClarksonW09}
Kenneth~L. Clarkson and David~P. Woodruff.
\newblock Numerical linear algebra in the streaming model.
\newblock In {\em Proceedings of the 41st Annual {ACM} Symposium on Theory of
  Computing (STOC)}, pages 205--214, 2009.
\newblock Full version at
  \url{http://researcher.watson.ibm.com/researcher/files/us-dpwoodru/cw09.pdf}.

\bibitem[CW13]{ClarksonW13}
Kenneth~L. Clarkson and David~P. Woodruff.
\newblock Low rank approximation and regression in input sparsity time.
\newblock In {\em Proceedings of the 45th ACM Symposium on Theory of Computing
  (STOC)}, pages 81--90, 2013.
\newblock Full version at \url{http://arxiv.org/abs/1207.6365v4}.

\bibitem[DKM06]{DrineasKM06}
Petros Drineas, Ravi Kannan, and Michael~W. Mahoney.
\newblock Fast {Monte} {Carlo} algorithms for matrices {I:} approximating
  matrix multiplication.
\newblock {\em {SIAM} J. Comput.}, 36(1):132--157, 2006.

\bibitem[DKS10]{dks10}
Anirban Dasgupta, Ravi Kumar, and Tam{\'{a}}s Sarl{\'{o}}s.
\newblock A sparse {Johnson}-{Lindenstrauss} transform.
\newblock In {\em Proceedings of the 42nd {ACM} Symposium on Theory of
  Computing (STOC)}, pages 341--350, 2010.

\bibitem[DMM06]{DrineasMM06}
Petros Drineas, Michael~W. Mahoney, and S.~Muthukrishnan.
\newblock Sampling algorithms for $\ell_2$ regression and applications.
\newblock In {\em Proceedings of the Seventeenth Annual {ACM-SIAM} Symposium on
  Discrete Algorithms (SODA)}, pages 1127--1136, 2006.

\bibitem[DMMW12]{DrineasMMW12}
Petros Drineas, Malik Magdon{-}Ismail, Michael~W. Mahoney, and David~P.
  Woodruff.
\newblock Fast approximation of matrix coherence and statistical leverage.
\newblock {\em Journal of Machine Learning Research}, 13:3475--3506, 2012.

\bibitem[FR13]{FoucartR13}
Simon Foucart and Holger Rauhut.
\newblock {\em A Mathematical Introduction to Compressive Sensing}.
\newblock Applied and Numerical Harmonic Analysis. Birkh\"{a}user, 2013.

\bibitem[GM13]{GittensM13}
Alex Gittens and Michael~W. Mahoney.
\newblock Revisiting the nystrom method for improved large-scale machine
  learning.
\newblock In {\em Proceedings of the 30th International Conference on Machine
  Learning (ICML)}, pages 567--575, 2013.

\bibitem[HMT11]{HMT11}
Nathan Halko, Per-Gunnar Martinsson, and Joel~A. Tropp.
\newblock Finding structure with randomness: Probabilistic algorithms for
  constructing approximate matrix decompositions.
\newblock {\em SIAM Review}, 53(2):217--288, 2011.

\bibitem[HR16]{HavivR16}
Ishay Haviv and Oded Regev.
\newblock The restricted isometry property of subsampled {Fourier} matrices.
\newblock In {\em Proceedings of the 27th Annual ACM-SIAM Symposium on Discrete
  Algorithms (SODA), to appear}, 2016.

\bibitem[HW71]{HW71}
David~Lee Hanson and Farroll~Tim Wright.
\newblock A bound on tail probabilities for quadratic forms in independent
  random variables.
\newblock {\em Ann. Math. Statist.}, 42(3):1079--1083, 1971.

\bibitem[JL84]{JL84}
William~B. Johnson and Joram Lindenstrauss.
\newblock Extensions of {Lipschitz} mappings into a {Hilbert} space.
\newblock {\em Contemporary Mathematics}, 26:189--206, 1984.

\bibitem[KMN11]{KaneMN11}
Daniel~M. Kane, Raghu Meka, and Jelani Nelson.
\newblock Almost optimal explicit johnson-lindenstrauss families.
\newblock In {\em Proceedings of the 14th International Workshop on
  Randomization and Computation (RANDOM)}, pages 628--639, 2011.

\bibitem[KMST10]{KollaMST10}
Alexandra Kolla, Yury Makarychev, Amin Saberi, and Shang-Hua Teng.
\newblock Subgraph sparsification and nearly optimal ultrasparsifiers.
\newblock In {\em Proceedings of the Forty-second ACM Symposium on Theory of
  Computing}, STOC '10, pages 57--66, New York, NY, USA, 2010. ACM.

\bibitem[KN14]{KaneN14}
Daniel~M. Kane and Jelani Nelson.
\newblock Sparser {Johnson}-{Lindenstrauss} transforms.
\newblock {\em J. ACM}, 61(1):4, 2014.

\bibitem[KVZ14]{KyrillidisVZ14}
Anastasios~T. Kyrillidis, Michail Vlachos, and Anastasios Zouzias.
\newblock Approximate matrix multiplication with application to linear
  embeddings.
\newblock {\em CoRR}, abs/1403.7683, 2014.

\bibitem[KW11]{KrahmerW11}
Felix Krahmer and Rachel Ward.
\newblock New and improved {J}ohnson-{L}indenstrauss embeddings via the
  {R}estricted {I}sometry {P}roperty.
\newblock {\em SIAM J. Math. Anal.}, 43(3):1269--1281, 2011.

\bibitem[LBKW14]{LiangBKW14}
Yingyu Liang, Maria{-}Florina Balcan, Vandana Kanchanapally, and David~P.
  Woodruff.
\newblock Improved distributed principal component analysis.
\newblock In {\em Proceedings of the 27th Annual Conference on Advances in
  Neural Information Processing Systems (NIPS)}, pages 3113--3121, 2014.

\bibitem[LDFU13]{LuDFU13}
Yichao Lu, Paramveer Dhillon, Dean Foster, and Lyle Ungar.
\newblock Faster ridge regression via the subsampled randomized {Hadamard}
  transform.
\newblock In {\em Proceedings of the 26th Annual Conference on Advances in
  Neural Information Processing Systems (NIPS)}, 2013.

\bibitem[LMP13]{LiMP13}
Mu~Li, Gary~L. Miller, and Richard Peng.
\newblock Iterative row sampling.
\newblock In {\em Proceedings of the 54th Annual IEEE Symposium on Foundations
  of Computer Science (FOCS)}, pages 127--136, 2013.

\bibitem[LP86]{LP86}
Fran\c{c}ois Lust-Piquard.
\newblock In\'{e}galit\'{e}s de {Khintchine} dans ${C_p}$ $(1 < p < \infty)$.
\newblock {\em C. R. Math. Acad. Sci. Paris}, 303(7):289--292, 1986.

\bibitem[LPP91]{LPP91}
Fran\c{c}ois Lust-Piquard and Gilles Pisier.
\newblock Noncommutative {Khintchine} and {Paley} inequalities.
\newblock {\em Ark. Mat.}, 29(2):241--260, 1991.

\bibitem[LS15]{LeeS15}
Yin~Tat Lee and He~Sun.
\newblock Constructing linear sized spectral sparsification in almost linear
  time.
\newblock In {\em Proceedings of the 56th Annual IEEE Symposium on Foundations
  of Computer Science (FOCS)}, pages 250--269, 2015.

\bibitem[LWM{\etalchar{+}}07]{LibertyWMRT07}
Edo Liberty, Franco Woolfe, Per-Gunnar Martinsson, Vladimir Rokhlin, and Mark
  Tygert.
\newblock Randomized algorithms for the low-rank approximation of matrices.
\newblock {\em Proceedings of the National Academy of Sciences},
  104(51):20167--20172, 2007.

\bibitem[Mah11]{Mahoney11}
Michael~W. Mahoney.
\newblock Randomized algorithms for matrices and data.
\newblock {\em Foundations and Trends in Machine Learning}, 3(2):123--224,
  2011.

\bibitem[MM13]{MengM13}
Xiangrui Meng and Michael~W. Mahoney.
\newblock Low-distortion subspace embeddings in input-sparsity time and
  applications to robust linear regression.
\newblock In {\em Proceedings of the 45th ACM Symposium on Theory of Computing
  (STOC)}, pages 91--100, 2013.

\bibitem[MZ11]{MagenZ11}
Avner Magen and Anastasios Zouzias.
\newblock Low rank matrix-valued {Chernoff} bounds and approximate matrix
  multiplication.
\newblock In {\em Proceedings of the 22nd Annual ACM-SIAM Symposium on Discrete
  Algorithms (SODA)}, pages 1422--1436, 2011.

\bibitem[NN13]{NelsonN13}
Jelani Nelson and Huy~L. Nguy$\tilde{\hat{\mbox{e}}}$n.
\newblock {OSNAP}: Faster numerical linear algebra algorithms via sparser
  subspace embeddings.
\newblock In {\em Proceedings of the 54th Annual IEEE Symposium on Foundations
  of Computer Science (FOCS)}, pages 117--126, 2013.

\bibitem[NN14]{nn14}
Jelani Nelson and Huy~L. Nguy$\tilde{\hat{\mbox{e}}}$n.
\newblock Lower bounds for oblivious subspace embeddings.
\newblock In {\em Proceedings of the 41st International Colloquium on Automata,
  Languages, and Programming (ICALP)}, pages 883--894, 2014.

\bibitem[NPW14]{NelsonPW14}
Jelani Nelson, Eric Price, and Mary Wootters.
\newblock New constructions of {RIP} matrices with fast multiplication and
  fewer rows.
\newblock In {\em Proceedings of the 25th Annual ACM-SIAM Symposium on Discrete
  Algorithms (SODA)}, 2014.

\bibitem[RHV11]{ReyhaniHV11}
Nima Reyhani, Hideitsu Hino, and Ricardo Vig{\'{a}}rio.
\newblock New probabilistic bounds on eigenvalues and eigenvectors of random
  kernel matrices.
\newblock In {\em Proceedings of the Twenty-Seventh Conference on Uncertainty
  in Artificial Intelligence (UAI)}, pages 627--634, 2011.

\bibitem[RV13]{RudelsonV13}
Mark Rudelson and Roman Vershynin.
\newblock {Hanson}-{Wright} inequality and sub-gaussian concentration.
\newblock {\em Electronic Communications in Probability}, 18:1--9, 2013.

\bibitem[Sar06]{Sarlos06}
Tam{\'{a}}s Sarl{\'{o}}s.
\newblock Improved approximation algorithms for large matrices via random
  projections.
\newblock In {\em Proceedings of the 47th Annual {IEEE} Symposium on
  Foundations of Computer Science (FOCS)}, pages 143--152, 2006.

\bibitem[SS11]{SpielmanS11}
Daniel~A. Spielman and Nikhil Srivastava.
\newblock Graph sparsification by effective resistances.
\newblock {\em {SIAM} J. Comput.}, 40(6):1913--1926, 2011.

\bibitem[Tro11]{Tropp11}
Joel~A. Tropp.
\newblock Improved analysis of the subsampled randomized {Hadamard} transform.
\newblock {\em Adv. Adapt. Data Anal.}, 3(1--2):115--126, 2011.

\bibitem[Tro15]{Tropp15}
Joel~A Tropp.
\newblock An introduction to matrix concentration inequalities.
\newblock {\em arXiv preprint arXiv:1501.01571}, 2015.

\bibitem[TZ12]{ThorupZ12}
Mikkel Thorup and Yin Zhang.
\newblock Tabulation-based 5-independent hashing with applications to linear
  probing and second moment estimation.
\newblock {\em {SIAM} J. Comput.}, 41(2):293--331, 2012.

\bibitem[Woo14]{Woodruff14}
David~P. Woodruff.
\newblock Sketching as a tool for numerical linear algebra.
\newblock {\em Foundations and Trends in Theoretical Computer Science},
  10(1-2):1--157, 2014.

\bibitem[YPW15]{YangPW15}
Yun Yang, Mert Pilanci, and Martin~J. Wainwright.
\newblock Randomized sketches for kernels: Fast and optimal non-parametric
  regression.
\newblock {\em CoRR}, abs/1501.06195, 2015.

\end{thebibliography}

\appendix

\section*{Appendix}

\section{OSE moment property}\SectionName{ose-moment-property}
In the following two subsections we show the OSE moment property for both subgaussian matrices and the SRHT.

\subsection{Subgaussian matrices}\SectionName{subgaussian-ose-moments}
In this section, we show the OSE moment property for distributions satisfying a JL condition, namely the JL moment property. This includes matrices with i.i.d.\ entries that are mean zero and subgaussian with variance $1/m$.

\begin{definition}{\cite{KaneMN11}}
\textup{
Let $\mathcal{D}$ be a distribution over $\R^{m\times n}$. We say $\mathcal{D}$ has the {\em $(\eps,\delta,p)$-JL moment property} if for all $x\in\R^n$ of unit norm,
$$
\E_{\Pi\sim\mathcal{D}} |\|\Pi x\|^2 - 1|^p  < \eps^p \cdot \delta .
$$
}
\end{definition}

The following theorem follows from the proof of Lemma 8 in the full version of \cite{ClarksonW13}. We give a different proof here inspired by the proof of \cite[Theorem 9.9]{FoucartR13}, which is slightly shorter and more self-contained. A weaker version appears in \cite[Lemma 10]{Sarlos06}, where the size bound on $X$ is $(C d/\eps)^d$ for a constant $C\ge 1$ instead of simply $C^d$.

\begin{theorem}\TheoremName{nets}
Let $U\in\R^{n\times d}$ with orthonormal columns be arbitrary. Then there exists a set $X\subset\R^n$, $|X| \le 9^d$, each of norm at most $1$ such that
$$
\|(\Pi U)^T(\Pi U) - I\| \le 2 \cdot \sup_{x\in X}|\|\Pi x\|^2 - 1| 
$$
\end{theorem}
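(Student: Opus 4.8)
The plan is to run the standard $\varepsilon$-net argument for the operator norm of a symmetric matrix, thereby reducing control of $\|(\Pi U)^T(\Pi U) - I\|$ to control of the single-vector quantities $|\,\|\Pi x\|^2 - 1\,|$ over a finite set $X$ of unit vectors lying in the column span of $U$.

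First I would record the reduction to quadratic forms. The matrix $M := (\Pi U)^T(\Pi U) - I \in \R^{d\times d}$ is symmetric, so $\|M\| = \sup_{y\in S^{d-1}} |\inprod{My, y}|$, where $S^{d-1}$ is the unit sphere in $\R^d$. For a unit vector $y$, using $U^TU = I$ we have $\inprod{My,y} = \|\Pi U y\|^2 - \|y\|^2 = \|\Pi U y\|^2 - 1$, and since $U$ has orthonormal columns the vector $x := Uy$ satisfies $\|x\| = \|y\| = 1$, so this equals $\|\Pi x\|^2 - 1$. Thus each quadratic form we must bound is exactly of the target form for a unit vector $x\in\mathrm{span}(U)$.

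Next comes the net lemma, which is the only nontrivial ingredient: for any symmetric $M\in\R^{d\times d}$ and any $\gamma$-net $N$ of $S^{d-1}$, one has $\|M\| \le (1-2\gamma)^{-1}\sup_{y\in N}|\inprod{My,y}|$. I would prove this by letting $y^\ast\in S^{d-1}$ attain $\|M\| = |\inprod{My^\ast,y^\ast}|$ (attained by compactness), picking $y\in N$ with $\|y^\ast - y\| \le \gamma$, and expanding $\inprod{My^\ast,y^\ast} - \inprod{My,y} = \inprod{My^\ast,\,y^\ast - y} + \inprod{M(y^\ast - y),\,y}$, each term of absolute value at most $\gamma\|M\|$, whence $|\inprod{My,y}| \ge (1-2\gamma)\|M\|$. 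Combining this with the standard volumetric fact that $S^{d-1}$ admits a $\gamma$-net of size at most $(1+2/\gamma)^d$, and choosing $\gamma = 1/4$, yields a net $N$ with $|N| \le 9^d$ and constant $(1 - 2\cdot\tfrac14)^{-1} = 2$.

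Finally I would set $X = \{Uy : y\in N\}$. Then $|X| \le |N| \le 9^d$, every element of $X$ has norm exactly $1$ (hence at most $1$, as required), and chaining the two observations above gives $\|(\Pi U)^T(\Pi U) - I\| = \|M\| \le 2\sup_{y\in N}|\inprod{My,y}| = 2\sup_{x\in X}|\,\|\Pi x\|^2 - 1\,|$, which is the claimed bound. The main thing to be careful about is using the \emph{quadratic-form} version of the net lemma (a bilinear-form version over a product net would give worse constants or a larger set), and tracking that the particular choice $\gamma = 1/4$ is exactly what produces the stated $9^d$ and the factor $2$ simultaneously; everything else is routine.
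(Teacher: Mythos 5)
Your proposal is correct and follows essentially the same route as the paper's proof: reduce $\|(\Pi U)^T(\Pi U)-I\|$ to the quadratic form $\sup_y|\inprod{My,y}|$, control it over a $1/4$-net of the sphere via the standard $(1-2\gamma)^{-1}$ net comparison, use the volumetric bound $(1+2/\gamma)^d = 9^d$, and push the net forward by $U$. The only cosmetic difference is the algebraic splitting used to prove the net lemma (you write $\inprod{My^*,y^*-y}+\inprod{M(y^*-y),y}$, the paper writes $\inprod{A(x+y),x-y}$), which yields the identical $2\gamma\|M\|$ error.
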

\begin{proof}
We will show that if $\sup_{x\in X}|\|\Pi x\|^2 - 1| < \eps/2$ then $\|(\Pi U)^T(\Pi U) - I\| < \eps$, where $\eps>0$ is some positive real. Define $A = (\Pi U)^T (\Pi U) - I$. Since $A$ is symmetric,
$$
\|A\| = \sup_{\|x\| = 1} |x^T A x| = \sup_{\|x\| = 1} |\inprod{Ax, x}|
$$
Let $T_{\gamma}$ be a finite $\gamma$-net of $\ell_2^d$, i.e. $T_{\gamma}\subset \ell_2^d$ and for every $x\in \R^d$ of unit norm there exists a $y\in T_{\gamma}$ such that $\|x - y\|_2 \le \gamma$. As we will see soon, there exists such a $T_{\gamma}$ of size at most $(1 + 2/\gamma)^d$. We will show that if $\Pi$ satisfies the JL condition on $T' = \{U y: y\in T_{1/4}\}$ with error $\eps/2$, then $\|A\| < \eps$; that is, $(1-\eps/2)\|x\|_2^2 \le \|\Pi x\|_2^2 \le (1+\eps/2)\|x\|_2^2$ for all $x\in T'$.

Let $x$ be a unit norm vector that achieves the $\sup$ above, i.e.\ $\|A\| = |\inprod{Ax, x}|$. Then, letting $y$ be the closest element of $T_{\gamma}$ to $x$,
\begin{align*}
\|A\| &= |\inprod{Ax, x}| \\
{}&= |\inprod{Ay, y} + \inprod{A(x + y), x-y}|\\
{} & \le \frac{\eps}2 + \|A\|\cdot \|x + y\| \cdot \|x - y\|\\
{} & \le \frac{\eps}2 + 2\gamma\|A\| .
\end{align*}
Rearranging gives $\|A\| \le \eps / (2(1 - 2\gamma))$, which is $\eps$ for $\gamma = 1/4$.

Now we must show that we can take $|T_\gamma| \le (1 + 2/\gamma)^d$. The following is a standard covering/packing argument for bounding metric entropy. Imagine packing as many radius-$(\gamma/2)$ $\ell_2$ balls as possible into $\R^d$, centered at points with at most unit norm and such that these balls do not intersect each other. Then these balls all fit into a radius-$(1+\gamma/2)$ $\ell_2$ ball centered at the origin, and thus the number of balls we have packed is at most the ratio of the volume of a $(1+\gamma/2)$ ball to the volume of a $\gamma/2$ ball, which is $((1+\gamma/2) / (\gamma/2))^d = (1 + 2/\gamma)^d$. Now, take those maximally packed radius-$(\gamma/2)$ balls and double each of their radii to be radius $\gamma$. Then every point in the unit ball is contained in at least one of these balls by the triangle inequality, which is exactly the property we wanted from $T_\gamma$ ($T_\gamma$ is just the centers of these balls). To see why every point is in at least one such ball, if some $x\in\R^d$ of unit norm is not contained in any doubled ball then a $\gamma/2$-ball about $x$ would be disjoint from our maximally packed $\gamma/2$ balls, a contradiction.
\end{proof}

\begin{lemma}\LemmaName{jl-to-ose}
If $\mathcal{D}$ satisfies the $(\eps, \delta, p)$-JL moment property, then $\mathcal{D}$ satisfies the $(2\eps, 9^d\delta, d, p)$-OSE moment property
\end{lemma}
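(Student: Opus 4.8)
The plan is to combine the net reduction of \Theorem{nets} with a union bound over the (finite) net, then invoke the JL moment property termwise. Fix an arbitrary $U\in\R^{n\times d}$ with orthonormal columns. By \Theorem{nets} there is a set $X\subset\R^n$ with $|X|\le 9^d$, each element a unit vector (the covering points $T_{1/4}$ in that proof may be taken on the sphere, or one rescales using homogeneity of $\|\Pi\cdot\|$), such that $\|(\Pi U)^T(\Pi U)-I\|\le 2\sup_{x\in X}|\|\Pi x\|^2-1|$ for \emph{every} $\Pi$. The key structural point is that $X$ depends only on $U$ and not on $\Pi$, so it is a fixed set when we pass to expectations.

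Next I would raise both sides to the $p$th power and take expectation over $\Pi\sim\mathcal{D}$, bounding the supremum over the finite set $X$ by the sum:
\begin{align*}
\E_{\Pi\sim\mathcal{D}}\left\|(\Pi U)^T(\Pi U)-I\right\|^p
&\le 2^p\,\E_{\Pi\sim\mathcal{D}}\sup_{x\in X}\left|\|\Pi x\|^2-1\right|^p\\
&\le 2^p\sum_{x\in X}\E_{\Pi\sim\mathcal{D}}\left|\|\Pi x\|^2-1\right|^p.
\end{align*}
Now apply the $(\eps,\delta,p)$-JL moment property to each unit vector $x\in X$: every summand is $<\eps^p\delta$, and there are at most $9^d$ of them, so the right-hand side is $<2^p\cdot 9^d\cdot\eps^p\delta=(2\eps)^p\cdot(9^d\delta)$. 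Since $U$ was arbitrary, this is precisely the statement that $\mathcal{D}$ has the $(2\eps,9^d\delta,d,p)$-OSE moment property.

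There is no real obstacle here; all of the geometric content sits in \Theorem{nets}, and this lemma is just its routine ``moment version.'' The only point needing a word of care is ensuring the net $X$ consists of unit vectors so the JL moment property applies verbatim — which is immediate — and observing that the $\sup$-to-$\sum$ step costs only the benign factor $|X|\le 9^d$ in the failure parameter while leaving the $\eps$-dependence untouched up to the constant $2$ from \Theorem{nets}.
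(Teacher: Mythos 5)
Your proof is correct and follows exactly the same route as the paper: invoke \Theorem{nets} to reduce the operator-norm deviation to a supremum over a fixed $9^d$-point set, pass from the supremum to a sum, and apply the $(\eps,\delta,p)$-JL moment property termwise. Your parenthetical about taking the net $T_{1/4}$ on the sphere (or rescaling by homogeneity) is a sensible bit of hygiene, since the JL moment property is stated for unit vectors while \Theorem{nets} only promises points of norm at most $1$, but it does not change the argument.
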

\begin{proof}
By \Theorem{nets}, there exists a subset $X\subset\R^n$ of at most $9^d$ points such that
\begin{align}
\nonumber \E \|(\Pi U)^T(\Pi U) - I \|^p &\le 2^p \cdot \E \sup_{x\in X} | \|\Pi x\|^2 - 1|^p\\
\nonumber {}&\le 2^p \cdot \sum_{x\in X} \E | \|\Pi x\|^2 - 1|^p\\
\nonumber {}&\le 2^p \cdot 9^d \cdot \eps^p\cdot \delta \\
\nonumber {}&= (2\eps)^p \cdot 9^d \delta .
\end{align}
\end{proof}

It is known that if $\mathcal{D}$ is a distribution over $\R^{m\times n}$ with $m = \Omega(\log(1/\delta)/\eps^2)$ and for $\Pi\sim\mathcal{D}$, the entries of $\Pi$ are independent subgaussians with mean zero and variance $1/m$, then $\mathcal{D}$ has the $(\eps/2, \delta, \Theta(\log(1/\delta)))$-JL moment property \cite{KaneMN11}. Thus such a matrix has the $(\eps, \delta, d, \Theta(d + \log(1/\delta)))$-OSE moment property for $\delta < 2^{-d}$ by \Lemma{jl-to-ose}.

\subsection{Subsampled Randomized Hadamard Transform (SRHT)}\SectionName{srht}

Recall the SRHT is the $m\times n$ matrix $\Pi = (1/\sqrt{m})\cdot SHD$ for $n$ a power of $2$ where $D$ has diagonal entries $\alpha_1,\ldots,\alpha_n$ that are independent and uniform in $\{-1, 1\}$, $H$ is the unnormalized Hadamard transform with $H_{i,j} = (-1)^{\inprod{i,j}}$ (treating $i, j$ as elements of the vector space $\mathbb{F}_2^{\log_2 n}$), and $S$ is a sampling matrix. That is, the rows of $S$ are independent, and each row has a $1$ in a uniformly random location and zeroes elsewhere. A similar construction is where $S$ is an $n\times n$ diagonal matrix with $S_{i,i} = \eta_i$ being independent Bernoulli random variables each of expectation $m/n$ (so that, in expectation, $S$ selects $m$ rows from $HD$). We will here show the moment property for this latter variant since it makes the notation a tad cleaner, though the analysis we present holds essentially unmodified for the former variant as well. 

Our analysis below implies that the SRHT provides an $\eps$-subspace embedding for $d$-dimensional subspaces with failure probability $\delta$ for $m = O(\eps^{-2}(d + \log(1/(\eps\delta)))\log(d/\delta))$. This is an improvement over analyses we have found in previous works. The analysis in \cite{Tropp11} only considers constant $\eps$ and $\delta = O(1/d)$ and for these settings achieves $m = O((d + \log n)\log d)$, which is still slightly worse than our bound for this setting of $\eps, \delta$ (our bound removes the $\log n$ and achieves any $1/\mathop{poly}(d)$ failure probability with the same $m$). The analysis in \cite{LuDFU13} only allows failure probabilities greater than $n/e^d$. They show failure probability $\delta + n/e^d$ is achieved for $m = O(d\log(d/\delta)/\eps^2)$, which is also implied by our result if $m\le n$ (which is certainly the case in applications for the SRHT to be useful, since otherwise one could use the $n\times n$ identity matrix as a subspace embedding). The reason for these differences is that previous works operate by showing $HDU$ has small row norms with high probability over $D$; since there are $n$ rows, some logarithmic dependence on $n$ shows up in a union bound. After this conditioning, one then shows that $S$ works. Our analysis does not do any such conditioning at all. Interestingly, such a lossy conditioning approach was done even for the case $d=1$ \cite{AilonC09}. As we see below, these analyses can be improved (essentially the $\log n$ terms that appear from the conditioning approach can be very slightly improved to $\log m$).

Our main motivation in re-analyzing the SRHT was not to improve the bounds, but simply to clearly demonstrate that the SRHT satisfies the OSE moment property. The fact that our moment based analysis below (very slightly) improved $m$ was a fortunate accident. Before we present our proof of the OSE moment property for the SRHT, we state a theorem we will use. For a random matrix $M$, we henceforth use $\|M\|_p$ to denote $(\E\|M\|_{S_p}^p)^{1/p}$ where $\|M\|_{S_p}$ is the Schatten-$p$ norm, i.e.\ the $\ell_p$ norm of the singular values of $M$.

\begin{theorem}[Non-commutative Khintchine inequality {\cite{LP86,LPP91}}]\TheoremName{nck}
Let $X_1,\ldots,X_n$ be fixed real matrices and $\sigma_1,\ldots,\sigma_n$ be independent Rademachers. Then
$$
\forall p \ge 1,\ \|\sum_i \sigma_i X_i\|_p \lesssim \sqrt{p} \cdot \max\left\{\|(\sum_i X_i X_i^T)^{1/2}\|_{S_p}, \|(\sum_i X_i^T X_i)^{1/2}\|_{S_p} \right\} .
$$
\end{theorem}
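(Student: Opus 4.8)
\medskip
\noindent\textbf{Proof proposal.}\ The plan is to prove the bound first for $p$ an even integer and then bootstrap to all real $p\ge 1$. For the bootstrapping, given $p\ge 1$ let $p'$ be the least even integer with $p'\ge p$, so $p'<p+2\le 3p$. On a probability space $\|\cdot\|_p$ is non-decreasing in $p$, and the Schatten-$q$ norm of a fixed matrix is non-increasing in $q$; hence $\|\sum_i\sigma_iX_i\|_p\le\|\sum_i\sigma_iX_i\|_{p'}$ while $\|(\sum_iX_iX_i^T)^{1/2}\|_{S_{p'}}\le\|(\sum_iX_iX_i^T)^{1/2}\|_{S_p}$ and similarly for $\sum_iX_i^TX_i$, and since $\sqrt{p'}\lesssim\sqrt p$ the general case reduces to the even case.

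Now fix $p=2q$ with $q\ge 1$ an integer and put $S=\sum_i\sigma_iX_i$ (zero-pad so the $X_i$ are square; this changes nothing). Since $\|S\|_{S_p}^p=\Tr\big((S^TS)^q\big)$, expanding the $q$-fold product and using that $\E[\sigma_{i_1}\cdots\sigma_{i_{2q}}]$ is $1$ when every index value occurs an even number of times and $0$ otherwise yields
$$
\E\,\|S\|_{S_p}^p=\sum_{\pi}\ \sum_{\substack{i_1,\ldots,i_{2q}\\ \text{constant on each block of }\pi}}\Tr\!\big(X_{i_1}^TX_{i_2}X_{i_3}^TX_{i_4}\cdots X_{i_{2q-1}}^TX_{i_{2q}}\big),
$$
where $\pi$ runs over partitions of $\{1,\ldots,2q\}$ into blocks of even size. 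Passing to absolute values, it suffices to show each $|\mathrm{term}_\pi|$ is at most $\max\{\,\|(\sum_iX_iX_i^T)^{1/2}\|_{S_p}^p,\,\|(\sum_iX_i^TX_i)^{1/2}\|_{S_p}^p\,\}$, since the number of pairings (the generic case, with the coarser even partitions handled by the same contraction below) is $(2q-1)!!\le(2q)^q=p^{p/2}$, which supplies the claimed $\sqrt p$ factor.

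To bound one pairing I would induct on $q$. For $q=1$ the sole pairing contributes $\sum_i\Tr(X_i^TX_i)=\Tr\big(\sum_iX_i^TX_i\big)$, which is the $p=2$ right-hand side. For $q>1$, choose an ``innermost'' pair $\{a,b\}$, one whose two matched slots enclose no other matched slot in the cyclic word inside the trace, so that the shared index $i_a=i_b$ borders a sub-word $M$ built only from the still-unsummed enclosed factors; summing $i_a=i_b$ over all rows folds the two boundary factors together into $\sum_iX_i^TMX_i$ or $\sum_iX_iMX_i^T$. Using cyclicity of trace with the elementary real-matrix inequality $|\Tr(N^2)|\le\Tr(NN^T)$ and, where needed, $|\Tr(AB)|\le\|A\|_{S_2}\|B\|_{S_2}$, one replaces this by a cheaper expression with one fewer pair that only ever introduces the matrices $\sum_iX_iX_i^T$ or $\sum_iX_i^TX_i$; after $q$ such contractions one is left with the trace of a product of $q$ of these matrices, which is at most $\max\{\Tr((\sum_iX_iX_i^T)^q),\Tr((\sum_iX_i^TX_i)^q)\}$ by H\"older for Schatten norms.

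I expect the main obstacle to be precisely this last induction: making the notion of ``innermost pair'' and the uncrossing/contraction step rigorous for an arbitrary pairing, and checking that each contraction genuinely reproduces only the two admissible matrices $\sum_iX_iX_i^T$ and $\sum_iX_i^TX_i$ and never a stray mixed object that resists being folded back in. If the direct bookkeeping becomes unwieldy, a cleaner route is the method of exchangeable pairs applied to $SS^T$, which yields matrix moment bounds — and hence the $\sqrt p$ dependence — directly. Since the statement is quoted from \cite{LP86,LPP91}, for the present paper it is enough to cite it.
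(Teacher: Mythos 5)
The paper does not prove this theorem: it is quoted as a known result from \cite{LP86,LPP91} and then used as a black box in the SRHT analysis of \Theorem{best-srht}. You correctly note this in your final sentence, and that is the right answer for the purposes of this paper, so there is no in-paper proof to compare against.

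Regarding your sketch itself, the opening reduction to even $p$ and the moment expansion via $\E[\sigma_{i_1}\cdots\sigma_{i_{2q}}]$ are the standard starting point, and the counting $(2q-1)!!\le(2q)^q=p^{p/2}$ is indeed where the $\sqrt{p}$ should come from. But there are two genuine gaps. First, the surviving Rademacher moments correspond to \emph{all} partitions into even-sized blocks, which strictly outnumber pairings, and the identity $\sigma_i^2=1$ does not let you split a block of size $2k$ into $k$ independent pairs, since all $2k$ occurrences share a single summation index rather than $k$ independent ones; your parenthetical ``handled by the same contraction below'' glosses over this. Second, and more seriously, the contraction step closes cleanly only for non-crossing pairings, where an adjacent matched pair $X_{i_a}^TX_{i_b}$ with $b=a+1$ always exists and folds into $\sum_i X_i^TX_i$ (or $\sum_i X_iX_i^T$). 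For a crossing pairing no adjacent matched pair need exist, the enclosed sub-word $M$ is still entangled with other summation indices, and the elementary bounds $|\Tr(N^2)|\le\Tr(NN^T)$ and $|\Tr(AB)|\le\|A\|_{S_2}\|B\|_{S_2}$ by themselves do not reduce the term to a product of copies of $\sum_iX_iX_i^T$ and $\sum_iX_i^TX_i$. Controlling the crossing contributions is the technical heart of every known proof of non-commutative Khintchine, and your outline, as you candidly acknowledge, does not supply it.
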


We will also make use of the Hanson-Wright inequality.

\begin{theorem}[Hanson-Wright {\cite{HW71}}]\TheoremName{hw}
For $(\sigma_i)$ independent Rademachers and $A$ symmetric,
$$
\forall p \ge 1,\ \|\sigma^T A \sigma - \E \sigma^T A \sigma\|_p \lesssim \sqrt{p}\cdot\|A\|_F + p\cdot \|A\| .
$$
\end{theorem}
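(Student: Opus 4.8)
The statement is the standard (Bernstein-type) moment form of the Hanson--Wright inequality. Rather than following the original argument of \cite{HW71}, the plan is to use the now-standard route: \emph{decoupling}, then \emph{Khintchine}, then a \emph{concentration} estimate. First I would reduce to the off-diagonal part of $A$: since $\sigma_i^2 = 1$ almost surely, the diagonal of $A$ contributes the deterministic quantity $\Tr A$ to $\sigma^T A\sigma$, so $\sigma^T A\sigma - \E[\sigma^T A\sigma] = \sum_{i\neq j} a_{ij}\sigma_i\sigma_j = \sigma^T \bar A\sigma$, where $\bar A$ is $A$ with its diagonal zeroed. Because $\|\bar A\|_F \le \|A\|_F$ and $\|\bar A\| \le 2\|A\|$ (diagonal entries of a symmetric matrix are bounded by its operator norm), it suffices to prove the claimed bound with $\bar A$ in place of $A$. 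Next I would decouple: by the standard decoupling inequality for tetrahedral (quadratic) chaos, $\|\sigma^T\bar A\sigma\|_p \lesssim \|\sigma^T\bar A\sigma'\|_p$, where $\sigma'$ is an independent Rademacher copy and the constant is absolute.

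Conditioning on $\sigma$, the quantity $\sigma^T\bar A\sigma' = \inprod{\sigma',\ \bar A\sigma}$ is a Rademacher sum in $\sigma'$ with coefficient vector $\bar A\sigma$, so the scalar Khintchine inequality gives $\|\sigma^T\bar A\sigma'\|_p \lesssim \sqrt{p}\,\big\|\,\|\bar A\sigma\|_2\,\big\|_p$. The heart of the argument is then to control the $p$-th moment of $\|\bar A\sigma\|_2$. Here I would use that $\sigma \mapsto \|\bar A\sigma\|_2$ is a \emph{convex} function that is Lipschitz with constant $\|\bar A\|$ (operator norm) in the Euclidean metric on $\{-1,1\}^n$; Talagrand's concentration inequality for convex Lipschitz functions of independent bounded random variables then yields $\big\|\,\|\bar A\sigma\|_2 - \E\|\bar A\sigma\|_2\,\big\|_p \lesssim \sqrt{p}\,\|\bar A\|$, while $\E\|\bar A\sigma\|_2 \le (\E\|\bar A\sigma\|_2^2)^{1/2} = (\Tr \bar A^2)^{1/2} = \|\bar A\|_F$. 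Combining, $\big\|\,\|\bar A\sigma\|_2\,\big\|_p \le \|\bar A\|_F + C\sqrt{p}\,\|\bar A\|$.

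Chaining the estimates gives $\|\sigma^T A\sigma - \E[\sigma^T A\sigma]\|_p \lesssim \sqrt{p}\,(\|\bar A\|_F + \sqrt{p}\,\|\bar A\|) \lesssim \sqrt{p}\,\|A\|_F + p\,\|A\|$, with absolute constants, valid for all $p\ge 1$. The main obstacle is the concentration step in the middle. Purely moment-theoretic tools --- the non-commutative Khintchine inequality of \Theorem{nck}, or iterating scalar Khintchine through the identity $\|\bar A\sigma\|_2^2 = \sigma^T\bar A^2\sigma$ --- only bound $\big\|\,\|\bar A\sigma\|_2\,\big\|_p$ by $O(\sqrt{p}\,\|\bar A\|_F)$, which degrades the final estimate to the weaker $p\,\|A\|_F$ and, if one tries to close the recursion $\bar A \rightsquigarrow \bar A^2$, runs into a multiplicative-constant blow-up. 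Extracting the sharp operator-norm term $p\,\|A\|$ genuinely requires a concentration-of-measure input: Talagrand's convex-Lipschitz inequality as above, or, equivalently, first passing from the Rademacher chaos to a Gaussian chaos $g^T\bar A g'$ by a contraction/comparison argument and then invoking the elementary concentration of $\|\bar A g\|_2$ for Gaussian $g$ (whose Lipschitz constant is again $\|\bar A\|$). A minor point to get right is that the decoupling step is applied only after the diagonal has been dropped, so that the chaos is tetrahedral and the decoupling constant is absolute.
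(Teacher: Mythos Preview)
The paper does not prove \Theorem{hw}; it is quoted as a known result with a citation and then used as a black box in the SRHT analysis. So there is no ``paper's own proof'' to compare against.

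Your proposal is a correct proof and follows the now-standard modern route (decouple to a bilinear chaos, apply scalar Khintchine conditionally, then use a convex-Lipschitz concentration inequality to control $\|\,\|\bar A\sigma\|_2\,\|_p$). The reductions are all fine: zeroing the diagonal is harmless because $\sigma_i^2\equiv 1$ makes the diagonal deterministic, and the norm comparisons $\|\bar A\|_F\le\|A\|_F$, $\|\bar A\|\le 2\|A\|$ are immediate. Your observation that purely Khintchine-type tools only give $\sqrt{p}\,\|\bar A\|_F$ for $\|\,\|\bar A\sigma\|_2\,\|_p$, and hence a suboptimal $p\,\|A\|_F$ in the final bound, is exactly the point --- some genuine concentration input (Talagrand for Rademachers, or Gaussian Lipschitz concentration after a comparison step) is what isolates the sharp $p\,\|A\|$ term. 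This is essentially the Rudelson--Vershynin argument, and it is a perfectly acceptable way to supply the proof the paper omits.
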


We now present our main analysis of this subsection.

\begin{theorem}\TheoremName{best-srht}
The SRHT satisfies the $(\eps, \delta, d, p)$-moment property for $p = \log(d/\delta)$ as long as $m\gtrsim \eps^{-2}(d\log(d/\delta) + \log(d/\delta)\log(m/\delta)) \simeq \eps^{-2}(d + \log(1/(\eps\delta))\log(d/\delta))$.
\end{theorem}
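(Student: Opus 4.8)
Write $h_i$ for the $i$th row of $H$ and set $z_i = U^T D h_i \in \R^d$, so the $i$th row of $HDU$ is $z_i^T$ and $(\Pi U)^T(\Pi U) = \frac1m\sum_{i=1}^n \eta_i\, z_i z_i^T$. Two elementary facts drive everything. Since $H^TH = nI$ and $D^2 = I$, we have $\sum_i z_i z_i^T = U^T D (H^TH) D U = nI$ \emph{deterministically}; and since the entries of $h_i$ are $\pm1$, $\E_D[D h_i h_i^T D] = I$, hence $\E_D[z_i z_i^T] = I$, $\E_D\|z_i\|^2 = d$, and $\E_{D,\eta}(\Pi U)^T(\Pi U) = I$. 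Thus, writing $E := \|(\Pi U)^T(\Pi U) - I\|$,
\[
E = \frac1m\Big\|\sum_i (\eta_i - m/n)\, z_i z_i^T\Big\|,
\]
and, crucially, \emph{conditioned on $D$}, this is $1/m$ times the operator norm of a sum of independent, mean-zero, symmetric matrices — which is exactly what lets us avoid first conditioning on an ``incoherence of $HDU$'' event, the source of the extraneous $\log n$ in prior analyses.

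First I would condition on $D$, symmetrize (introducing Rademachers $\sigma_i$), and apply the non-commutative Khintchine inequality (\Theorem{nck}); since each $z_i z_i^T$ is symmetric, $\sum_i X_iX_i^T = \sum_i X_i^TX_i = \sum_i (\eta_i - m/n)^2\|z_i\|^2 z_i z_i^T$, and converting the Schatten-$p$ norm of its square root to the operator norm costs only a factor $d^{O(1/p)} = O(1)$ since $p = \log(d/\delta) \ge \log d$. This gives
\[
\big(\E E^p\big)^{1/p} \;\lesssim\; \frac{\sqrt p}{m}\,\Big(\E\,\Big\|\sum_i (\eta_i - m/n)^2\|z_i\|^2 z_i z_i^T\Big\|^{p/2}\Big)^{1/p}.
\]

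The heart of the argument is bounding this ``effective variance.'' Using $(\eta_i - m/n)^2 \le \eta_i + m/n$, split the sum into an $\eta_i$-part and an $(m/n)$-part. For the $\eta_i$-part, bound $\|z_i\|^2 \le d + |\|z_i\|^2 - d|$ and use $\|\sum_i \eta_i z_i z_i^T\| = m\|(\Pi U)^T(\Pi U)\| \le m(1+E)$ to get $\sum_i \eta_i\|z_i\|^2 z_i z_i^T \preceq m(1+E)\big(d + \max_{i:\eta_i=1}|\|z_i\|^2 - d|\big)I$. The point is that the maximum ranges over only the (in expectation) $m$ \emph{sampled} rows; since $\|z_i\|^2 = \sigma^T P \sigma$ for $P = UU^T$ a rank-$d$ projection ($\|P\|_F^2 = d$, $\|P\| = 1$), Hanson–Wright (\Theorem{hw}) gives $\big(\E\,|\|z_i\|^2 - d|^q\big)^{1/q} \lesssim \sqrt{qd} + q$ for each $i$, and summing $p$th moments over the $\approx m$ sampled rows yields $\big(\E\max_{i:\eta_i=1}|\|z_i\|^2 - d|^p\big)^{1/p} \lesssim m^{1/p}(\sqrt{pd}+p) \lesssim d + \log(m/\delta)$ in the stated range of $m$. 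This is precisely where a ``$\log m$'' (rather than ``$\log n$'') enters. The $(m/n)$-part equals $mdI + \frac mn\sum_i (\|z_i\|^2 - d)z_i z_i^T$: the $mdI$ is absorbed into the $d$ above, and the remaining term is lower order — its $(p/2)$th moment is $O((md)^{p/2})$, which I would establish either by a short additional moment computation exploiting the $\mathbb{F}_2$-Fourier structure of $H$ (so row-products $\prod_\ell (h_i)_{j_\ell}$ vanish unless $\bigoplus_\ell j_\ell = 0$), or, when $n$ is not much larger than $d/\delta$, by crude bounds since then $m/n$ is correspondingly small.

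Finally I would combine the estimates and resolve the self-reference. Together they give $(\E E^p)^{1/p} \lesssim \sqrt{p/m}\cdot(\E(1+E)^p)^{1/(2p)}\cdot\sqrt{d + \log(m/\delta)}$; since $(\E(1+E)^p)^{1/p} \le 1 + (\E E^p)^{1/p}$ by Minkowski, putting $t = (\E E^p)^{1/p}$ and $a = C\sqrt{p(d+\log(m/\delta))/m}$ we get $t \le a\sqrt{1+t}$, i.e.\ $t^2 - a^2 t - a^2 \le 0$, which forces $t = O(a)$ once $a \le 1$. Hence as soon as $m \gtrsim \eps^{-2}\,p\,(d + \log(m/\delta)) = \eps^{-2}(d\log(d/\delta) + \log(d/\delta)\log(m/\delta))$ we obtain $(\E E^p)^{1/p} \lesssim \eps$; since $\delta^{1/p} = \Theta(1)$ for $p = \log(d/\delta)$, after rescaling $\eps$ by a constant this is $\E E^p < \eps^p\delta$, the $(\eps,\delta,d,p)$-OSE moment property. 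I expect the main obstacle to be the middle step — keeping the maximum over \emph{sampled} rows throughout (not over all $n$ rows), and cleanly disposing of the low-order $(m/n)$-term — together with making the $(1+E)$ self-reference rigorous, which the quadratic-inequality bootstrap above handles.
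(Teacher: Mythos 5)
Your plan follows the same skeleton as the paper's proof (write $(\Pi U)^T\Pi U$ as $\frac1m\sum_i\eta_i z_iz_i^T$ with $\sum_i z_iz_i^T = nI$, symmetrize conditionally on $D$, apply non-commutative Khintchine, pull out $\max_{i:\eta_i=1}\|z_i\|^2$ and the operator norm of the sum from the effective variance, apply Hanson--Wright, and close via a quadratic bootstrapping). Two steps, however, have real gaps.

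First, the centering and symmetrization. You subtract $m/n$ from each $\eta_i$ and then bound $(\eta_i - m/n)^2 \le \eta_i + m/n$, which produces the extra ``$(m/n)$-part'' $\frac mn\sum_i\|z_i\|^2 z_iz_i^T$. You flag this yourself, and indeed a crude bound on that term involves $\max_{1\le i\le n}\|z_i\|^2$---a max over \emph{all} $n$ rows, not just sampled ones---which would reintroduce precisely the $\log n$ this analysis is designed to avoid; your two proposed fixes (the $\mathbb{F}_2$-Fourier cancellation, or assuming $n$ small) are not carried out and the second is only a special case. The paper sidesteps this entirely: it symmetrizes against an i.i.d.\ copy $\eta'$, writing $\frac1m\sum_i\eta_i z_iz_i^T - I = \E_{\eta'}\bigl[\frac1m\sum_i(\eta_i-\eta_i')z_iz_i^T\bigr]$, applies Jensen, uses that $\eta_i-\eta_i'$ is distributed as $\sigma_i|\eta_i-\eta_i'|$ for Rademacher $\sigma_i$, and then bounds $|\eta_i-\eta_i'|$ by $\eta_i+\eta_i'$ via the triangle inequality to arrive at $\frac2m\|\sum_i\sigma_i\eta_i z_iz_i^T\|_p$. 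After non-commutative Khintchine the effective variance is $\sum_i\eta_i\|z_i\|^2 z_iz_i^T$ with \emph{only} Bernoulli coefficients and no residual $(m/n)$-term; the whole complication disappears.

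Second, the chain $\bigl(\E\max_{i:\eta_i=1}|\|z_i\|^2-d|^p\bigr)^{1/p}\lesssim m^{1/p}(\sqrt{pd}+p)\lesssim d+\log(m/\delta)$. The factor $m^{1/p}$ need not be $O(1)$: for $p=\log(d/\delta)$ one has $m^{1/p}=e^{(\log m)/\log(d/\delta)}$, which grows without bound as $\eps\to 0$ when $d/\delta$ is fixed. The correct move, which the paper makes explicitly, is to compute the $q$th moment with $q=\max\{p,\log m\}$ so that $m^{1/q}\le 2$, and then use $\|\cdot\|_p\le\|\cdot\|_q$. Hanson--Wright at level $q$ still gives $\sqrt{qd}+q\lesssim d+q\lesssim d+p+\log m\lesssim d+\log(m/\delta)$, which is what you want, but the intermediate claim as you wrote it (staying at the $p$th moment) is false in general. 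Both issues are local and fixable within your strategy, but as written they are genuine gaps.
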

\begin{proof}
For a fixed $U\in\R^{n\times d}$ with orthonormal columns, we would like to bound 
$$
\E_{\alpha, \eta}\|\frac 1m(SHDU)^T(SHDU) - I\|^p.
$$
Since $p\ge \log d$ we have
\begin{equation}
\|\frac 1m(SHDU)^T(SHDU) - I\| \simeq \|\frac 1m(SHDU)^T(SHDU) - I\|_{S_p} \EquationName{holder}
\end{equation}
by H\"{o}lder's inequality. Also, let $z_1,\ldots,z_n$ be the rows of $HDU$, as column vectors, so that
\begin{equation}
\frac 1m(SHDU)^T (SHDU) = \frac 1m\sum_{i=1}^n \eta_i z_i z_i^T\EquationName{outer-products} .
\end{equation}
Note also $\sum_i z_i z_i^T = (HDU)^THDU = n\cdot I$ for any $D$, so the identity matrix is the expectation, over $\eta$, of the right hand side of \Equation{outer-products} for any $D$. Thus we are left wanting to bound
$$
\|\frac 1m\sum_i \eta_i z_i z_i^T - \E_{\eta'} \frac 1m\sum_i \eta_i' z_i z_i^T\|_p
$$
where the $\eta_i'$ are identically distributed as the $\eta_i$ but independent of them. Below we use $\|f(X)\|_{L^p(X)}$ to denote $(\E_X |f(X)|^p)^{1/p}$. Also we assume $p$ is an integer multiple of $4$, so that $\|A\|_{S_p}$ for real symmetric $A$ equals $(\Tr(A^p))^{1/p}$ and $\|A\|_{S_{p/2}} = (\Tr(A^{p/2}))^{2/p}$. Thus for $(\sigma_i)$ independent Rademachers,
\allowdisplaybreaks
\begin{align}
\|\frac 1m\sum_i \eta_i z_i z_i^T - & I\|_p = \|\frac 1m\sum_i \eta_i z_i z_i^T - \E_{\eta'} \frac 1m\sum_i \eta_i' z_i z_i^T\|_p\EquationName{beginning-sqrt}\\
\nonumber {}&= \|\|\frac 1m\sum_i \eta_i z_i z_i^T - \E_{\eta'} \frac 1m\sum_i \eta_i' z_i z_i^T\|_{L^p(\eta)}\|_{L^p(\alpha)}\\
\nonumber {}&\le \frac 1m \|\|\sum_i \eta_i z_i z_i^T - \sum_i \eta_i' z_i z_i^T\|_{L^p(\eta,\eta')}\|_{L^p(\alpha)}\text{ (Jensen's inequality)}\\
\nonumber {}&= \frac 1m\cdot \|\sum_i (\eta_i - \eta_i') z_i z_i^T\|_p\\
\nonumber {}&= \frac 1m\cdot \|\sum_i \sigma_i(\eta_i - \eta_i') z_i z_i^T\|_p\text{ (equal in distribution)}\\
\nonumber {}&\le \frac 2m \cdot \|\sum_i \sigma_i \eta_i z_i z_i^T\|_p\text{ (triangle inequality)}\\
\nonumber{}&\lesssim \frac{\sqrt{p}}m\cdot \|(\sum_i \eta_i \|z_i\|_2^2\cdot z_i z_i^T)^{1/2}\|_p\text{ (\Theorem{nck})}\\
\nonumber{}&\le \frac{\sqrt{p}}m\cdot \E\left((\max_i \eta_i \|z_i\|_2^p)\cdot \Tr((\sum_i \eta_i z_i z_i^T)^{p/2})\right)^{1/p} \text{ (}\|M\|_{S_p}^p = \Tr(M^p)\text{)}\\
\nonumber{}&\le \frac{\sqrt{p}}m\cdot \|\max_i \eta_i \|z_i\|_2^2\|_p^{1/2} \cdot (\E \Tr((\sum_i \eta_i z_i z_i^T)^{p/2})^2)^{1/2p}\text{ (Cauchy-Schwarz)}\\
&\le \frac{\sqrt{p}}m\cdot \|\max_i \eta_i \|z_i\|_2^2\|_p^{1/2} \cdot (d\cdot \E \Tr((\sum_i \eta_i z_i z_i^T)^p))^{1/2p}\text{ (Cauchy-Schwarz)} \EquationName{cs-twice}\\
\nonumber {}&\lesssim \frac{\sqrt{p}}m\cdot \|\max_i \eta_i \|z_i\|_2^2\|_p^{1/2} \cdot \|\sum_i \eta_i z_i z_i^T\|_p^{1/2} \text{ (since }d^{1/p} \le 2\text{)}\\
{}&\le \sqrt{\frac pm}\cdot \|\max_i \eta_i \|z_i\|_2^2\|_p^{1/2} \cdot (d^{1/p} + \|\frac 1m\sum_i \eta_i z_i z_i^T - I\|_p^{1/2})\text{ (triangle inequality)} \EquationName{end-sqrt}
\end{align}

\Equation{cs-twice} follows since if $\beta_i$ are the singular values of $M = \sum_i \eta_i z_iz_i^T$, then $\Tr(M^{p/2})^2 = (\sum_i \beta_i^{p/2})^2$, and the rank of $M$, and hence the number of summands $\beta_i$, is at most $d$. Letting $Q = \|\frac 1m\sum_i \eta_i z_i z_i^T - I\|_p^{1/2}$ and $R = \sqrt{p/m} \cdot \|\max_i  \eta_i \|z_i\|_2^2\|_p^{1/2}$, combining \Equation{beginning-sqrt} and \Equation{end-sqrt} 
$$
Q^2 \lesssim R + RQ
$$
implying that for some fixed constant $C > 0$, we have $Q^2 - CRQ - CR \le 0$. This implies that $Q$ is at most the larger root of the associated quadratic equation, i.e. $Q \lesssim \max\{\sqrt{R}, R\}$, or equivalently
\begin{equation}
\|\frac 1m\sum_i \eta_i z_i z_i^T - I\|_p \lesssim \max\{R, R^2\} \EquationName{almost-done}
\end{equation}

It only remains to bound $R$, which in turn amounts to bounding $\|\max_i \eta_i \|z_i\|_2^2\|_p^{1/2}$. Define $q = \max\{p, \log m\}$, and note $\|\cdot\|_p \le \|\cdot\|_q$. Then
\begin{align}
\nonumber \|\max_i \eta_i \|z_i\|_2^2\|_q &= \left(\E_{\alpha, \eta}\max_i \eta_i^q (\|z_i\|_2^2)^q\right)^{1/q}\\
\nonumber &\le \left(\E_{\alpha, \eta}\sum_i \eta_i^q (\|z_i\|_2^2)^q\right)^{1/q} \\
\nonumber &= \left(\sum_i \E_{\alpha, \eta} \eta_i^q (\|z_i\|_2^2)^q\right)^{1/q} \\
\nonumber &\le \left(n\cdot \max_i \E_{\alpha, \eta} \eta_i^q (\|z_i\|_2^2)^q\right)^{1/q} \\
\nonumber &= \left(n\cdot \max_i (\E_{\eta} \eta_i^q)\cdot (\E_\alpha (\|z_i\|_2^2)^q)\right)^{1/q} \text{ (}\alpha, \eta\text{ independent)}\\
\nonumber &= \left(m\cdot \max_i \E_\alpha (\|z_i\|_2^2)^q)\right)^{1/q}\\
\nonumber {}&\le 2\cdot \max_i \| \|z_i\|_2^2 \|_q\text{ (}m^{1/q} \le 2\text{ by choice of }q\text{)}\\
\nonumber {}&= 2\cdot \max_i \| \alpha^T \tilde{U}_i \tilde{U}_i^T \alpha\|_q \\
{}&= 2\cdot \max_i (d + \| \alpha^T \tilde{U}_i \tilde{U}_i^T \alpha - \E \alpha^T \tilde{U}_i \tilde{U}_i^T \alpha\|_q)\text{ (triangle inequality)} \EquationName{aboutto-hw}
\end{align}
where $\tilde{U}_i$ is the matrix with $(\tilde{U}_i)_{k,j} = H_{i,k}\cdot U_{k,j}$. Of particular importance for us is the identity $\tilde{U}_i^T \tilde{U}_i = I$. Then by \Equation{aboutto-hw} and \Theorem{hw},
\begin{align}
\nonumber \|\max_i \eta_i \|z_i\|_2^2\|_q &\lesssim d + \sqrt{q}\cdot \|\tilde{U}_i \tilde{U}_i^T\|_F + q\cdot \|\tilde{U}_i \tilde{U}_i^T\|\\
\nonumber {}&= d + \sqrt{qd} + q\\
\nonumber {}&\le \frac 32\cdot (d + q) \text{ (AM-GM inequality)}
\end{align}
so that
$$
R\lesssim \sqrt\frac pm\cdot \sqrt{d+q} ,
$$
which when combined with \Equation{almost-done} gives
$$
\|\frac 1m\sum_i \eta_i z_i z_i^T - I\|_p \lesssim \sqrt{\frac pm\cdot (d+q)} + \frac pm\cdot (d+q) .
$$
Thus the OSE moment property is satisfied by our choices of $m, p$ in the theorem statement.
\end{proof}

\subsection{Composing dimensionality reducing maps supporting AMM}\SectionName{composition}

As discussed in \Remark{efficiency}, to obtain both a good number of rows for $\Pi$ as well as fast multiplication for $\Pi A, \Pi B$, one may wish to set $\Pi$ as the composition $\Pi = \Pi_1 \Pi_2$, where $\Pi_1$ has the correct number $m_1 = O(k/\eps^2)$ of rows (e.g.\ a matrix of subgaussian entries), whereas $\Pi_2$ maps to a small but suboptimal number $m_2$ of rows (e.g.\ the SRHT) but supports fast embedding to compute $\Pi_2 A$. We show here that composing maps each supporting AMM yields a final map also giving AMM.

As discussed in \Corollary{bss-amm}, without loss of generality we can assume $A = B$. Also, as discussed in \Remark{equiv}, we can focus on achieving \Equation{op-error} where the number of rows of $\Pi$ should depend on the stable rank $\sr$ and not rank $\rank$ of $A$. The key is to note the following simple triangle inequality:
\begin{equation}
\|(\Pi A)^T(\Pi A) - A^T A\| \le \underbrace{\|(\Pi_1 \Pi_2 A)^T (\Pi_1 \Pi_2 A) - (\Pi_2 A)^T(\Pi_2 A)\|}_\alpha + \underbrace{\|(\Pi_2 A)^T(\Pi_2 A) - A^T A\|}_\beta .\EquationName{composition}
\end{equation}

The results of this work show that to achieve the desired $\beta \le \eps\|A\|^2$, it suffices that the number of rows of $\Pi_2$ need only depend on $\sr$ and not $\rank$, as desired. The trouble is that for $\alpha$, the number of rows of $\Pi_1$ will need to depend on the stable rank $\sr'$ of $\Pi_2 A$ and not $\sr$. Furthermore, the error will be $\alpha \le \eps\|\Pi_2 A\|^2$ and not $\alpha \le \eps\|A\|^2$. Thus, we must obtain good bounds on both $\sr'$ and $\|\Pi_2 A\|$. To achieve this, note
\begin{equation}
\|\Pi_2 A\| = \|(\Pi_2 A)^T(\Pi_2 A)\|^{1/2} = \|A\| \pm \|(\Pi_2 A)^T(\Pi_2 A) - A^T A\|^{1/2}  \EquationName{preserve-op}
\end{equation}
and
\begin{equation}
\|\Pi_2 A\|_F = \Tr((\Pi_2 A)^T (\Pi_2 A))^{1/2} = \|A\|_F \pm \|(\Pi_2 A)^T(\Pi_2 A) - A^T A\|_F\EquationName{preserve-frob}
\end{equation}

Thus, if we condition on $\beta = \|(\Pi_2 A)^T(\Pi_2 A) - A^T A\| \le \eps\|A\|^2$ (which we already discussed above), then indeed we have $\|\Pi_2 A\| = \Theta(\|A\|)$ by \Equation{preserve-op}. Also, \cite[Theorem 6.2]{KaneN14} implies $\|(\Pi_2 A)^T(\Pi_2 A) - A^T A\|_F \le \eps\|A\|_F^2$ with probability $1-\delta$ as long as $\Pi$ comes from a distribution satisfying the $(O(\eps), \delta, \ell)$-JL moment property for some $\ell\ge 2$ (which is just the $(O(\eps), \delta, 1, \ell)$-OSE moment property in the terminology of this work). If this holds, then  $\|\Pi_2 A\|_F = \Theta(\|A\|_F)$ by \Equation{preserve-frob}, and thus $\sr' = \Theta(\sr)$, as desired. Then overall, we have that the left hand side of \Equation{composition} is at most $\eps\cdot\|\Pi_2 A\|^2 + \eps\cdot\|A\|^2 = O(\eps)\cdot \|A\|^2$ as desired, in which both $\Pi_1$ and $\Pi_2$ need only provide AMM with error $\eps$ for matrices both of stable rank $O(\sr)$.

\begin{remark}
\textup{
An even slicker argument that works in the case when $\Pi_1, \Pi_2$ are both drawn from distributions satisfying the $(\eps, \delta, k, \ell)$-OSE moment property is to observe that the distribution of the product $\Pi_1 \Pi_2$ itself satisfies the OSE moment property. Indeed, letting $\|Z\|_p$ denote $(\E|Z|^p)^{1/p}$ for a scalar random variable $Z$, and letting $U\in\R^{n\times k}$ denote a matrix with orthonormal columns,
\begin{align*}
\| \|(\Pi_1 \Pi_2 U)^T \Pi_1 \Pi_2 U - I\| \|_\ell &< \eps \delta^{1/\ell} \| \|\Pi_2 U\|^2 \|_\ell \text{ (\Lemma{matmult})}\\
{}&= \eps \delta^{1/\ell} \| \|(\Pi_2 U)^T \Pi_2 U\| \|_\ell\\
{}&\le \eps \delta^{1/\ell}(1 +  \| \|(\Pi_2 U)^T \Pi_2 U - I\| \|_\ell)\text{ (triangle inequality)}\\
{}&\le \eps \delta^{1/\ell}(1 + \eps \delta^{1/\ell})
\end{align*}
In the first line we used that when $A = B$ in \Lemma{matmult}, $\Pi_1$ need only satisfy the OSE moment property with parameter $k$ instead of $2k$ (since then the span of the columns of both $A$ and $B$ has dimension at most $k$). Thus the distribution of the product $\Pi_1 \Pi_2$ satisfies the $(O(\eps), O(\delta), k, \ell)$-OSE moment property.
}
\end{remark}

\end{document}